\newtheorem{theorem}{Theorem}
\newtheorem{lemma}{Lemma}
\newtheorem{remark}{Remark}
\begin{document}

\title[]{An iterative warping and clustering algorithm to estimate multiple wave-shape functions from a nonstationary oscillatory signal}
\author{Marcelo A. Colominas and Hau-Tieng Wu}
\date{\today}

\maketitle

\begin{abstract}
Nonsinusoidal oscillatory signals are everywhere. In practice, the nonsinusoidal oscillatory pattern, modeled as a 1-periodic wave-shape function (WSF), might vary from cycle to cycle. When there are finite different WSFs, $s_1,\ldots,s_K$, so that the WSF jumps from one to another suddenly, the different WSFs and jumps encode useful information.
We present an iterative warping and clustering algorithm to estimate $s_1,\ldots,s_K$ from a nonstationary oscillatory signal with time-varying amplitude and frequency, and hence the change points of the WSFs. The algorithm is a novel combination of time-frequency analysis, singular value decomposition entropy and vector spectral clustering.
We demonstrate the efficiency of the proposed algorithm with simulated and real signals, including the voice signal, arterial blood pressure, electrocardiogram and accelerometer signal.
Moreover, we provide a mathematical justification of the algorithm under the assumption that the amplitude and frequency of the signal are slowly time-varying and there are finite change points that model sudden changes from one wave-shape function to another one.
\end{abstract}

\section{Introduction}

Nonstationary oscillatory signals are everywhere and have attracted significant attention in the past decades. In addition to the time-varying frequency and amplitude, in practice the analysis is further challenged by the fact that many oscillatory signals do not oscillate sinusoidally; for example, sound (Fig. \ref{fig:Voz}), arterial blood pressure (ABP) (Fig. \ref{fig:ABP}), and electrocardiogram (ECG) (Fig. \ref{fig:ECG}). In plain English, such oscillatory signal repeats a {\em non-sinusoidal} pattern over time. Take ECG as an example. Each cycle in ECG represents one heart beat, and its morphology is far away from the usual sine wave. To model this kind of signal with non-sinusoidal oscillatory pattern, in \cite{HTWu2013}, the {\em wave-shape function} (WSF), which mathematically is a $1$-periodic function, is introduced (this model is abbreviated as \texttt{WSFv1} hereafter).
In the past decade, this non-sinusoidal oscillation phenomenon and the WSF model have attracted more and more attention \cite{tavallali2014extraction,hou2016extracting,Xu2018recursive,Yang2019multiresolution}, and several variations of the original WSF model have been proposed to capture finer structures hidden inside the WSF; for example, the sparse time-frequency representation (TFR) \cite{tavallali2014extraction}, a generalization of WSF (this model is abbreviated as \texttt{WSFv2} hereafter) to capture time-varying oscillatory pattern \cite{lin2016waveshape} and the wave-shape oscillatory model \cite{YTLin2019wave}. Note that the time-varying oscillatory pattern is common in biomedical signals. For example, the QT variability \cite{Berger2003} in ECG leads to the morphology change from one cycle to another.
Given such signals and models, there are many interesting and challenging signal processing missions. For example, how to \emph{decompose} a signal into its elementary modes (commonly understood as a \emph{decomposition mission}), or how to capture the non-sinusoidal patterns  (called a \emph{wave-shape analysis mission}).

Based on the above mentioned WSF model and its variations, various algorithms have been developed to handle these missions under various challenges; for example, the optimization algorithm based on the sparse TFR method extracts intrinsic frequency information and defines the notion of degree of nonlinearity \cite{tavallali2014extraction,hou2016extracting}, the de-shape algorithm based on \texttt{WSFv2} reduces the impact of harmonics associated with the nonsinusoidal oscillation \cite{lin2016waveshape}, the linear regression approach based on \texttt{WSFv1} recovers the WSF and it is useful to estimate the number of harmonics \cite{Wu2016modeling,ruiz2022wave}, and the nonlinear regression approach based on \texttt{WSFv2} decomposes signals into components with time-varying waveforms \cite{colominas2021decomposing}. There also exist optimization approaches to decompose the signals \cite{Xu2018recursive,Yang2019multiresolution}, and dynamic diffusion maps algorithm to recover the intrinsic dynamics \cite{YTLin2019wave}.
We shall mention that while not motivated and developed based on the idea of WSF, the periodicity transform \cite{sethares1999,PPV3} and its generalization to time-frequency domain \cite{RDS2000} are other approaches that have been developed to handle similar non-sinusoidal oscillatory signals.
Associated developed algorithms have also been applied to various problems, for example, haemodynamic waveform analysis \cite{Pahlevan_Tavallali_Rinderknecht_Petrasek_Matthews_Hou_Gharib2014}, insulin resistance research \cite{petrasek2015intrinsic}, left ventricular ejection fraction study \cite{pahlevan2017noninvasive}, fetal ECG extraction from the trans-abdominal maternal ECG \cite{su2017extract}, photoplethysmography (PPG) analysis \cite{li2019non} and its application to the lower body negative pressure \cite{ShelleyAlianWu2022} and motion artifact detection \cite{maity2022ppgmotion}, accelerometer analysis for gait cadence \cite{wu2022application}, among others \cite{huang2021exploration}.

While there have been several efforts put in this research direction, however, to our knowledge, so far there is limited work focusing on handling the case when the WSF would {\em suddenly change} from one pattern to a dramatically different pattern, except a stretch of the wave-shape oscillatory model \cite{YTLin2019wave} and a simplified oscillatory change point detection model \cite{zhou2020frequency}. This sudden change phenomenon belongs to the wave-shape analysis mission, and it is commonly encountered in practice, in particular in the biomedical signal processing field. A typical example is the ECG signal when arrhythmia happens. See Fig. \ref{fig:ECG} for an illustration. In this example, the subject suffers from premature ventricular contraction (PVC). A visual inspection of Fig. \ref{fig:ECG} shows that there are two visually different oscillatory patterns (or WSFs); one pattern is associated with normal sinus rhythm, and the other one is associated with PVC, and the WSF could jump from one to another suddenly.

In this paper, we study this sudden WSF change problem, and focus on estimating all WSFs in a given signal. Specifically, suppose there are finite different WSFs ($1$-periodic functions), denoted as $s_1,\ldots,s_K$, that model different oscillatory patterns that happen in a signal, where we assume the oscillatory pattern may jump from one to another suddenly. A simplified example is
\begin{equation}\label{eq:WSF-special0}
x(t) = A\sum_{j=1}^{r+1} s_{l_j}(\xi t)\chi_{(t_{j-1},t_j]}(t)\,,
\end{equation}
where $r\geq 1$ is the number of change points of WSFs, $t_1<t_2<\ldots<t_r$ are the change points, $\chi$ is the indicator function, $A>0$ is the amplitude, $\xi>0$ is the frequency, $l_j\in \{1,\ldots,K\}$ and $l_j\neq l_{j+1}$ for $j=1,\ldots, r-1$.
Note that we call $t_1,\ldots,t_r$ the {\em change points of WSFs} since those timestamps represent when the jumps happen.
The mission of interest is estimating $s_1,\ldots,s_K$ from $x$. Note that in practice $x$ is inevitably contaminated by noise. In practice, the model is more complicated than that shown in \eqref{eq:WSF-special0}, and we will elaborate this part later.
Our contribution is summarized below. We propose a novel {\em iterative warping and clustering} algorithm that detects change points of WSFs and estimates $s_1,\ldots,s_K$ from the signal by combining three tools, including time-frequency analysis, singular value decomposition entropy and vector spectral clustering. By segmenting the signal under analysis into cycles and clustering the cycles, the change points $t_1,\ldots,t_r$ will be detected.
We demonstrate the efficiency of the proposed algorithm with both simulated and various real signals, including voice, ABP, ECG and accelerometer.
Moreover, we provide a theoretical justification of the proposed algorithm under the assumption that the amplitude and frequency of the signal are slowly time-varying and there are finite change points that model sudden changes from one WSF to the other, while the WSF could vary slowly between two change points. Specifically, we show that the iterative warping is guaranteed to convert the signal with time-varying amplitude and frequency into a signal whose fundamental component has amplitude $1$ and frequency $1$Hz, so that the WSF could be accurately truncated for the clustering, and hence the change point detection.

The paper is organized as follows. In Sec. \ref{section:model}, we review \texttt{WSFv1} and \texttt{WSFv2}, and introduce our new model that captures the sudden WSF changes. In Sec. \ref{Section:algo}, we introduce our proposed algorithm and explain its implementation details in Sec. \ref{sec:Practical}. In Sec. \ref{sec:Simulated}, we provide the numerical results, including simulated and real signals. The paper is closed with a discussion and conclusion in Sec. \ref{section:discussion}. The theoretical justification of the proposed algorithm is presented in the online Supplemental Material.

\section{Mathematical model}\label{section:model}
To handle an oscillatory signal with a non-sinusoidal oscillatory pattern, \texttt{WSFv1} was proposed in \cite{HTWu2013}, modeling the signal as
\begin{equation}\label{eq:WSF}
x(t) = A(t) s(\phi(t)),
\end{equation}
where $A(t)$ is a positive and smooth function called {\em instantaneous amplitude (IA)} (or amplitude modulation or time-varying amplitude), $\phi(t)$ is a strictly monotonically increasing and differentiable function called {\em phase function}, and $s(t)$ is a $1$-periodic function with mean 0 and unit $L^2$-norm called WSF that describes the oscillatory pattern of the signal. Usually we call the positive function $\phi'(t)$ {\em instantaneous frequency (IF)} (or time-varying frequency).
Note that the $1$-periodic function $s(t)$ with $\hat{s}(0)=0$ and $\|s\|_{L^2[0,1]}=1$ replaces the cosine function in the classical adaptive harmonic model \cite{Daubechies2011synchro}; that is, it allows the oscillation to be more complex than the simple sinusoidal function.
We need some \emph{slowly varying} conditions so that the \texttt{WSFv1} model  is identifiable \cite{chen2014non}. For a  small $\epsilon >0$, we assume
\begin{itemize}
\item[(O1)] $A\in C^1(\mathbb{R})$ and $|A'(t)| \leq \epsilon \phi'(t)$ for all $t \in \mathbb{R}$;
\item[(O2)] $\phi\in C^2(\mathbb{R})$, $|\phi''(t)|\leq \epsilon \phi'(t)$ for all $t \in \mathbb{R}$ and $\| \phi'' \|_{\infty} = M$ for some $M \geq 0$.
\end{itemize}
Note that in this model, {\em one} WSF is involved, which is stretched by $\phi(t)$ and scaled by $A(t)$; that is, there is only one non-sinusoidal oscillatory pattern.

While the \texttt{WSFv1} model has been applied to several problems \cite{hou2016extracting,Wu2016modeling}, it is limited and not suitable for many applications, particularly when the oscillatory pattern changes from time to time. See Fig. \ref{fig:ABP} for an ABP signal where the subject has arrhythmia so that we could observe an obvious change in the oscillatory pattern.
For a signal having this property, we say that it is generated by a {\em time-varying WSF}.
We need a more general model for such signal since it cannot be well modeled by \texttt{WSFv1} via scaling and stretching one WSF.

To address this challenge, in \cite{lin2016waveshape}, the authors proposed to generalize \texttt{WSFv1} in the following way to capture the time-varying WSF. Note that in \eqref{eq:WSF}, the WSF has the Fourier series expansion $s(t) = \sum_{l =1}^\infty a_l \cos(2\pi lt+b_l)$, where $a_l>0$ and $b_l\in(0,2\pi]$ are Fourier coefficients, and the convergence depends on the regularity of $s$. Note that since $\hat{s}(0)=0$ and $\|s\|_{L^2[0,1]}=1$, we have $\sum_{l=1}^\infty a_l^2=2$.
Thus,  the model \eqref{eq:WSF} becomes
\begin{equation}\label{eq:WSFrei}
x(t) =   \sum_{l =1}^\infty [a_lA(t)] \cos(2\pi l\phi(t)+b_l)\,.
\end{equation}
To capture the time-varying WSF, it is suggested in \cite{lin2016waveshape} to consider \texttt{WSFv2}, where the signal satisfies
\begin{equation}\label{eq:New_Model}
x(t) = \sum_{l = 1}^{\infty} B_l(t) \cos(2 \pi \phi_l(t))\,,
\end{equation}
where $A(t)a_l$ in \eqref{eq:WSFrei} is replaced by a non-negative function $B_l(t)$, $l\phi(t)+b_l/(2\pi)$ is replaced by a strictly monotonically increasing function $\phi_l(t)$, and $B_l(t)$ and $\phi_l(t)$ satisfy the following conditions:
\begin{itemize}
\item [{(T1-0)}] $B_l\in C^1(\mathbb{R})\cap L^\infty(\mathbb{R})$ for $l=1,2,\ldots$ and $B_1(t)>0$ and $B_l(t)\geq 0$ for all $t$ and $l=2,3\ldots$. $B_l(t) \leq c(l) B_{1}(t)$, for all  $t\in \mathbb{R}$ and $l = 1,2,\dots,\infty$, and with $ \{c(l)\}_{l = 1}^{\infty}$ a non-negative $\ell^{1}$ sequence. Moreover, there exists $N\in \mathbb{N}$ so that $\sum_{l=N+1}^\infty B_l(t)\leq \epsilon \sqrt{\sum_{l=1}^\infty B_l^2(t)}$ and $\sum_{l=N+1}^\infty lB_l(t)\leq D\sqrt{\sum_{l=1}^\infty B_l^2(t)}$ for some constant $D>0$.

\item [{(T2-0)}] $\phi_l\in C^2(\mathbb{R})$ and $|\phi'_l(t) - l\phi'_1(t)| \leq \epsilon \phi'_1(t)$, for all $t\in \mathbb{R}$ and  $l = 1,\dots,\infty$.

\item [{(T3-0)}]  $|B'_{l}(t)| \leq \epsilon c(l) \phi'_{1}(t)$ and $|\phi''_{l}(t)| \leq \epsilon l \phi'_{1}(t)$ for all $t \in \mathbb{R}$. Moreover, assume $\sup_{l;\,B_l\neq 0}\|\phi''_l\|_\infty=M$ for some $M\geq 0$.

\end{itemize}
In this model, we call $B_l(t) \cos(2 \pi \phi_l(t))$ the {\em $l$-th harmonic} of $x(t)$ for $l\geq 1$, and specifically call $B_1(t) \cos(2 \pi \phi_1(t))$ {\em fundamental component}. Note that
$ \{c(l)\}_{l = 1}^{\infty}$ in (T1-0) and (T3-0) is {\em related to} the Fourier coefficients of $s$ in \eqref{eq:WSFrei}. Clearly, \texttt{WSFv2}  is reduced to \texttt{WSFv1} when $B_l(t)=A(t)a_l$ with $\sum_{l=1}^\infty a_l^2=2$ and $\phi_l(t)=l\phi(t)+b_l/(2\pi)$, and in this case, $c(l)=a_l/a_1$ so that $\sum_{l=1}^\infty c(l)^2=2/a_1^2>2$. Also note that (T2-0) ensures that the IF of a harmonic is not far from an integer multiple of the fundamental frequency $\phi'_1(t)$.
In short, (T1-0)-(T3-0) allows the WSF to vary, but only {\em slowly} in the sense that the harmonics' behavior is well controlled by the fundamental component. We thus say that \texttt{WSFv2} is generated by {\em slowly varying WSFs}. See \cite{lin2016waveshape} for more discussion about this model.
This \texttt{WSFv2} model has been applied to model several signals for various purposes, like fetal electrocardiogram decomposition from the trans-abdominal electrocardiogram signal \cite{su2017extract}, lower body negative pressure analysis \cite{ShelleyAlianWu2022}, reconsideration of phase \cite{alian2022reconsider} and the accelerometer signal analysis \cite{wu2022application}.

We shall mention that the \texttt{WSFv2} introduced in \cite{lin2016waveshape} could be slightly generalized to include more general WSFs. Note that it is possible that a $1$-periodic function does not have a fundamental component \cite{Xu2018recursive,steinerberger2021fundamental}. In fact, note that $s(t) = \sum_{l =1}^\infty a_l \cos(2\pi lt+b_l)$ is $1$-periodic as long as $\textup{gcd}\{l|\,a_l\neq 0\}=1$.
We could thus generalize \texttt{WSFv2} introduced in \cite{lin2016waveshape} by replacing (T1-0)-(T3-0) by

\begin{itemize}

\item[(T1-1)] $B_l\in C^1(\mathbb{R})\cap L^\infty(\mathbb{R})$ for $l=1,2,\ldots$ and there exists $\ell\in \mathbb{N}$ so that $B_\ell(t)>0$ and $B_l(t)\geq 0$ for all $t\in \mathbb{R}$ and $l\neq \ell$.
For each time $t\in \mathbb{R}$, we have $\textup{gcd}\{l|\,B_l(t)> 0\}=1$ and $B_l(t) \leq c(l)B_{\ell}(t)$, for all  $l=1,2,\ldots$ and $t\in \mathbb{R}$, where $ \{c(l)\}_{l = 1}^{\infty}$ is a non-negative $\ell^{1}$ sequence. Moreover, there exists $N\in \mathbb{N}$ so that $\sum_{l=N+1}^\infty B_l(t)\leq \epsilon \sqrt{\sum_{l=1}^\infty B_l^2(t)}$ and $\sum_{l=N+1}^\infty lB_l(t)\leq D\sqrt{\sum_{l=1}^\infty B_l^2(t)}$ for some constant $D>0$.

\item[(T2-1)] $\phi_l\in C^2(\mathbb{R})$ and $|\phi'_l(t) - l\phi'_\ell(t)/\ell| \leq \epsilon \phi'_\ell(t)$ for all $t\in \mathbb{R}$ and  $l = 1,2,\dots$.

\item[(T3-1)]  $|B'_{l}(t)| \leq \epsilon c(l) \phi'_{\ell}(t)/\ell$ and $|\phi''_{l}(t)| \leq \epsilon l \phi'_\ell(t)/\ell$ for all $t \in \mathbb{R}$. Moreover, assume $\sup_{l;\,B_l\neq 0}\|\phi''_l\|_\infty=M$ for some $M\geq 0$.
\end{itemize}
Note that (T1-1) assumes that the $\ell$-th harmonic is dominant, and we call it the {\em dominant harmonic}.
When $\ell=1$, (T1-1)-(T3-1) is reduced to (T1-0)-(T3-0). In the following, we would still call this generalized model \texttt{WSFv2} and follow the same terminologies used for \eqref{eq:New_Model} to simplify the discussion.

However, \texttt{WSFv2} is still limited to quantify the signals shown in Figs. \ref{fig:ABP} and \ref{fig:ECG}, since the WSF variation in these signals is not ``slow''. We need a model that could capture the ``sudden change'' of the WSF. To this end, we introduce the following assumptions for \eqref{eq:New_Model} that further generalizes \texttt{WSFv2}. Specifically, we extend (T1-1)-(T3-1) and introduce

\begin{itemize}
\item[(T0)] There exist $r\in \mathbb{N}\cup\{0\}$ change points. If $r>0$, the change points are $t_i$, $i=1,\ldots,r$, so that $t_1<t_2<\ldots<t_r$. Set $I_i:=(t_i,\,t_{i+1}]$, where $i=1,\ldots,r-1$, $I_0=(-\infty,t_1]$ and $I_r=(t_r,\infty)$.

\item[(T1)] For each $i=0,\ldots,r$, $B_l\in C^1(I_i)\cap L^\infty(I_i)$ for $l=1,2,\ldots$  and there exists $\ell\in \mathbb{N}$ so that $B_\ell(t)>0$ and $B_l(t)\geq 0$ for all $l\neq \ell$.
For each time $t\in I_i$, we have $\textup{gcd}\{l|\,B_l(t)> 0\}=1$.
Also, there exists a non-negative $\ell^{1}$ sequence $\{c(l)\}_{l = 1}^{\infty}$ so that $B_l(t) \leq c(l)B_{\ell}(t)$ for all $l=1,2,\ldots$ and $t\in I_i$.
Moreover, there exists $N\in \mathbb{N}$ so that $\sum_{l=N+1}^\infty B_l(t)\leq \epsilon \sqrt{\sum_{l=1}^\infty B_l^2(t)}$ and $\sum_{l=N+1}^\infty lB_l(t)\leq D\sqrt{\sum_{l=1}^\infty B_l^2(t)}$ for some constant $D>0$ for all $t\in \mathbb{R}$.

\item[(T2)] $\phi_{\ell}(t)\in C^2(\mathbb{R})$. For each $i=0,\ldots,r$, $\phi_l\in C^2(I_i)$ and $|\phi'_l(t) - l\phi'_\ell(t)/\ell| \leq \epsilon \phi'_\ell(t)$ for all $t\in I_i$ and  $l = 1,2,\dots$.

\item[(T3)] $B_{\ell}(t)\in C^1(\mathbb{R})$. For each $i=0,\ldots,r$,  $|B'_{l}(t)| \leq \epsilon c(l) \phi'_{\ell}(t)/\ell$ and $|\phi''_{l}(t)| \leq \epsilon l \phi'_\ell(t)/\ell$ for all $t \in I_i$. Moreover, assume $\sup_{l;\,B_l\neq 0}\|\phi''_l\|_\infty=M$ for some $M\geq 0$.

\item[(T4)] If $r>0$, for each $i=1,\ldots,r$, there exist finite $\{l_i\}\subset \mathbb{N}$ so that $\ell\notin \{l_i\}$ and $B_{l_i}$ is discontinuous at $t_i$ so that $B_{l_i}-(\lim_{t\to t_i^+}B_{l_i}(t)-\lim_{t\to t_i^-}B_{l_i}(t))\chi_{[t_i,\infty)}$ is $C^1$ at $t_i$, and there exist finite $\{k_i\}\subset \mathbb{N}$ so that $\ell \notin \{k_i\}$ and $\phi_{k_i}$ is discontinuous at $t_i$  so that $\phi_{k_i}-(\lim_{t\to t_i^+}\phi_{k_i}(t)-\lim_{t\to t_i^-}\phi_{k_i}(t))\chi_{[t_i,\infty)}$ is $C^2$ at $t_i$.
\end{itemize}
We say a function satisfying \eqref{eq:New_Model} and (T0)-(T4) fulfills the \texttt{WSFv3} model.
Note that this seemingly complicated assumption is actually a simple extension of (T1-1)-(T1-3) to capture sudden WSF changes. Indeed, when $r=0$ in (T0), (T1)-(T3) are the same as (T1-1)-(T1-3) and \texttt{WSFv3} is reduced to \texttt{WSFv2}. When $r>0$, it says that the WSF could change ``fast'' at finite time points. The fast varying is described by the discontinuity of some non-dominant harmonic characterized by (T4), while we still require the dominant harmonic to be smooth. In others words, we assume that there are {\em at most} $r+1$ different types of WSF functions in the signal. The uniqueness of the decomposition is guaranteed by the identifiability of the model under mild conditions \cite{chen2014non}; that is, if  all WSFs have their Fourier series coefficients decaying fast enough, then the decomposition is unique up to an error of order $\epsilon$.

To further illustrate \texttt{WSFv3}, consider the following special case. Take $-\infty=t_0<t_1<t_2<\ldots<t_r<t_{r+1}=\infty$. Suppose $B_l(t)=A_{j}(t)a_{j,l}$ with $\sum_{l=1}^\infty a_{j,l}^2=2$ when $t\in (t_{j-1},t_j)$, $j=1,\ldots,r+1$, where $A_j$ satisfies (O1), and assume the fundamental component is the dominant harmonic. Also suppose $\phi_l(t)=l\phi(t)+b_{j,l}/(2\pi)$ when $t\in (t_{j-1},t_j)$, where $b_{1,1}=b_{1,2}=\ldots=b_{1,r+1}$. In this case, we can rewrite \eqref{eq:New_Model} as
\begin{equation}\label{eq:WSF-special}
x(t) = \sum_{j=1}^{r+1}A_j(t) s_j(\phi(t))\chi_{(t_{j-1},t_j]}(t)\,,\end{equation}
where $s_j(t)=\sum_{l=1}^\infty a_{j,l}\cos(2\pi lt+b_{j,l})$ is a $1$-periodic function and $\chi$ is the indicator function.
In other words, there are at most $r+1$ WSFs. Note that \eqref{eq:WSF-special0} is a simplification of \eqref{eq:WSF-special}.
A typical example of \texttt{WSFv3} that captures the {\em fast varying WSFs} is the ECG signal shown in Fig. \ref{fig:ECG}.
We shall finally mention that \texttt{WSFv3} could be viewed as a model in parallel to the wave-shape oscillatory model aiming to study the dynamics of WSFs \cite{YTLin2019wave}, which is designed to study signals from the time domain.

\section{Proposed algorithm}\label{Section:algo}
Based on \texttt{WSFv3}, we now introduce our algorithm, {\em iterative warping and clustering}.
The mission is to find all different WSFs from the signal $x$ satisfying  \texttt{WSFv3}. The basic idea of the algorithm is estimating the phase of the dominant harmonic, and use that phase to warp the signal for a better WSF estimation. To motivate the algorithm, suppose the fundamental component is dominant, and suppose we have obtained its phase ${\phi}_1$ via some estimation. We could consider a warping process to obtain
\begin{equation}\label{eq:New_Modelunwrapped}
\begin{aligned}
x(\phi^{-1}_1(t)) =& B_1(\phi^{-1}_1(t)) \cos(2 \pi t) \\
&+ \sum_{l = 2}^{\infty} B_l(\phi^{-1}_1(t)) \cos(2 \pi \phi_l(\phi^{-1}_1(t)))\,,
\end{aligned}
\end{equation}
As a special example, note that if \eqref{eq:WSF-special} is fulfilled, \eqref{eq:New_Modelunwrapped} is reduced to
{\footnotesize\begin{equation}\label{eq:WSF-special unwrapped}
x(\phi^{-1}_1(t)) = \sum_{j=1}^{r+1}A_j(\phi^{-1}_1(t)) s_j( t)\chi_{[\phi^{-1}_1(t_{j-1}),\,\phi^{-1}_1(t_j)]}(t)\,.
\end{equation}}
Thus, a simple rescaling by estimating the amplitude modulation $A_j(\phi^{-1}_1(t))$ allows us to recover $s_j$ by a demodulation. While this idea is simple, several details are needed in order to properly carry it out numerically. Below, we summarize the main steps in the continuous setup, and provide numerical details in the next section. The theoretical guarantees of the algorithm are presented in the online Supplemental Material.

\subsection{Estimation of amplitude and phase of dominant harmonic}\label{section:amfmest current}

The first step towards our goal is applying the short-time Fourier transform (STFT) to estimate the dominant harmonic and its amplitude and phase. We shall mention that while we could consider nonlinear approaches, like synchrosqueezing transform (SST) \cite{Daubechies1996,Daubechies2011synchro} and its higher-order variation \cite{Oberlin2015,Pham2017high}, we consider STFT to avoid distraction.
Let $x$ be a tempered distribution and $h(t)$ a real and even Schwartz function. The short-time Fourier transform of $x$ using a real-valued window function $h$ is defined as
\begin{equation}\label{eq:STFT}
V_x^{(h)}(t,\xi) = \langle x, h(\cdot-t) e^{i2 \pi \xi (\cdot-t)}\rangle\,,
\end{equation}
where $\langle \cdot,\cdot\rangle$ is the canonical duality pairing, and $t,\xi\in \mathbb{R}$ are time and frequency respectively.
When $x$ is a function, we have $V_x^{(h)}(t,\xi) = \int_{-\infty}^{+\infty} x(u) h(u-t) e^{-i2 \pi \xi (u-t)} du$.

Recall the following result.
Let $x(t)$ be a signal modeled as in \eqref{eq:New_Model} with (T1-1)-(T3-1) fulfilled and $h(t)$ be a real and even function such that $\hat{h}(\xi)$ is supported in $[-H,H]$, where $0<H<\inf_t \phi_1'(t)/2$, in the frequency domain. Define $Z_{l}(t) = [\phi'_{l}(t)-H,\phi'_{l}(t)+H]$. If $|\phi'_{l+1}-\phi'_l|>2H$, for $l=1,2,\ldots$, then for $\xi\in Z_{l}(t)$, we have
\begin{equation}\label{Traditional result of STFT}
V_x^{(h)}(t,\xi) \approx \frac{B_{l}(t)}{2} e^{i2\pi\phi_{l}(t)} \hat{h}(\xi- \phi'_{l}(t)).
\end{equation}
See \cite{delprat1992asymptotic} for a precise statement of this result and the quantification of $\approx$.
This result tells us that by picking a proper window $h(t)$, it is possible to obtain an estimation of $\phi'_\ell$, denoted as $\tilde{\phi}'_{\ell}(t)$,  via the ridge detection.
However, to our knowledge, existing results do not guarantee the regularity of the estimated IF, even under (T1)-(T3). In the online Supplemental Material, we provide a theoretical result that extends \eqref{Traditional result of STFT} to guarantee the regularity of the estimated IF $\tilde{\phi}'_{\ell}(t)$ under \texttt{WSFv3}.
Then, the AM could be estimated via
\begin{equation}\label{eq:Amplitude}
\tilde{B}_{\ell}(t) \approx \frac{2|V_x^{(h)}(t,\tilde{\phi}'_{\ell}(t))|}{|\hat{h}(0)|}.
\end{equation}
The phase function $\phi_\ell$ can be found by evaluating the phase of the complex function $V_x^{(h)}(t,\tilde{\phi}'_{\ell}(t))$ or  $\int_{|\xi-\tilde{\phi}'_{\ell}(t)|<H} V_x^{(h)}(t,\xi)d\xi$.

\begin{remark}

We shall remark that (T3) indicates that the IF varies slowly. However, not all signals fulfill such a strong assumption. Sometimes, the IF could vary fast, and such fact would lead to an imperfect estimate of phase. When the IF varies fast, we need different algorithms to estimate the phase and AM. For example, the second-order SST offers a more accurate IF estimation if the signal is a linear chirp, and when the phase departures from being quadratic, the higher-order SST can improve the accuracy.
We shall also mention that the ubiquitous presence of noise is another source of imperfect estimate of phase and AM. Luckily, it has been shown in the literature \cite{Brevdo2011,chen2014non,Yang2018,sourisseau2022inference} that the STFT and SST can robustly estimate IF from a noisy signal when the IF varies slowly as described in (T3).
To our knowledge, while a theoretical justification might be straightly established with the same techniques used in \cite{chen2014non,sourisseau2022inference}, it is still lacking, and numerically the high-order SST is still robust to noise. However, numerically higher orders seem to be more prone to noise-related instabilities. This however is out of the scope of this paper, and will be explored in our future work.
Note that it is natural to think of other widely applied decomposition algorithms, like continuous wavelet transform (CWT) \cite{Flandrin1998}, empirical mode decomposition (EMD) \cite{EMD1998}, empirical wavelet \cite{Gilles2013}, sliding singular spectrum analysis (SSA) \cite{Harmouche2017}, the Blaschke decomposition \cite{Nahon2000Thesis} and iterative filtering-based algorithms \cite{LinYang2009}, to estimate the amplitude and phase of the dominant harmonic. However, these algorithms are not suitable for our purpose since they are designed to decompose different oscillatory components, but not for segmenting signals nor to estimate different oscillatory patterns present in it. Specifically, EMD is limited to handle nonsinusoidal WSF, and hence dominant harmonic, since it is based on the local maxima; Blaschke decomposition is also limited to handle nonsinusoidal WSF since it is based on the winding behavior of all harmonics; CWT and empirical wavelet are limited in the high harmonics region due to the affine transform nature of wavelet; SSA might be limited when the IF is varying; iterative filtering-based algorithms need the spectral band of the component and might be limited when the IF is varying. It is possible to apply them in specific situations, but in general the result is less accurate.
\end{remark}

\subsection{Iterative warping and demodulation}

Suppose the $\ell$-th harmonic, where $\ell \geq 1$, is dominant. By assumption (T3), we know $\phi'_{\ell}(t)/\ell$ is close to $\phi'_1$. Thus, we could insert $\phi'_\ell(t)/\ell$  into \eqref{eq:WSF-special unwrapped} to finish the mission of warping.
In practice, however, $\phi'_\ell(t)/\ell$ is unknown and needs to be estimated, and the estimate $\tilde{\phi}'_{\ell}(t)$ is not perfect in general in the sense that it is not exactly equal to the true phase $\phi'_\ell(t)$, as is discussed in Sec. \ref{section:amfmest current}.

To resolve this imperfectness, in this paper we advocate the following iterative warping scheme. First, we obtain an estimation of the inverse of the phase $\psi_\ell^{[1]}(t)=\tilde{\phi}_\ell^{-1}(t) \approx \phi_\ell^{-1}(t)$ with the estimator via STFT shown in Sec. \ref{section:amfmest current}. Below, we take $\ell=1$ to simplify the discussion, but in practice we can take any $\ell\geq 1$ and replace  $\phi'_1(t)$ by  $\phi'_\ell(t)/\ell$ below and obtain the same conclusion. We proceed to set the \emph{first} warping and demodulation process and obtain
\begin{equation}\label{eq:first_dewarped}
{\small\begin{aligned}
\tilde{x}^{[1]}(t) =& \frac{x(\psi_{1}^{[1]}(t))}{B_{1}^{[1]}(\psi_{1}^{[1]}(t))} = \frac{B_1(\psi_{1}^{[1]}(t))}{B_{1}^{[1]}(\psi_{1}^{[1]}(t))} \cos\left(2 \pi t +  E_{1,1}(t)\right)\\
&+\sum_{n = 2}^\infty \frac{B_n(\psi_{1}^{[1]}(t))}{B_{1}^{[1]}(\psi_{1}^{[1]}(t))} \cos\left(n 2 \pi t +  E_{1,n}(t)\right),
\end{aligned}}
\end{equation}
where $E_{1,n}(t)$ represents the error caused by the imperfect estimation of the phase. We use the terminology \emph{warping} to refer to the composition of the signal with $\psi_{1}^{[1]}$ and \emph{demodulation} to refer to the division by the estimated amplitude $B_{1}^{[1]}(t)$.
We shall mention that a similar warping idea was considered in \cite{Xu2018recursive,Yang2019multiresolution}, but the purpose was decomposing the signal.
Due to the imperfect estimation of $\phi_1^{-1}$ and $B_1(\psi_{1}^{[1]}(t))$, clearly $E_{1,1}(t)\neq0$ and $\frac{B_1(\psi_{1}^{[1]}(t))}{B_{1}^{[1]}(\psi_{1}^{[1]}(t))}\neq 1$.
To handle this imperfectness, an intuitive approach is estimating the phase function again from $\tilde{x}^{[1]}(t)$, and hope that it could lead to an improved phase estimation.
However, in general the regularity of $E_{1,1}(t)$ and $\frac{B_1(\psi_{1}^{[1]}(t))}{B_{1}^{[1]}(\psi_{1}^{[1]}(t))}$ is not clear, which challenges the legality of applying the estimator in Sec. \ref{section:amfmest current}. Fortunately, as we will show in the theoretical section, this is not a problem.
Thus, from $\tilde{x}^{[1]}(t)$, the \emph{second} iteration of the phase estimation can be constructed and we obtain $\psi_1^{[2]}(t)$ as the estimator of the inverse of $2 \pi t +  E_{1,1}(t)$, and hence the second iteration signal $\tilde{x}^{[2]}(t) = x(\psi_1^{[1]} (\psi_1^{[2]}(t) ) )$. Then, the \emph{second} warped and demodulated signal would be
 \begin{equation}\label{eq:first_dewarped}
{\footnotesize\begin{aligned}
&\tilde{x}^{[2]}(t) = \frac{\tilde{x}^{[1]}(\psi_{1}^{[2]}(t))}{B_{1}^{[2]}(\psi_{1}^{[2]}(t))}\\
&=\frac{B_1(\psi_{1}^{[1]}(\psi_{1}^{[2]}(t)))}{B_{1}^{[1]}(\psi_{1}^{[1]}(\psi_{1}^{[2]}(t)))B_{1}^{[2]}(\psi_{1}^{[2]}(t))} \cos\left( 2 \pi t  + E_{2,1}(t)\right)\\
&+ \sum_{n = 2}^\infty \frac{B_n(\psi_{1}^{[1]}(\psi_{1}^{[2]}(t)))}{B_{1}^{[1]}(\psi_{1}^{[1]}(\psi_{1}^{[2]}(t)))B_{1}^{[2]}(\psi_{1}^{[2]}(t))} \cos\left(n 2 \pi t + E_{2,n}(t)\right)\,,
\end{aligned}}
\end{equation}
where $E_{2,n}(t)$ again represents the error. Ideally, we would expect that $|E_{2,1}(t)|< |E_{1,1}(t)|$ and $\left|\frac{B_1(\psi_{1}^{[1]}(\psi_{1}^{[2]}(t)))}{B_{1}^{[1]}(\psi_{1}^{[1]}(\psi_{1}^{[2]}(t)))B_{1}^{[2]}(\psi_{1}^{[2]}(t))}-1\right|< \left|\frac{B_1(\psi_{1}^{[1]}(t))}{B_{1}^{[1]}(\psi_{1}^{[1]}(t))} -1\right|$.
This procedure could be iterated, denoting the resulting signal after the $I$-th iteration as $\tilde{x}^{[I]}(t)$, and $\tilde{x}(t) = \lim_{I\to \infty}\tilde{x}^{[I]}(t)$ to be the final warped and demodulated signal. Ideally, we expect to obtain
\begin{equation}\label{eq:dewarped}
{\footnotesize
\begin{aligned}\tilde{x}(t) = \cos( 2 \pi t) + \sum_{n > 1} \frac{B_n(\phi_1^{-1}(t))}{B_{1}(\phi_1^{-1}(t))} \cos(n 2 \pi t + E_{\infty,n}(t))\,,
\end{aligned}}
\end{equation}
where $E_{\infty,n}(t)$ is the term coming from the time-varying WSFs; that is, $n 2 \pi t + E_{\infty,n}(t)=\phi_l(\phi^{-1}_1(t))$ shown in \eqref{eq:WSF-special unwrapped}.
In practice, we will only iterate a finite number of times, and the number of iterations will be discussed in Sec. \ref{sec:Practical} when we discuss the numerical implementation.

\subsection{Segmenting and Clustering}

Equation \eqref{eq:dewarped} says that the fundamental component of the warped and demodulated signal $\tilde{x}$ is $1$-periodic, so we can easily segment it into ``pieces'' of length equal to 1. Note that \eqref{eq:dewarped} is a special case of the model discussed in \cite{YTLin2019wave}, and we can write it as
\begin{equation}\label{eq:non_parametric}
\tilde{x}(t) = \sum_{k \in \mathbb{Z}} \delta(t-k) \ast s_k(t)=\sum_{k \in \mathbb{Z}} s_k(t-k)\,,
\end{equation}
where $\delta(\cdot)$ stands for the Dirac distribution, $\ast$ means convolution, and $s_k$ has $\mbox{supp}\{s_k\}\subseteq[-1/2,1/2]$ and belong to the class of analytic wave-shape functions.
Thus, we can construct
\begin{equation}\label{eq:M}
M_{\tilde{x}} = \{ \tilde{x}(t)|_{[k,k+1)};
\, k\in \mathbb{Z} \}\subset L^2([0,1])\,.
\end{equation}
That is, we collect signals over all segments $[k,k+1)$, and each signal represents, ideally, a WSF. To finish the mission of estimating all WSFs present in the signal, we find clusters in the set $M_{\tilde{x}}$, and each cluster represents one WSF. The segmentation and clustering of the cycles of the signal lead to an indirect detection of the points where the WSF changes. We shall mention that this segmentation step looks similar to the sliding SSA \cite{Harmouche2017}, but they are different. In sliding SSA, the signal is segmented into pieces via a sliding window, while we segment the signal into pieces according to its fundamental phase.
The numerical details will be addressed in Sec. \ref{sec:Practical}.

Last but not the least, we shall mention that an approach that uses the phase and the amplitude to estimate the waveforms of the signal can be found in \cite{Xu2018recursive,Yang2019multiresolution}. The algorithms proposed in \cite{Xu2018recursive,Yang2019multiresolution} assume the exact knowledge of $A$ and $\phi$, which is in general not trivial to obtain. Moreover, they do not propose a clustering of the set of the waveforms but estimate an \emph{average waveform} for the whole signal duration, and proceed recursively in a deflationary way.

\section{Numerical Implementation}\label{sec:Practical}

We now detail the numerical implementation of the proposed algorithm, which is summarized in Algorithm 1. Below we discuss the algorithm step by step.

\noindent \hrulefill

\textbf{Algorithm 1} %Iterative Warping and Clustering for WSF Estimation
\vspace{-3mm}

\noindent \hrulefill

\begin{algorithmic}[1]\label{algo1}
\STATE The input signal is $\mathbf x^{[0]}:=\mathbf x\in \mathbb{R}^N$.
\STATE Set $i=0$.
\WHILE {SVD entropy criterion not met}
\STATE Compute the STFT of $\mathbf x^{[i]}(t)$.
\STATE Detect the ridge by the method proposed in \cite{Colominas2020}.
\STATE Estimate the amplitude, denoted by $\tilde{B}^{[i]}_1(t)$, from Eq. \eqref{eq:Amplitude} and the frequency, denoted by $\tilde{\phi}_1^{[i]'}(t)$, as in \cite{Pham2019Eusipco}.
\STATE Estimate the phase via $\tilde{\phi}_1^{[i]}(t) = \int_0^t \tilde{\phi}_1^{[i]'}(u)du$.
\STATE Compute $\mathbf x^{[i+1]}$ from Eq.
\eqref{eq:dewarped}.

\STATE Construct the set $M_{\tilde{\mathbf x}}$ from $\tilde{\mathbf x}=\mathbf x^{[i+1]}$ as in Eq. \eqref{eq:M}.
\STATE Align all segments in $M_{\tilde{\mathbf x}}$ via synchronization, and obtain $\{\hat{\mathbf c}_l\}_{l=1}^P$.

\STATE $i\leftarrow i+1$.
\ENDWHILE

\STATE Determine the optimal number $K$ of clusters \cite{Calinski1974} for the set $\{\hat{\mathbf c}_l\}_{l=1}^P$ and cluster the data using the $k$-means algorithm \cite{kmeans}.
\STATE Estimate the waveforms as the cluster medians or by the trigonometric regression \cite{Brillinger1987,Kavalieris1994} on the cluster medians.
\end{algorithmic}
\vspace{-3mm}
\hrulefill

\subsection{Iterative warping}
Consider the signal $x(t)$ that is defined on $t\in[0,T]$. In practice, we work with the discretization of $x(t)$ with the sampling period $\Delta t = 1/f_s$, where $f_s>0$ is the sampling frequency. Thus, we have $\mathbf x\in \mathbb{R}^{N}$, where $N=\lfloor T f_s\rfloor$ and $\mathbf x[n] = x(n \Delta t)$. For the sake of simplicity, we drop the subindex for the dominant harmonic. Note that the warping process of estimating $x(\psi(t))$, with $\psi(t) = \phi^{-1}(t)$ being the inverse function of the phase, is equivalent to a \emph{non-uniform resampling} of $x[n]$.
To numerically carry out this warping process, we discretize the range of $\phi(t)$ over $[0,T]$ into $M$ uniform points with a separation of $\Delta \tau = (\phi(T)-\phi(0))/M$. Then, set $t_m$ so that $\phi(t_m) = m \Delta \tau$. By finding these points, we are actually estimating $\phi^{-1}(t)$, so the warped signal is $\mathbf x^{[1]}\in \mathbb{R}^{M}$ so that $\mathbf x^{[1]}[m] = x(t_m)$. We set $M=N$, but it can be other numbers if needed.
See Fig. \ref{fig:Warping} for an illustration of this process.
The same warping is carried out on $\mathbf x^{[1]}$, and hence iteratively for $I$ steps. Denote $\tilde{\mathbf x}:=\mathbf x^{[I]}$. How to determine $I$ and the benefits of this iterative procedure will be illustrated in the following steps.

\begin{figure}[t]
\begin{center}
\includegraphics[width=\columnwidth]{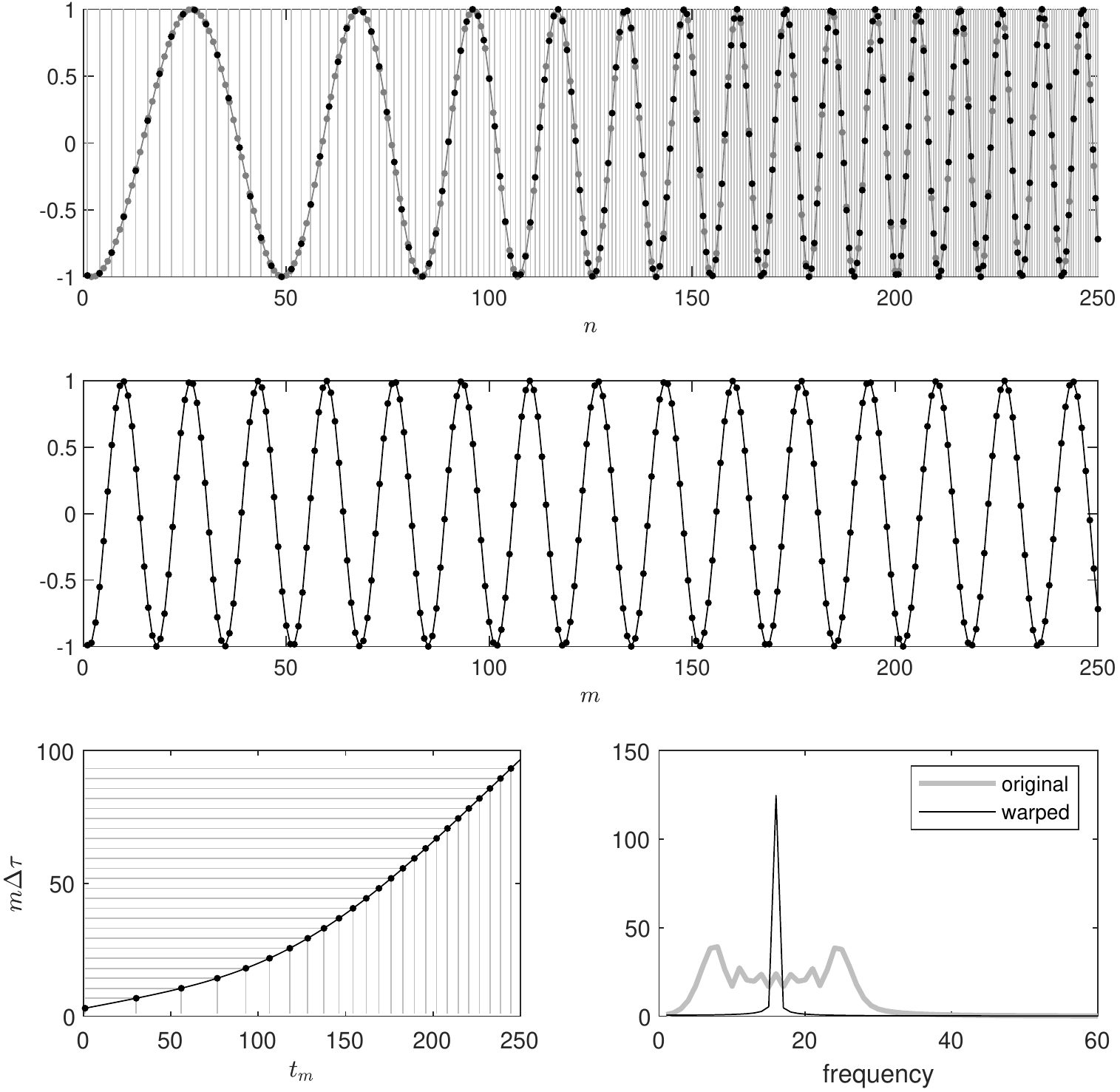}
\end{center}
\caption{An illustration of the warping process. Top: the original signal with its samples marked by gray dots, where the superimposed gray vertical lines indicate $t_m$, and the resampled points of the original signal shown in black dots. Middle: the warped signal. Bottom left: $\phi(t)$ and the associated $t_m$. Bottom right: the spectra of the original and warped signals.}
\label{fig:Warping}
\end{figure}

\subsection{Segmenting}
We segment the warped signal $\tilde{\mathbf x}^{[i]}$ after $i$ iterations in the following way. Ideally, after the iterative warping, the period of the fundamental component of the signal $\tilde{\mathbf x}^{[i]}$ is $1$. Suppose there are $P\in \mathbb{N}$ cycles of period $1$, each with length $L$. We divide $\tilde{\mathbf x}^{[i]}$ into $P$ segments,  $c_p\in \mathbb{R}^L$, $p = 1,2,\dots,P$,  and allocate them in a data matrix $\mathbf{M} \in \mathbb{R}^{P \times L}$:
\begin{equation}
\mathbf{M} =
\begin{bmatrix}
\rule[3pt]{20pt}{.5pt}&\mathbf c_1&\rule[3pt]{20pt}{.5pt} \\
\rule[3pt]{20pt}{.5pt}&\mathbf c_2&\rule[3pt]{20pt}{.5pt} \\
&\vdots& \\
\rule[3pt]{20pt}{.5pt}&\mathbf c_P&\rule[3pt]{20pt}{.5pt}
\end{bmatrix}.
\end{equation}

\subsection{SVD Entropy}
To answer the question of how many iterations are needed, we consider the singular value decomposition (SVD) entropy \cite{Alter2000} that has been used in a time-frequency context \cite{Boashash2013}.
The basic idea of SVD entropy is measuring how different its rows (or columns) are from each other. It is defined as the Shannon entropy of the normalized singular values of the matrix, which can be easily found with a singular value decomposition \cite{Bishop2006}. Specifically, if the singular values of $\mathbf{M}$ are $\sigma_1,\ldots,\sigma_P$, the SVD entropy is defined as
\begin{equation}
S_{\mathbf{M}} = -\sum_{i=1}^P \frac{\sigma_i}{\sum_{u=1}^P \sigma_u} \log\left(\frac{\sigma_i}{\sum_{u=1}^P \sigma_u}\right).
\end{equation}
Clearly, if all the rows are multiples of each other; that is, $\mathbf{M}$ is a rank-1 matrix, then only one singular value would be different from zero, and the SVD entropy would attain its minimum value, where we follow the convention and set $0\times \infty=0$. On the other hand, if all the rows are completely different from each other and all the singular values have the same weight, then the SVD entropy would be large. In our case, there are a few WSFs in the signal. By the slowly varying assumption of WSF, if all cycles are aligned properly, the rows are clustered into subsets so that all rows in one cluster are similar and only a few singular values are dominant, which leads to a low SVD entropy. If they are not aligned, the temporal shift would lead to a spreading of singular values, and hence to a high SVD entropy. In a special case that the period of $\tilde{\mathbf x}^{[i]}$ is precisely $1$, we have only one WSF, and all cycles are aligned properly, all rows are similar, then we have lowest SVD entropy. In other words, a low SVD entropy indicates that we have ran sufficient iterations. We thus suggest stopping the iteration process when the SVD entropy is stagnated.

\subsection{Synchronization}
In practice, the iterative warping might not be enough, particularly when there are change points of WSFs. Such change points might deviate the global phase, which leads to segments not properly aligned. To handle this case, we propose to apply graph connection Laplacian (GCL) based synchronization \cite{bandeira2013cheeger,marchesini2016alternating}. First, solve the optimization problem
\begin{equation}
R_{ij} = \arg\min_{R \in Z_L} \| R \circ \mathbf c_i - \mathbf c_j \|_{\mathbb{R}^L},
\end{equation}
where $Z_L\simeq \mathbb{Z}/L\mathbb{Z}=\{e=g^0, g, g^2, \ldots, g^{L-1}\}$ is the cyclic group and $(R\circ \mathbf c_i)[k]=\mathbf c_i[(k+l)\mod L]$ when $R=g^l$.
Note that we can represent $g^l\in Z_L$ in the matrix form
$g^l = \begin{bmatrix}
\cos(2\pi l/L) \, \, \sin(2\pi l/L) \\
-\sin(2\pi l/L) \, \, \cos(2\pi l/L)
\end{bmatrix}\in SO(2)$.
In case the minimum is not unique, we pick the one with the minimal angle. Thus, it is clear that $R_{ij} = R_{ji}^T \in SO(2)$. Since there are $P$ segments, we construct a $P \times P$ block matrix $\mathcal{C} \in \mathbb{R}^{2P\times 2P}$, with $2 \times 2$ blocks, where $\mathcal{C}[i,j] = R_{ij}$.
Denote the eigendecomposition $\mathcal{C} = U \Lambda U^T$, where the eigenvalues are sorted in the decreasing order with $U=\begin{bmatrix} u_1 & u_2 &\ldots & u_{2P}\end{bmatrix}\in O(2P)$. Then
\begin{equation}
Q = \begin{bmatrix}
u_1 \, \, u_2
\end{bmatrix}
\in \mathbb{R}^{2P \times 2}
=
\begin{bmatrix}
Q_1^T &
\ldots &
Q_P
\end{bmatrix}^T,
\end{equation}
with $Q_l \in \mathbb{R}^{2 \times 2}$. A rotation matrix $\hat{R}_i$ is fit to $Q_i$ by performing an SVD $Q_i = \Phi_i S_i \Psi_i^T$ and setting $\hat{R}_i = \Phi_i \Psi_i^T$. %
Finally, by setting $\hat{R}_1$ as the baseline, the other segments are aligned by applying
\begin{equation}
\hat{\mathbf c}_l := \hat{R}_1 \circ \hat{R}_l^T \circ \mathbf c_l,
\end{equation}
which leads to a synchronized version of the matrix. We shall mention that theoretically, if $\mathbf c_j$ comes from some shift $R_j\in Z_M$ of a template $\mathbf c$, then it is $\mathbf c_j=R_j\circ \mathbf c$. In this ideal case, it is clear that $R_{ij}=R_iR_j^T$, and hence $\mathcal{C}$ is a rank 2 matrix, and $Q=[R_1^T,\ R_2^T,\ldots,R_P^T]^T$, which allows us to fully recover the shifts.

\subsection{Clustering and change point detection}
Finally, we run a clustering algorithm on $\{\hat{\mathbf c}_l\}_{l=1}^P$ to obtain all patterns of WSFs.
There are several options to cluster a given data set. For instance, $k$-means \cite{kmeans}, agglomerative hierarchical cluster trees \cite{Ward1963,Rokach2005} or Gaussian mixture models \cite{Bishop2006}. In this work, we consider the $k$-means \cite{kmeans} since it performs consistently better. We suggest running the $k$-means with 50 replicates and conclude the result with the lowest sum of within-cluster distances. The optimal number of clusters $k$ was chosen using the Calinski-Harabasz criterion \cite{Calinski1974}.
Finally, we estimate the waveforms as the cluster medians or by the trigonometric regression \cite{Brillinger1987,Kavalieris1994} on the clusters medians.
The trigonometric regression on the cluster median is obtained by evaluating $\{\hat{a}_1,\ldots,\hat{a}_K,\hat{b}_1,\ldots,\hat{b}_K\} = \arg \min_{a_k,b_k} \|m(t) - \sum_{k=1}^K (a_k \cos(2 \pi k t) + b_k \sin(2\pi k t) ) \|_{L^2[0,1]}$, where $m(t)$ is the median of the given cluster and $K$ is determined by the criterion from \cite{ruiz2022wave}. and setting $m_R(t):=\sum_{k=1}^K (\hat a_k \cos(2 \pi k t) + \hat b_k \sin(2\pi k t) )$ as the solution.
We mention that empirically this trigonometric regression approach consistently gives a better result compared with the naive median when the noise level is high. Last but the not least, once the clusters are determined, the change points of WSFs are determined by the jumps from one cluster to another.

\section{Numerical results}\label{sec:Simulated}

We show various examples, ranging from the simplest simulated case, to the complicated real world signals. The Matlab code to reproduce all results in this section could be found in https://github.com/macolominas/WarpingWSF.

\subsection{Simulated Signal}
We start from the following simulated signal:
\begin{equation}\label{eq:Artificial}
{\footnotesize
\begin{aligned}
x_\phi(t) = \cos(2\pi \phi(t)) + A(t) \cos(4\pi \phi(t)) + B(t) \cos(6\pi\phi(t)),
\end{aligned}}
\end{equation}
where $t \in [0,1]$, $A(t) = \left( \frac{1}{1+e^{-250(t-1/3)}} - \frac{1}{1+e^{-250(t-2/3)}}  \right)$, $B(t) = \left( \frac{1}{1+e^{-250(t-1/3)}} \right)$, and
%\begin{equation}\label{eq:fase}
$\phi(t) = 40 t + \frac{1}{2\pi} \cos(8\pi t)$.
%\end{equation}
The sampling rate is 6000 Hz. The signal is then contaminated by the zero-mean Gaussian white noise with a signal to noise ratio (SNR) 0dB.
The results are shown in Fig. \ref{fig:Artificial_A}. %Here we apply the trigonometric regression \cite{Brillinger1987,Kavalieris1994} on each cluster median to obtain the WSFs.

\begin{figure}[t]%[hbt!]
\begin{center}
\includegraphics[width=\columnwidth]{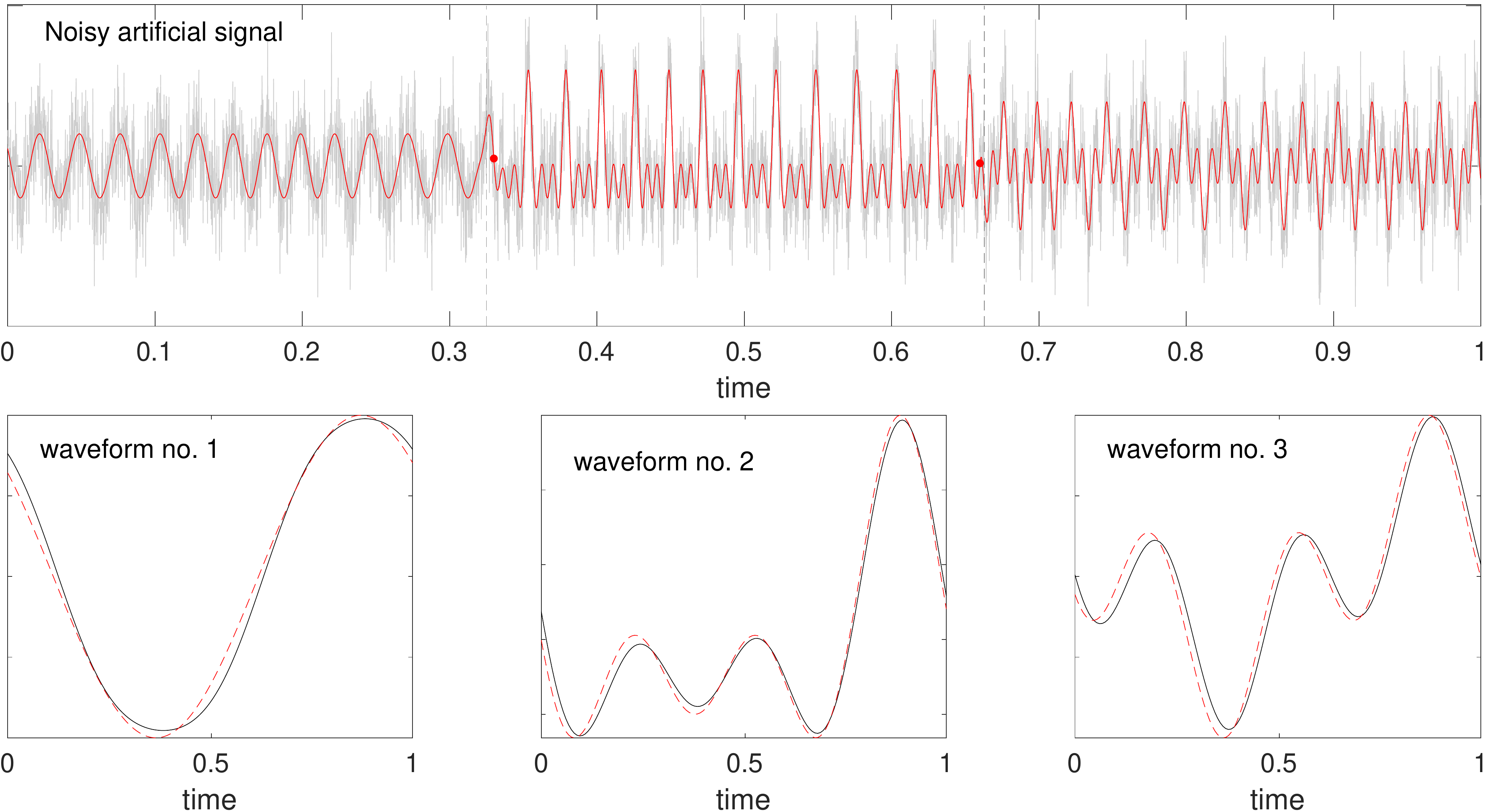}
\end{center}
\caption{Top: the simulated noisy signal of Eq. \eqref{eq:Artificial} is plotted as the gray curve, with the detected change points as vertical dashed lines, and the ground truth clean signal is superimposed as the red curve with the true change points as red dots. Bottom: the estimated wave shape functions (black) along with the ground truths (red).}
\label{fig:Artificial_A}
\end{figure}

Next, we illustrate the benefits of the iterative warping. We synthesize copies of
\begin{equation}
y_{ij}(t) = x_{\phi_i}(t) + \sigma n_j(t)\,,
\end{equation}
where $x_{\phi_i}$ is defined in \eqref{eq:Artificial}, $n_j(t)$ is the Gaussian white noise with zero mean and unit variance, $\sigma>0$ is determined by the desired SNR, and the phase $\phi_i$ is also randomly generated in the following way that is independent of $n_j$. Take
%\begin{equation}
$\phi_i(t) = \phi(t) + Y_i(t)$,
%\end{equation}
where $t\in [0,1]$, $Y_i$ is the $i$-th independent realization of the smoothed Brownian motion $Y$ defined as
$Y(t) := \int_{0}^t \frac{X(u)}{\|X\|_{L^\infty[0,1]}}  du$,
$X(t) = W\ast K_B(t)$, $W$ is the standard Brownian motion, $K_B(t)$ is the Gaussian function with the standard deviation $B>0$ and $\ast$ denotes the convolution operator.
We consider three different SNRs, 30 dB, 20 dB, and 10 dB, and realize $100$ $y_{ij}(t)$ for each SNR. We perform up to three iterative warpings, and measure the quality of the achieved data matrix with the SVD entropy. Estimation error of the actual waveforms by the cluster median is computed as the root mean square error (RMSE). The results are shown in Fig. \ref{fig:Iterations}. It can be clearly observed how the estimation errors decrease with the iterations, as the SVD entropy does. This illustrates the benefits of our iterative scheme, which leads to a better waveform estimation.

\begin{figure}[t]%[h]
\begin{center}
\includegraphics[width=\columnwidth]{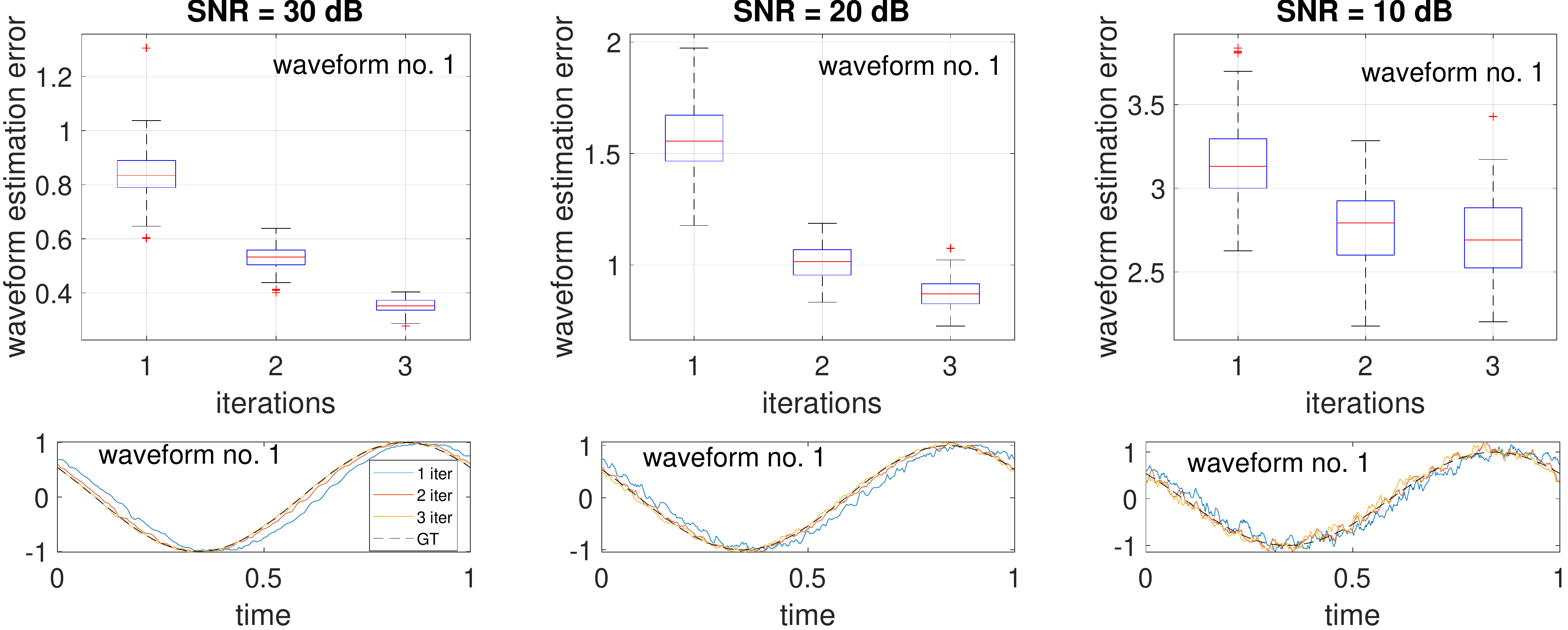}
\vspace{2mm}
\includegraphics[width=\columnwidth]{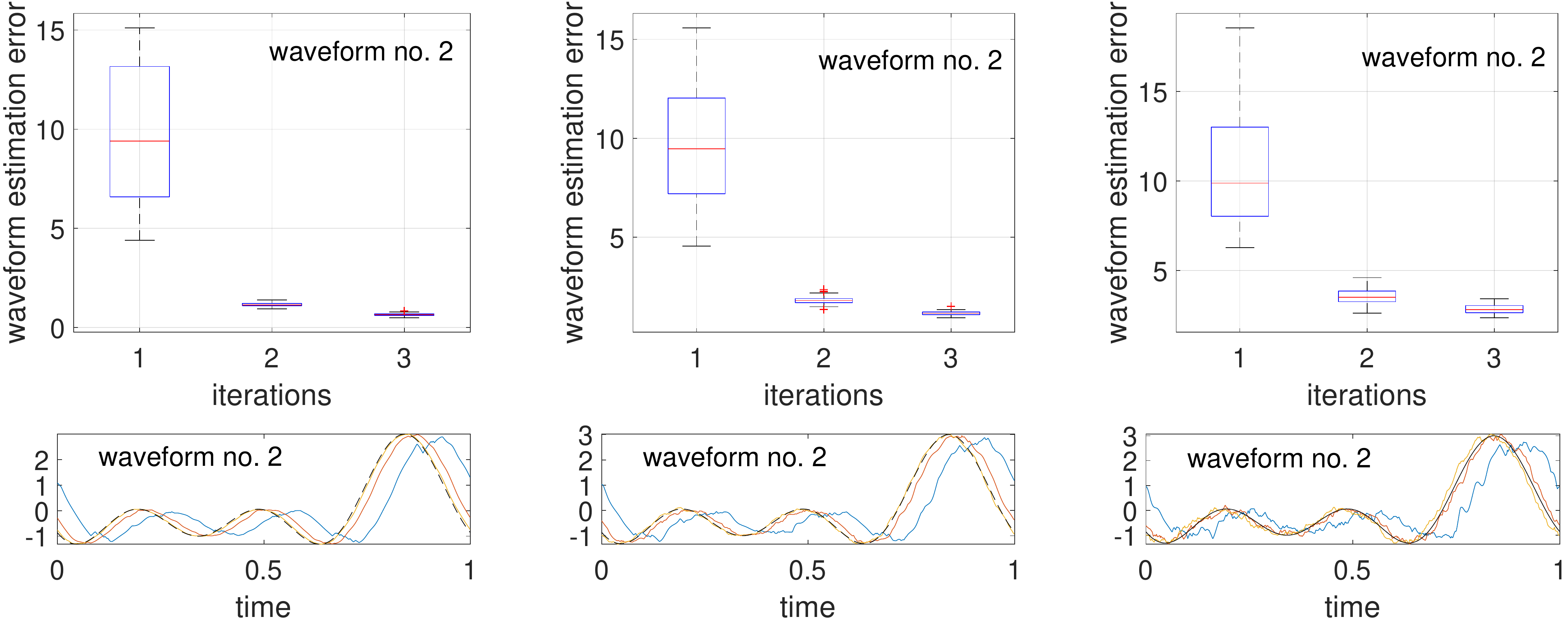}
\vspace{2mm}
\includegraphics[width=\columnwidth]{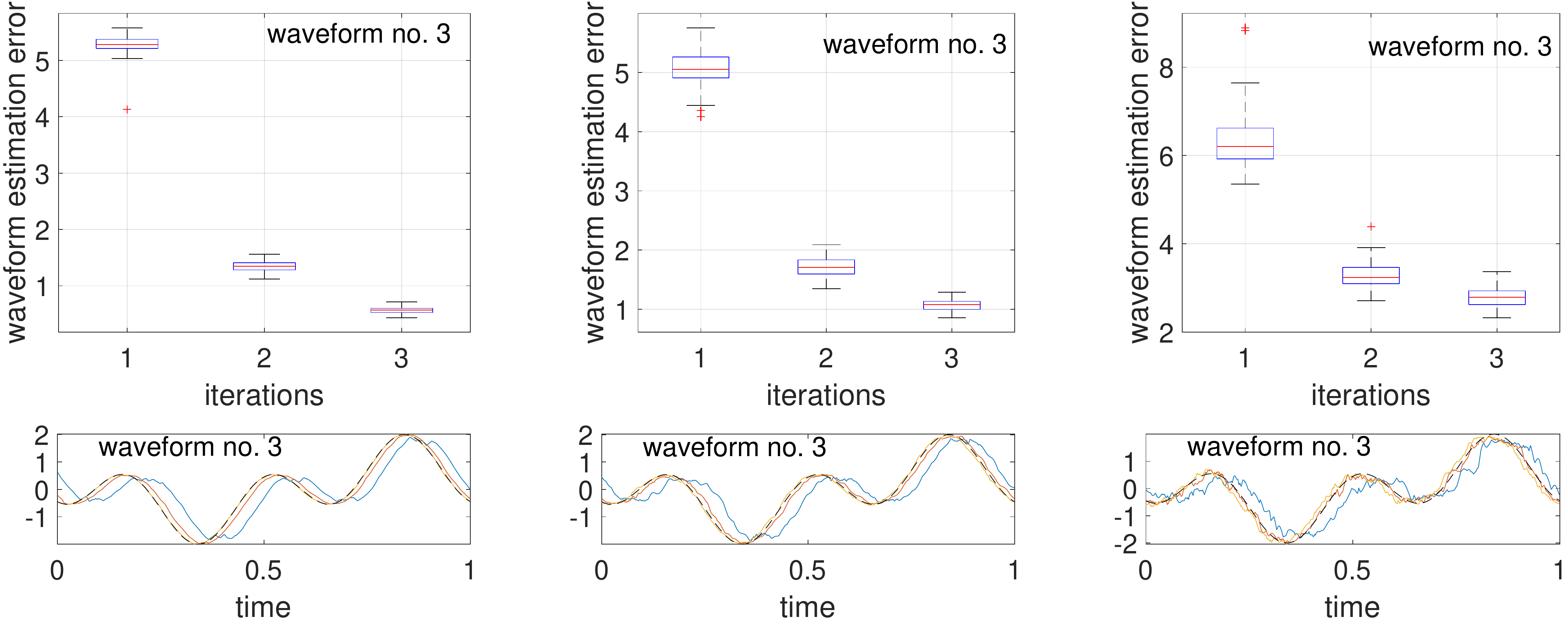}
\vspace{2mm}
\includegraphics[width=\columnwidth]{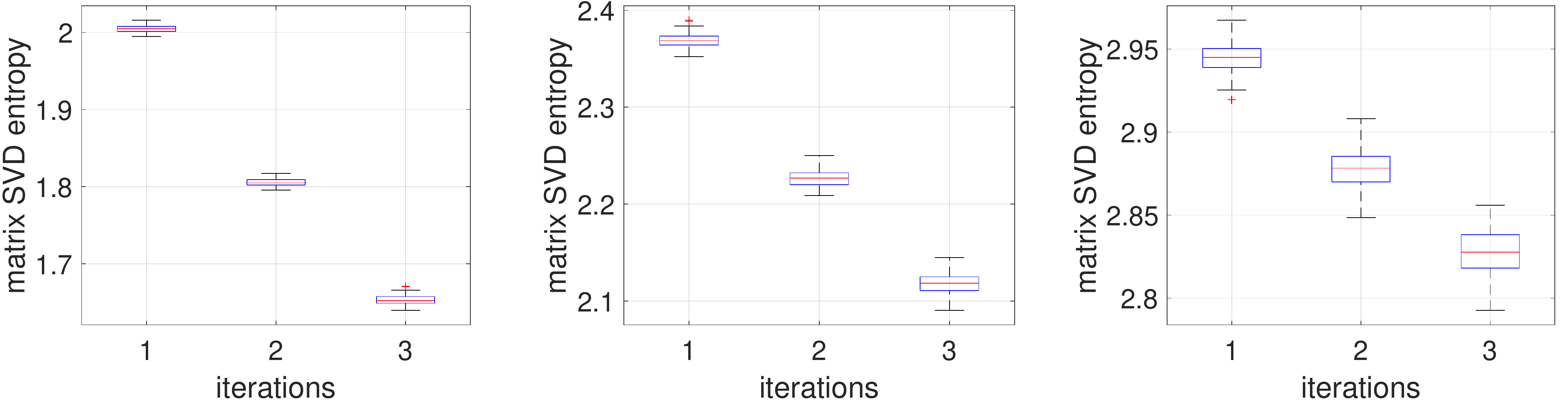}
\end{center}
\caption{WSF estimation error and matrix SVD entropy for the first, second, and third iteration. Left to right: the noise levels are 30 dB, 20 dB, and 10 dB respectively. Top to bottom: the estimation errors, the estimated WSFs and the SVD entropies of the first, second and third WSFs.}
\label{fig:Iterations}
\end{figure}

In order to study the performance of our algorithm for the change point detection task, we consider a detected change point as a true positive if it is within $\pm1$ cycle of the ground truth. Similarly, we define false positive and false negative. The F1 score is reported in Table 1. Upon the true positive, we computed the RMSE of the difference between the ground truth and the estimated change point, and report the results also in Table 1. The result suggests that our algorithm is robust to various noise levels, and the F1 score clearly improves with the iterations.
\begin{table}[]
\begin{center}
\begin{tabular}{ll|l|l|l|}
\cline{3-5}
                                                                       &         & 1 iter & 2 iter & 3 iter \\ \hline
\multicolumn{1}{|l|}{\multirow{3}{*}{\rotatebox[origin=c]{90}{30 dB}}} & F1 score& 0.5    & 1      & 1 \\ \cline{2-5}
\multicolumn{1}{|l|}{}                                                 & RMSE 1  & 0.002  & 0.0007 & 0.0002 \\ \cline{2-5}
\multicolumn{1}{|l|}{}                                                 & RMSE 2  & 0.172  & 0.005  & 0.0044 \\ \hline \hline
\multicolumn{1}{|l|}{\multirow{3}{*}{\rotatebox[origin=c]{90}{20 dB}}} & F1 score& 0.5    & 1      & 1 \\ \cline{2-5}
\multicolumn{1}{|l|}{}                                                 & RMSE 1  & 0.002  & 0.0007 & 0.0002 \\ \cline{2-5}
\multicolumn{1}{|l|}{}                                                 & RMSE 2  & 0.0172 & 0.005  & 0.0044 \\ \hline \hline
\multicolumn{1}{|l|}{\multirow{3}{*}{\rotatebox[origin=c]{90}{10 dB}}} & F1 score& 0.4981 & 1      & 1 \\ \cline{2-5}
\multicolumn{1}{|l|}{}                                                 & RMSE 1  & 0.002  & 0.0007 & 0.0003 \\ \cline{2-5}
\multicolumn{1}{|l|}{}                                                 & RMSE 2  & 0.0173 & 0.0049 & 0.0044 \\ \hline
\end{tabular}
\end{center}
\caption{F1 score and RMSEs for change point detection task.}
\end{table}

\subsection{Voice (/a/ vowel and concatenated vowels)}
The real-world signal is a recording from the Saarbruecken Voice Database \cite{Saarbruecken}. It corresponds to a sustained /a/ vowel from a 22-year-old healthy female speaker (recording session 62) of the type ``low-high-low'', meaning there is a change on its pitch. The spectrogram of this signal is shown on the top left panel of Fig. \ref{fig:Voz}, where the detected ridge is superimposed as a red curve. Besides the change of the pitch, we can observe an oscillation of the IF on the final third of the signal.
The warped and demodulated signal, along with its corresponding spectrogram, is shown on the second row of Fig. \ref{fig:Voz}. The WSFs extracted from the warped and demodulated signal (one cycle per matrix row) are shown on the second row of Fig. \ref{fig:Voz}, along with the results of the clustering. Two clusters are found for this signal, and the estimated WSFs are shown at the right column of Fig. \ref{fig:Voz}. We also offer a zoomed version of the signal, on the transition zone around time $t = 0.75$ s, where we can see that the WSF change is associated with an increase in the frequency. For a comparison purpose, we show a zoomed version of the warped signal, which makes clear that the WSF has changed, but the warped frequency remains almost constant.

This example suggests an interesting fact about the dynamics of the sustained vowels when there is a change in the pitch. Intuitively, one might think that a change in the pitch would only produce a compression or dilation of the WSF since it is the same vowel that is uttered. However, we also see a change in the morphology of WSF. This is not surprising since the pitch change depends on a muscle contraction of the phonating system, which consequently changes the WSF.

\begin{figure}[t]%[h]
\begin{center}
\includegraphics[width=\columnwidth]{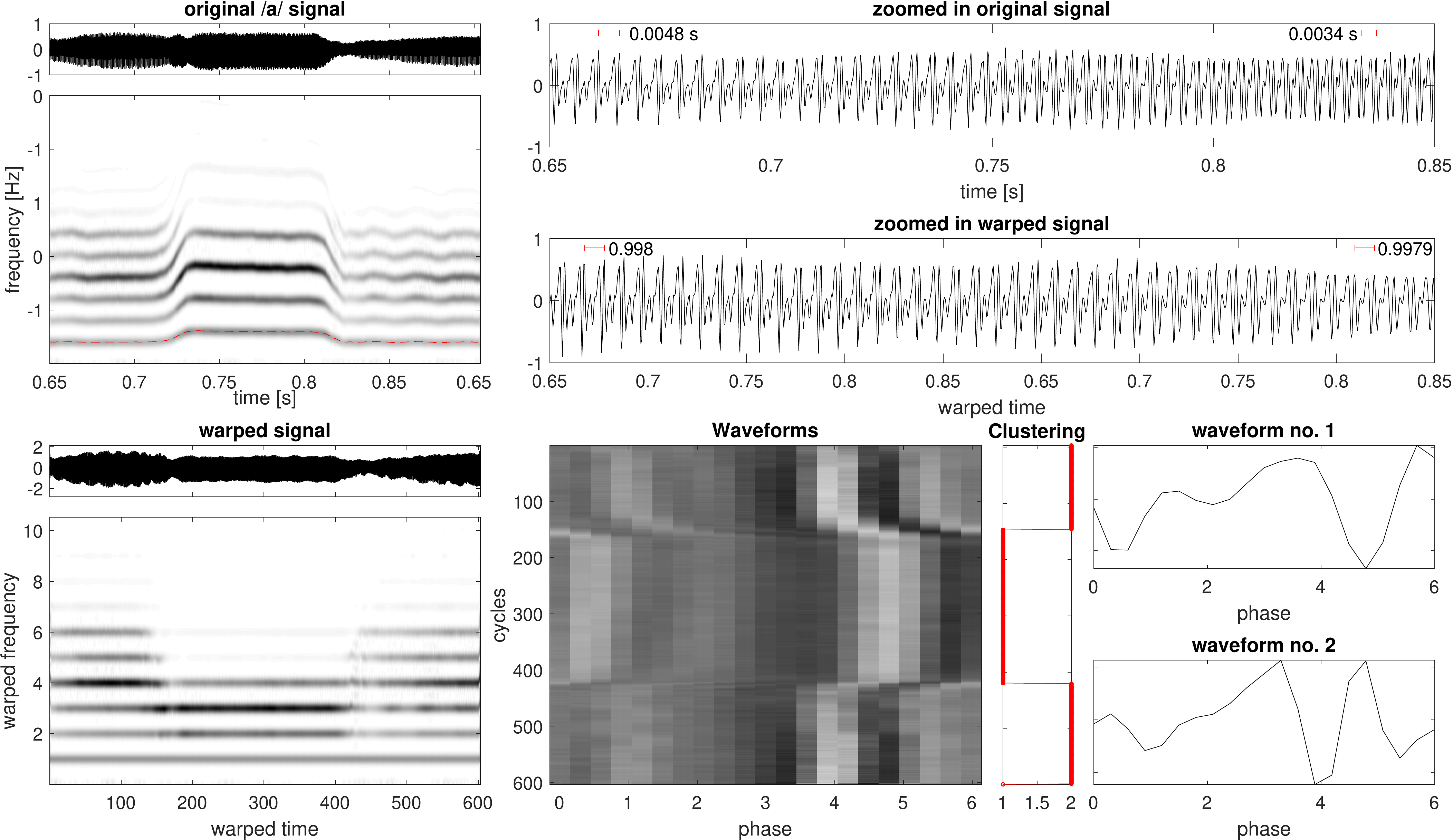}
\end{center}
\caption{Voice signal from \cite{Saarbruecken}. Top left: the original signal and its spectrogram with the detected ridge superimposed as the red curve. Top right: the zoomed in signal for a better visualization. Bottom left: the warped and demodulated signal and its spectrogram. Bottom middle: the data matrix of the extracted cycles from the warped and demodulated voice signal (the darker point indicates the larger value). Bottom right: results of the proposed algorithm.}
\label{fig:Voz}
\end{figure}

\begin{figure}[t]
\begin{center}
\includegraphics[width=\columnwidth]{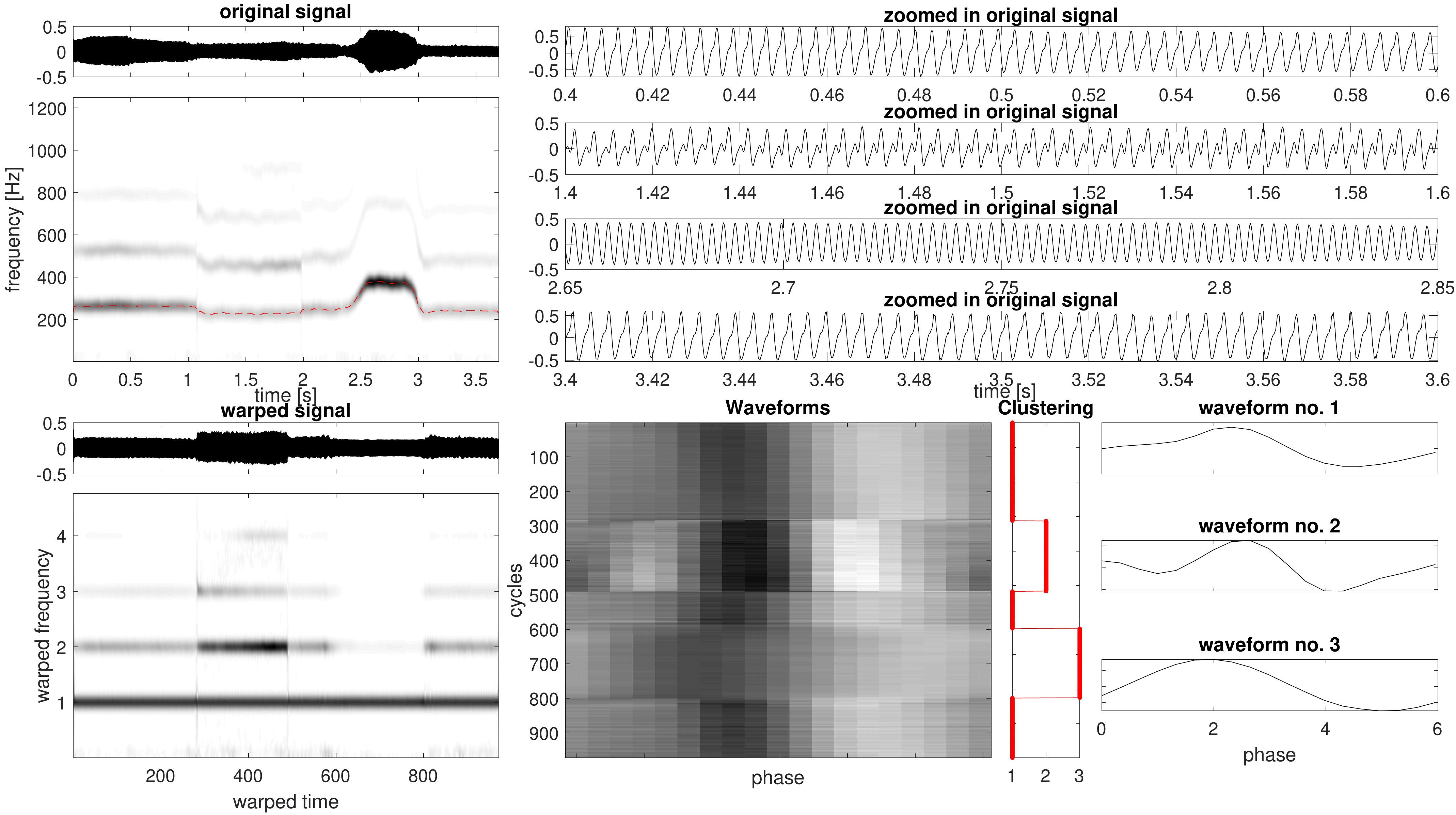}
\end{center}
\caption{Concatenated voice signal from \cite{Saarbruecken}. Top left: the original signal and its spectrogram with the detected ridge superimposed as the red curve. Top right: the zoomed in signal for a better visualization. Bottom left: the warped and demodulated signal and its spectrogram. Bottom middle: the data matrix of the extracted cycles from the warped and demodulated voice signal (the darker point indicates the larger value). Bottom right: results of the proposed algorithm.}
\label{fig:Voz_CONCATENATED}
\end{figure}

Next, we create a more challenging example by concatenating different signals from the same database. Taking the recordings of a 19-year-old healthy female speaker (recording session 8), we concatenate a ``neutral'' /i/ vowel, a ``low'' /u/ vowel, and a ``low-high-low'' /i/ vowel. We test the capabilities of our proposal to detect different vowels within a given signal. The results are shown in Fig. \ref{fig:Voz_CONCATENATED}. We see that three different waveforms are detected, including the low or neutral /i/ (no differences between them), the low /u/, and the high /i/. When the speaker is asked to utter the low-high-low vowel, the low level is the same as the neutral level. The major difference can be found when uttering the high vowel, which again comes from the configuration change of the phonating system, and hence the WSF.

\subsection{Arterial Blood Pressure}
We consider here an ABP signal from the `a70' recording in the 2010 CinC/PhysioNet Challenge database \cite{Moody2010physionet,Physionet}. We also show a simultaneous ECG recording, which makes clear the presence of an arrhythmia.
In this example, we perform 5 iterations, which is determined by the SVD entropy.
See Fig. \ref{fig:ABP} for the results. The warped and demodulated signal, and the spectrograms are shown on the first two rows (the detected ridge is shown in red color).
We see that there is a dramatic variation in the IF
(second row, left column) around 60 to 90 s. This deviation might cause the phase shift of the WSF before and after. This fact can be seen in the data matrix shown on the second row, middle column.

On the third row, we present the results with three iterations, where the phase shifting problem is clearly solved. Specifically, those cycles around warped time 90 s to 100 s and those cycles before warped time 60 s belong to the same cluster.
The clustering results reveal 5 clusters, detecting a WSF that presents a strong ``second'' peak within a cycle, which represents arrhythmia with premature atrial contraction. This arrhythmia is clearly observed on the zoomed in ECG shown at the top of the figure.

On the fourth row, we present the result after five iterations. The decreasing SVD entropy reveals that the obtained clustering is better than before, although the estimated WSFs do not significantly differ from the previous case, and the arrhythmia is captured. On the fifth row, we present the synchronized data matrix. Here, the clustering performance is better, and the four clusters are representatives of the actual WSFs present in the signal.

\begin{figure}[t]
\begin{center}
\includegraphics[width=\columnwidth]{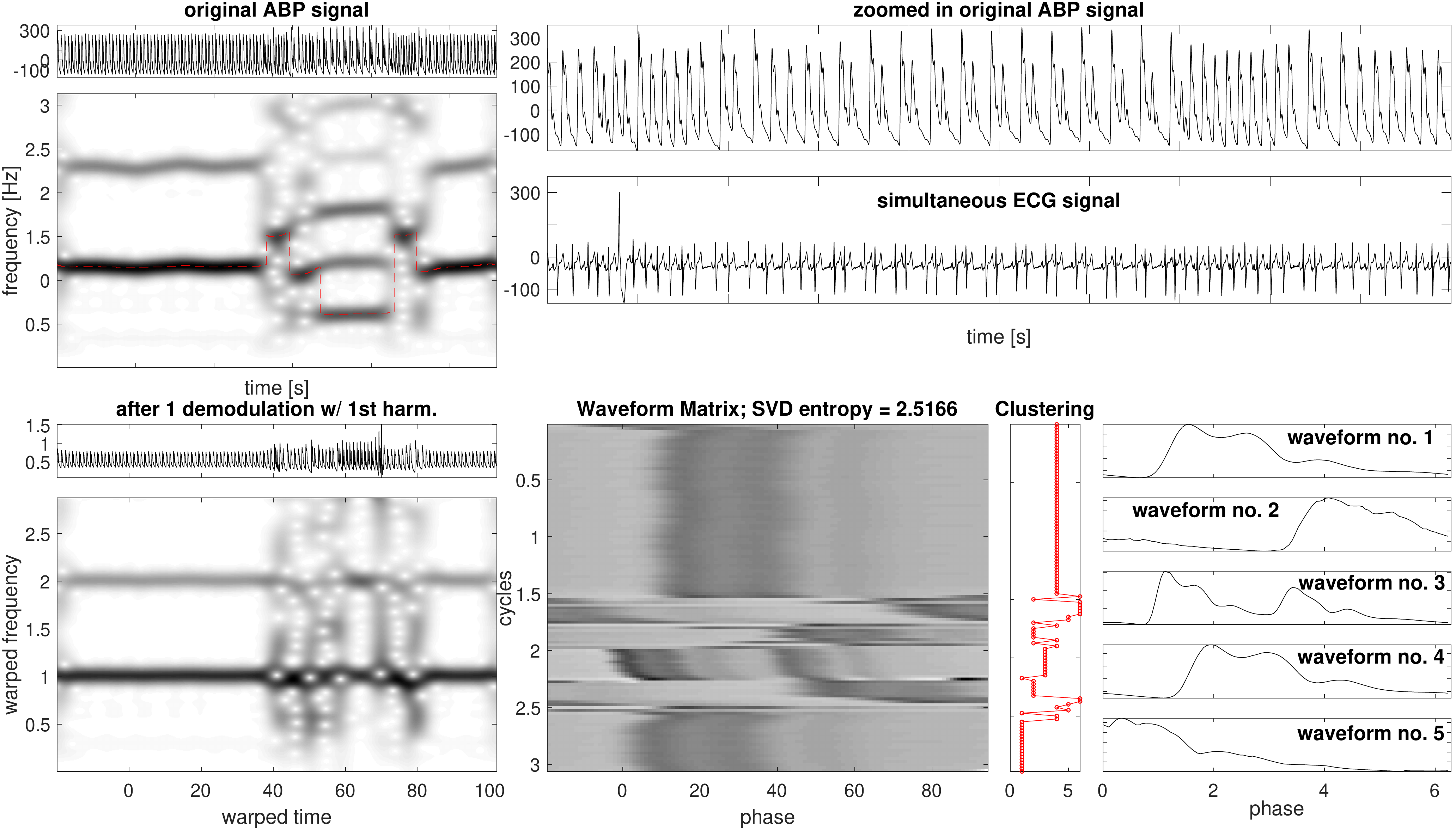}
\includegraphics[width=\columnwidth]{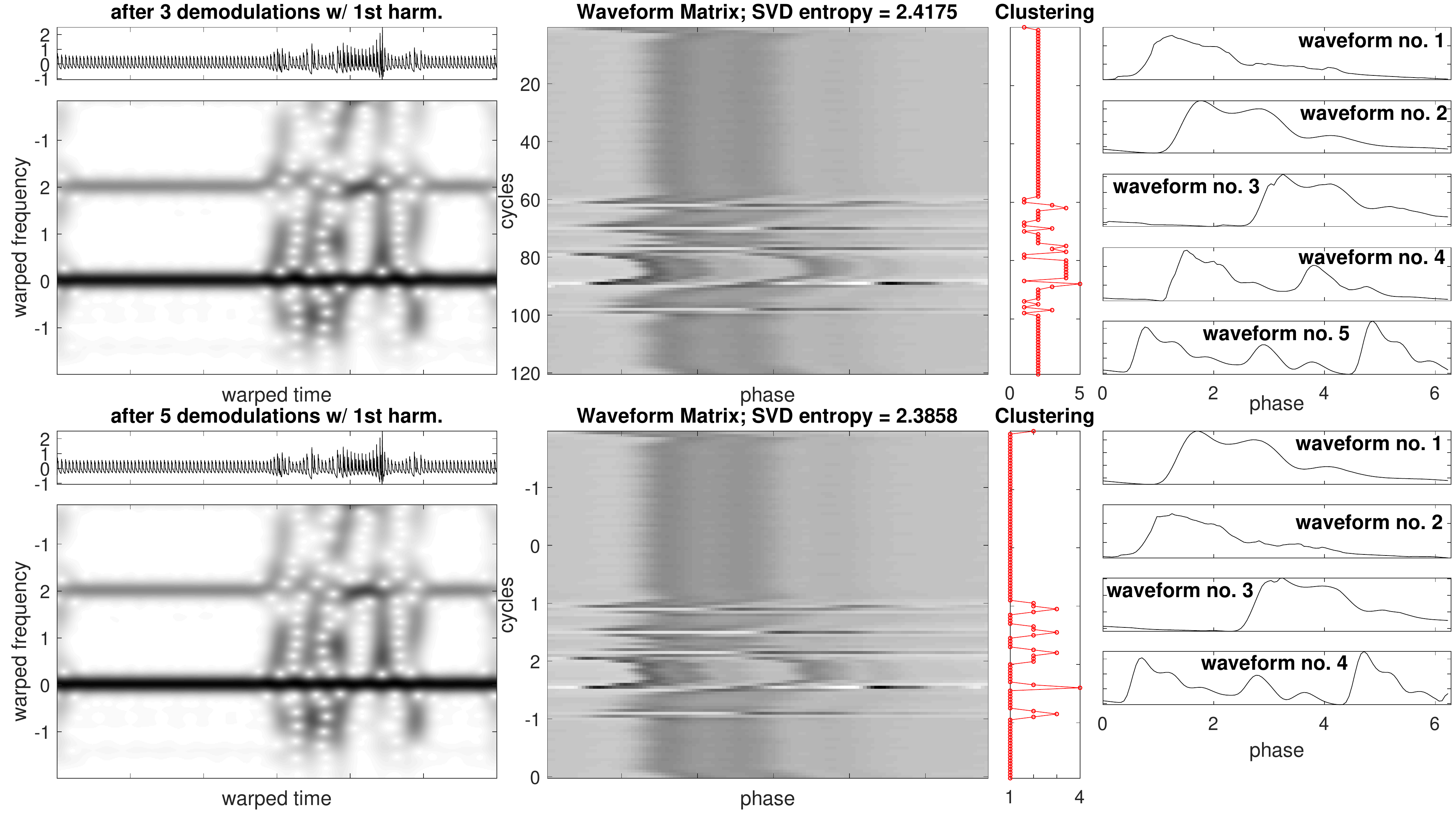}
\includegraphics[width=\columnwidth]{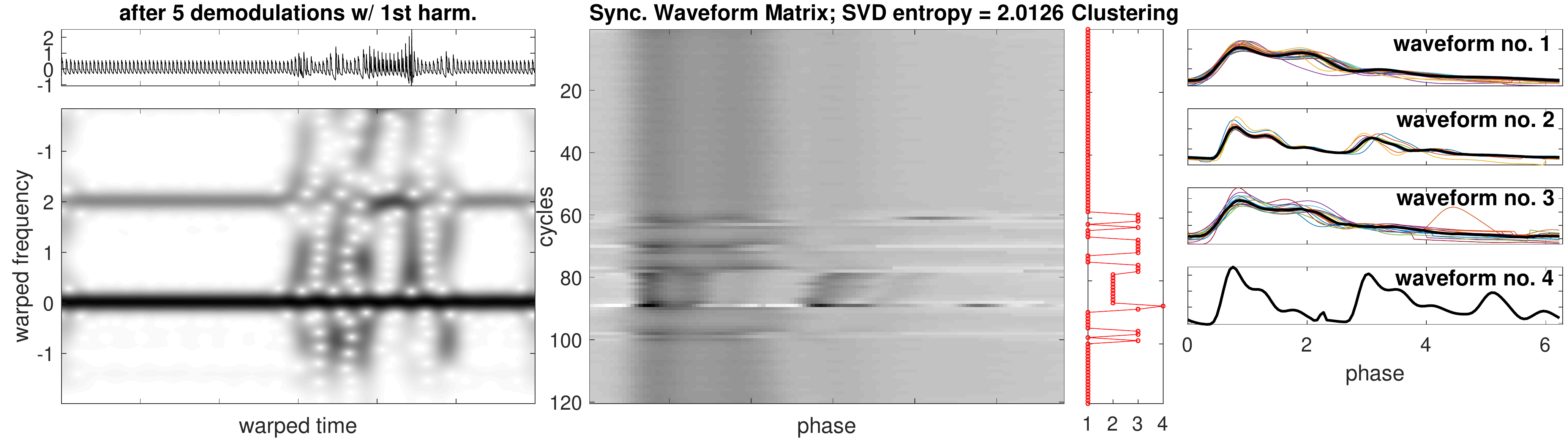}
\end{center}
\caption{An ABP signal with arrhythmia from \cite{Moody2010physionet}. Top row: the original ABP and ECG signals and the spectrogram of the ABP signal with the detected ridge superimposed as a red curve. Second row, from left to right: the warped and demodulated signal and the associated spectrogram, the data matrix with the extracted cycles (the darker the gray color, the larger the value), and the clustering results and the associated estimated WSFs. Third row: same as the second row but after three iterations. Fourth row: same as before but after five iterations. Fifth row: same as the fourth row but with the synchronization.}
\label{fig:ABP}
\end{figure}

\begin{figure}[h]
\begin{center}
\includegraphics[width=\columnwidth]{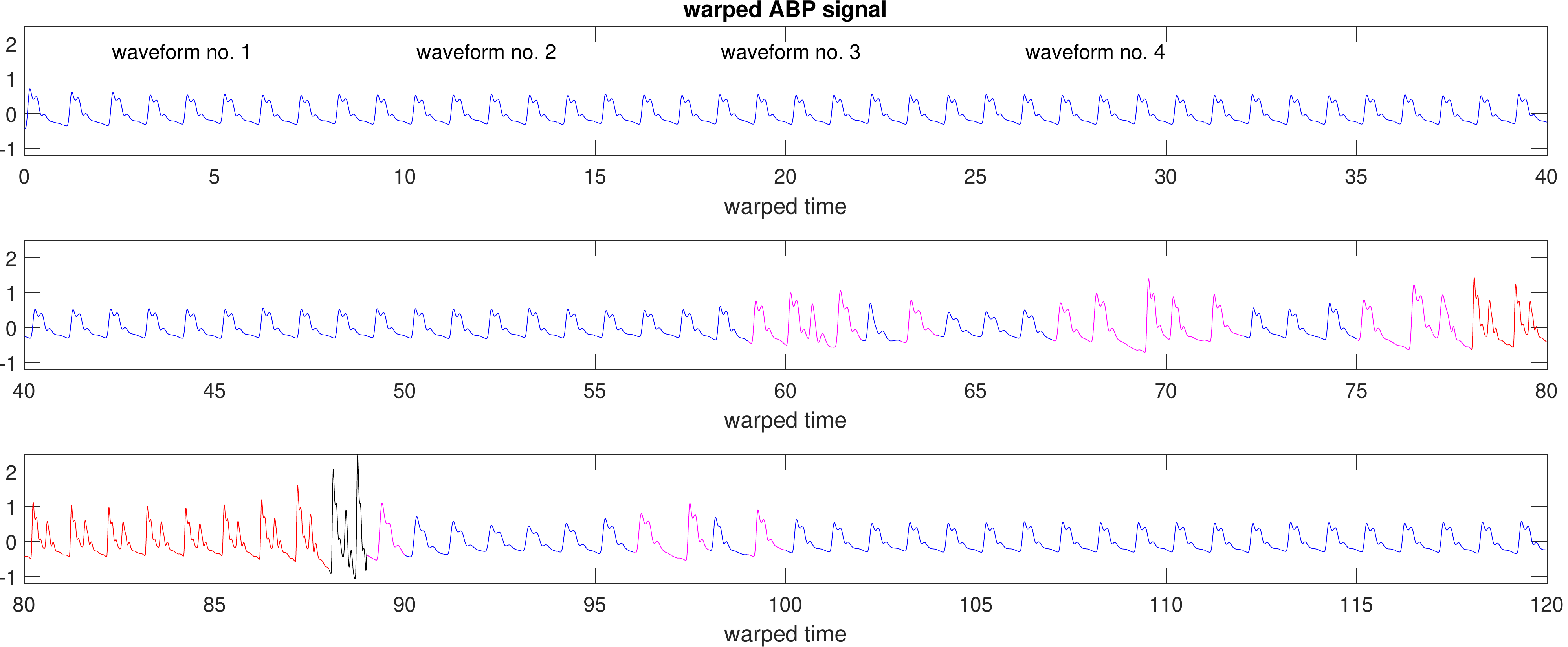}
\end{center}
\caption{The warped ABP signal with arrhythmia from \cite{Moody2010physionet} with cycles colored according the obtained clustering.}
\label{fig:ABP_WARPED}
\end{figure}

Overall, this procedure might serve as an arrhythmia detector.
To demonstrate this potential, Fig. \ref{fig:ABP_WARPED} shows the warped ABP signal, which is colored according to the clustering of the synchronized data matrix after five iterations. The detected WSFs are in agreement with the arrhythmia determined from the simultaneously recorded ECG. The `normal' WSF (the most frequent one) corresponds to the waveform no. 1. The `double-peak' WSF caused by the arrhythmia is caught by the waveform no. 2. An atypical WSF (appearing only once) is represented by the waveform no. 4, and the waveform no. 3 represents the remaining WSFs.

\subsection{Electrocardiogram}
Next, we consider an ECG signal (recording 213) from the MIT-BIH Arrhythmia Database \cite{MIT-BIH}, which presents some premature atrial  beats, ventricular fusion beats and  premature ventricular beats.

An interesting fact about ECG is its spectral distribution. In the spectral domain, the energy of the higher harmonics is relatively high, since visually it ``looks like'' the differentiation of a delta measure convolved with a Gaussian function. Usually, the second harmonic is more dominant than the fundamental component. See an example and its spectrogram shown in the first row of Fig. \ref{fig:ECG}, where the detected ridge for the fundamental component is shown in the red color. Note that in a noisy context, the dominant harmonic will have a better SNR, and hence the phase can be better estimated. On the other hand, we would speculate that the harmonic with most energy carries the most information of the signal.
Thus, we would consider the second harmonic, that is, $\ell=2$, in the algorithm.

The warped and demodulated ECG with the first harmonic and its spectrogram are shown in the second row of Fig. \ref{fig:ECG}. We could see that the IF of the fundamental component has been ``warped'' to be close to a flat line, but the IFs of other harmonics fluctuate ``significantly''. In this setup, the algorithm finds 6 clusters (we only show 4 dominating estimated WSFs), and some estimated WSFs are not physiological.

The situation is rather different when we warp and demodulate with the second harmonic (third row). The spectrogram of the warped and demodulated signal is ``better'' in the sense that the IFs of all harmonics are almost flat. However, since the segmentation is impacted by the imperfect phase estimation, the algorithm still finds up to 6 clusters. We can observe small shifts from one row to the other.
The fourth row shows the results of warping and demodulation with the second harmonic, where we divided the warped time by 1.991. Here, 1.991 is chosen empirically by searching the value between 1.99 and 2.01 that minimizes the SVD entropy. This gives us a better segmentation of the signal in the sense that we have 3 WSFs with physiological meanings. A synchronization of this matrix leads to a satisfactory result, where the SVD entropy is further decreased. The clustering of this matrix retrieves only two WSFs, which correspond to normal beats and PVCs.
Finally, we present in Fig. \ref{fig:ECG_PREM_BEATS} the original version of the ECG signal, with those cycles belonging to the cluster no. 2 in red. This suggests the potential of our algorithm to identify PVCs.

\begin{figure}[bht!]
\begin{center}
\includegraphics[width=\columnwidth]{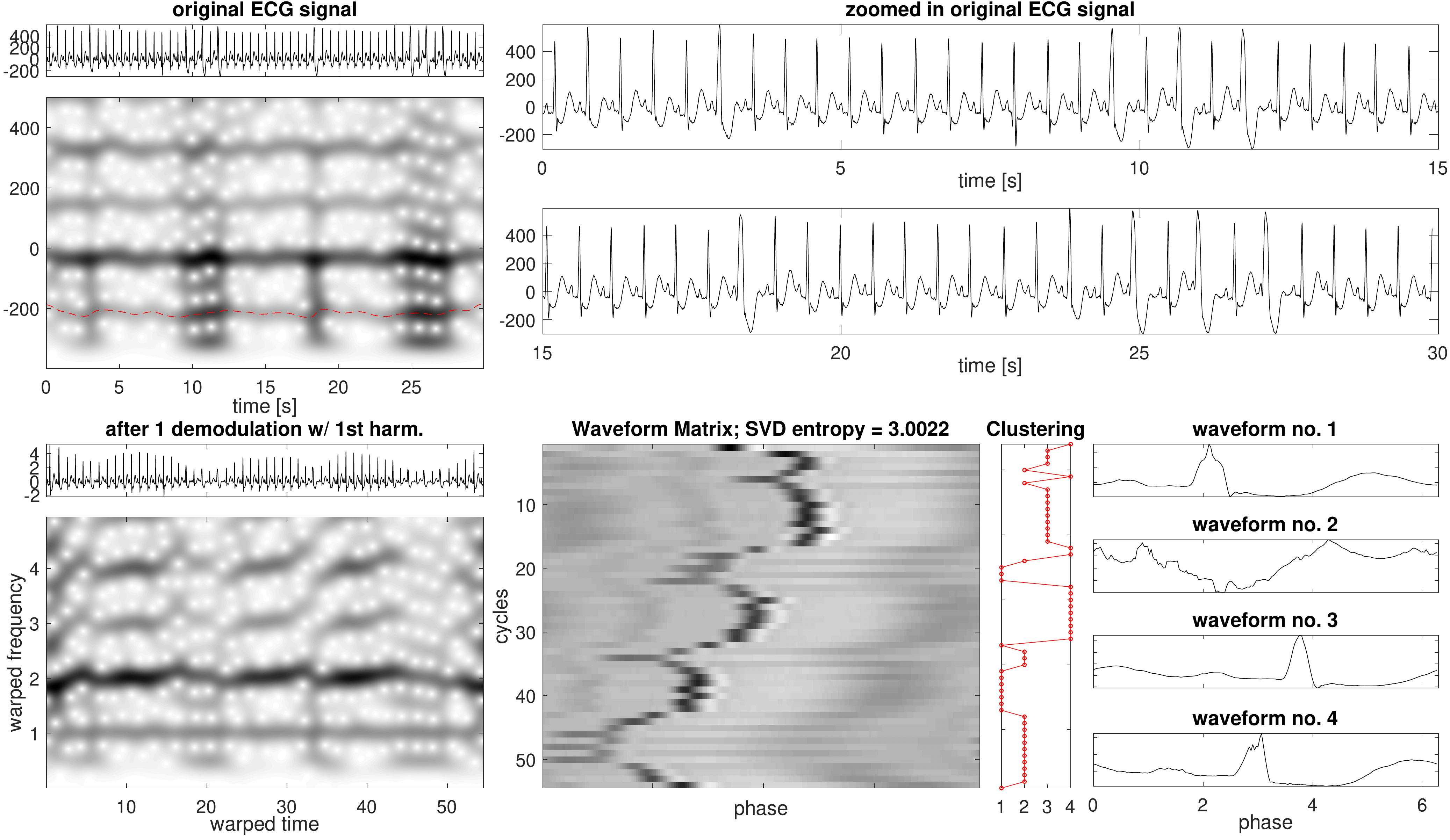}
\includegraphics[width=\columnwidth]{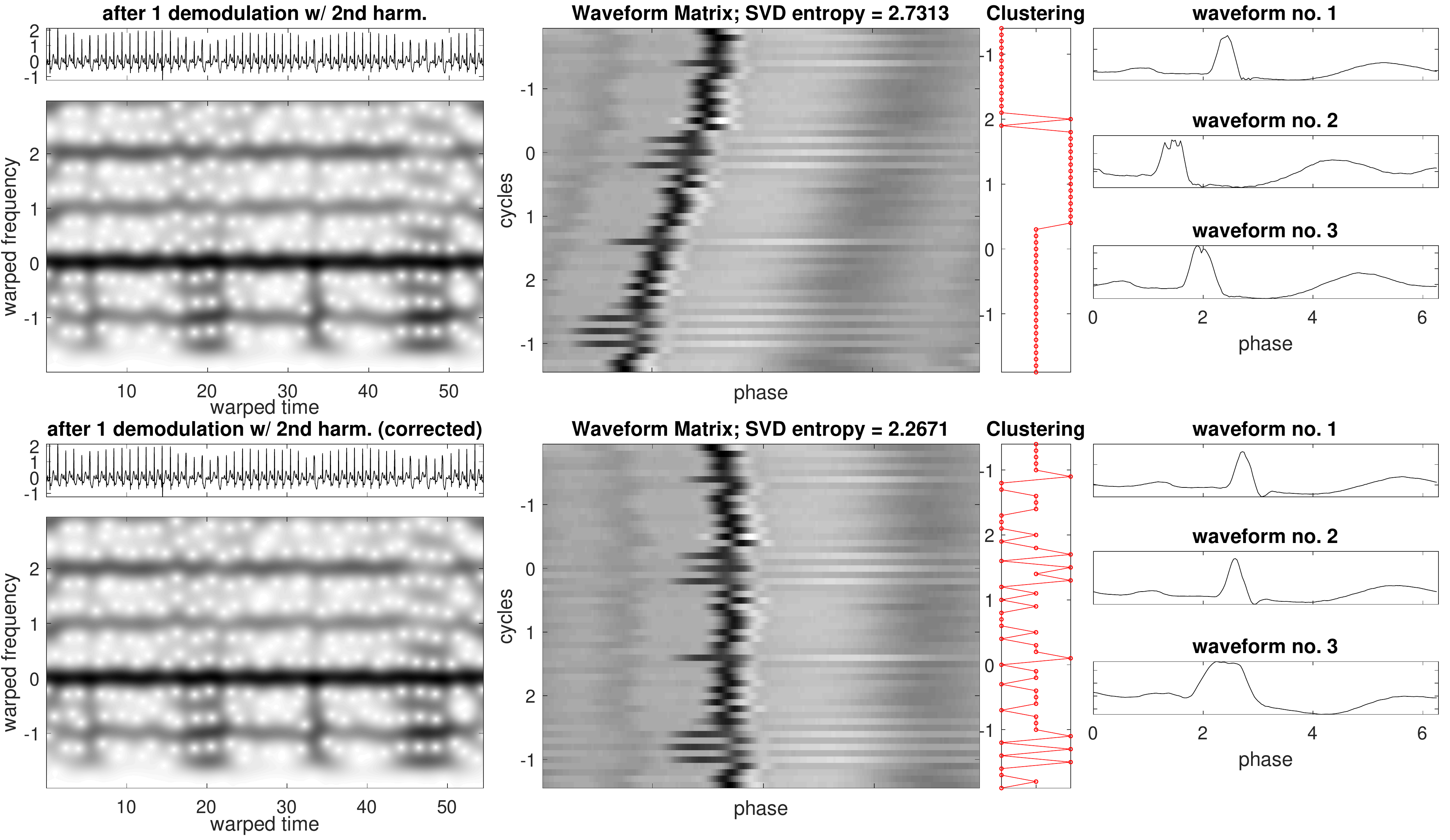}
\includegraphics[width=\columnwidth]{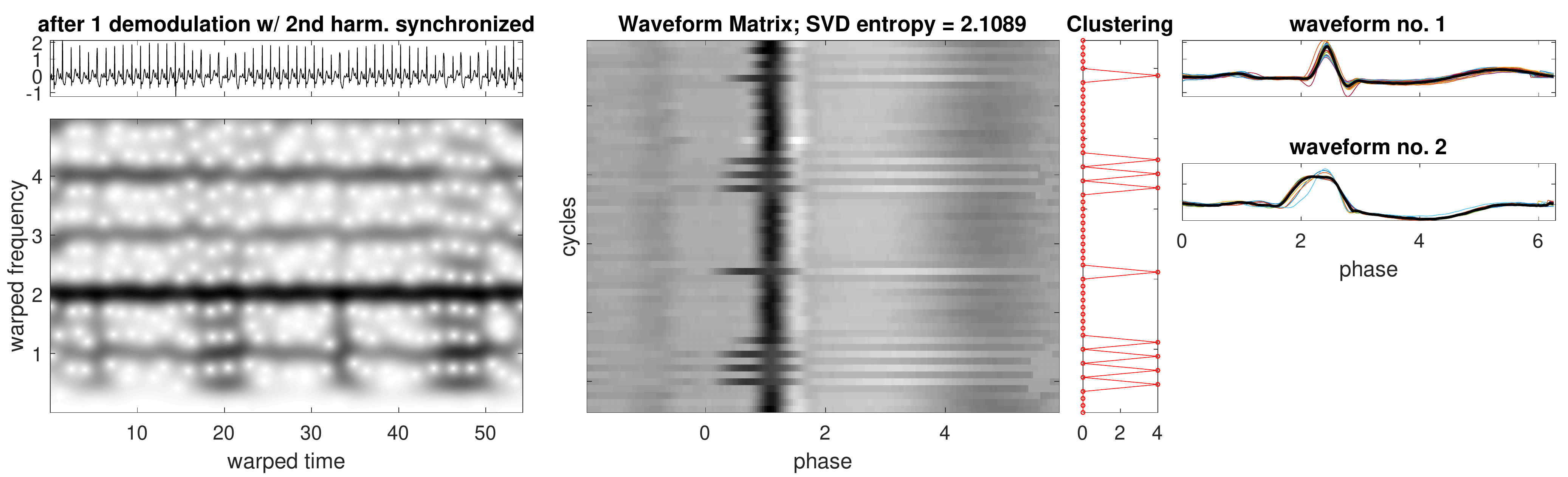}
\end{center}
\caption{An ECG with arrhythmia from \cite{MIT-BIH}. Top: original signal, the spectrogram and detected ridges (red and blue). Second row, from left to right: warped signal and modulus of its STFT; extracted cycles from the once warped ECG (the darker the gray the larger the value); results of the clustering; estimated WSFs (clusters medians). Third row: same as the second row but warped and demodulated with the second harmonic. Fourth row: same as before, but correcting the segmentation. Fifth row: same as before, but the data matrix is synchronized.}
\label{fig:ECG}
\end{figure}

\begin{figure}[bht!]
\begin{center}
\includegraphics[width=\columnwidth]{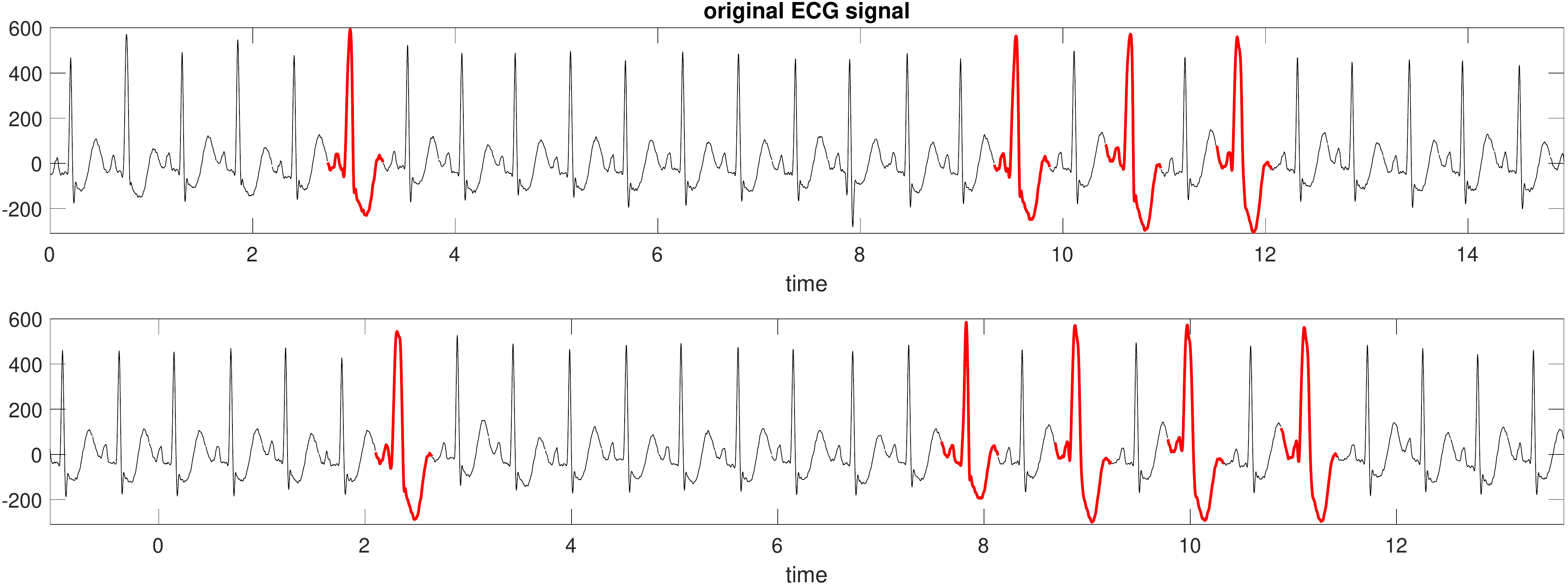}
\end{center}
\caption{ECG signal from \cite{MIT-BIH} with PVCs marked in red.}
\label{fig:ECG_PREM_BEATS}
\end{figure}

\section{Accelerometry data}
The final example is the accelerometer signal from the Indiana University Walking and Driving Study (IUWDS) \cite{strkaczkiewicz2016automatic,fadel2019differentiating}, which is available at \url{https://physionet.org/content/accelerometry-walk-climb-drive/1.0.0/}. The results are presented in the Supplemental Material.

The database contains raw accelerometry data collected from four body locations (left wrist, left hip, left and right ankle), from healthy subjects while walking on the flat floor, ascending and descending stairs, and driving a vehicle.
Following \cite{fadel2019differentiating}, we consider the right ankle \emph{acceleration vector magnitude} signal, which is defined as $x(t) = (a^2(t)+b^2(t)+c^2(t))^{1/2}$, where $a,b$ and $c$ stand, respectively, for the acceleration on the $x$, $y$ and $z$ axes. We chose a segment such that the subject was first walking, and then ascending and descending the stairs. Below we show how our algorithm is able to detect this three different tasks, in a completely unsupervised manner.

As with the previous ECG example, we take $\ell=2$. The original signal, the spectrogram, and three zoomed in segments are shown on the first row of Fig. \ref{fig:IMU}. The second row shows the warped signal along with its spectrogram and the clustering results, where only two WSFs are detected. Increasing the number of iterations leads to a better result with a decreasing SVD entropy. The synchronized matrix after 8 iterations leads to 3 different WSFs reflecting three different tasks. The warped signal, colored according to its labels, is presented on Fig. \ref{fig:IMU_warped}. The WSFs corresponding to walking (in the red color) are perfectly detected, and one cycle from the ascending stairs activity is mislabeled as the descending activity. This mentioned cycle is evidently different from the rest of its cluster, so the clustering result is not surprising. The WSFs corresponding to descending stairs are shown in the magenta color.

\begin{figure}[t]%[bht!]
\begin{center}
\includegraphics[width=\columnwidth]{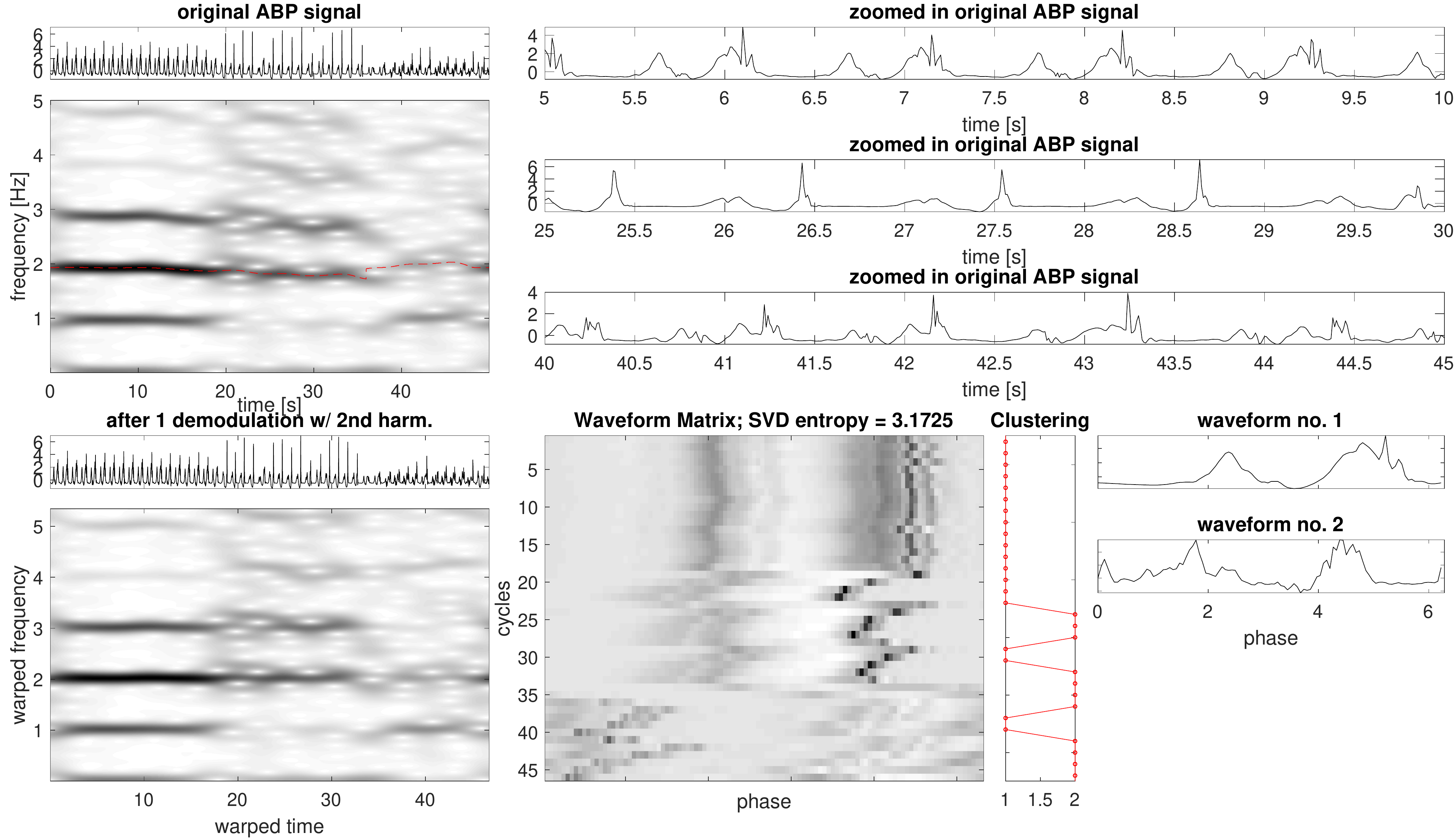}
\includegraphics[width=\columnwidth]{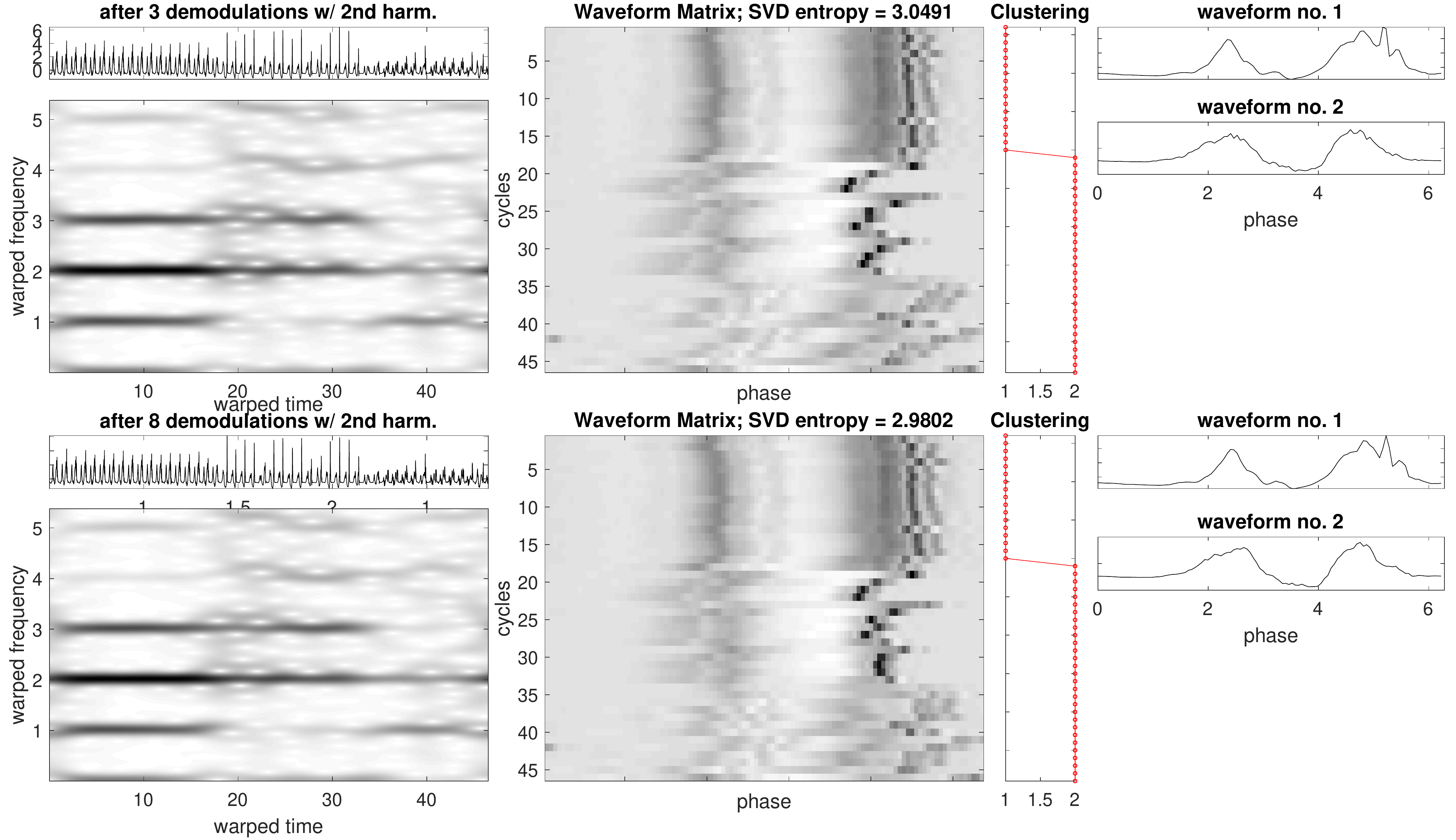}
\includegraphics[width=\columnwidth]{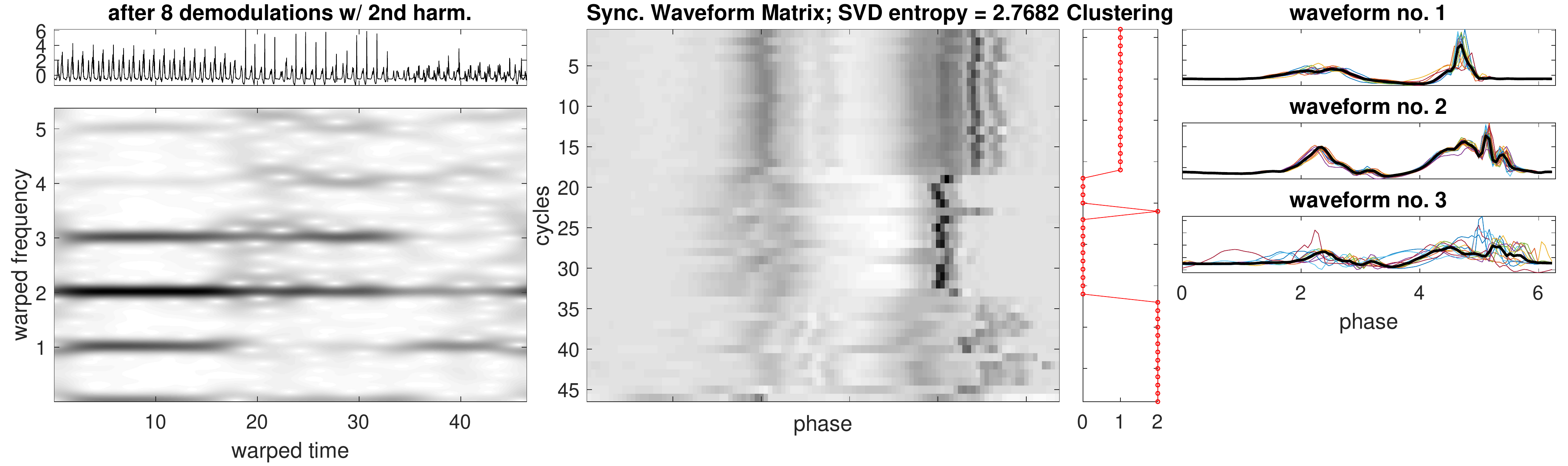}
\end{center}
\caption{Accelerometry from \cite{strkaczkiewicz2016automatic}. Top: original signal, the spectrogram and the detected ridge of the second harmonic (red). Second row, from left to right: warped signal and the spectrogram, the extracted cycles from the warped and demodulated accelerometer signal (darker pixels indicate larger values), and the estimated WSFs. Third row: same as the second row but with 3 iterations with the 2nd harmonic. Fourth row: same as the second row but with 8 iterations. Fifth row: same as before, but the data matrix is synchronized.}
\label{fig:IMU}
\end{figure}

\begin{figure}[t]
\begin{center}
\includegraphics[trim=0 2 0 11, clip,width=\columnwidth]{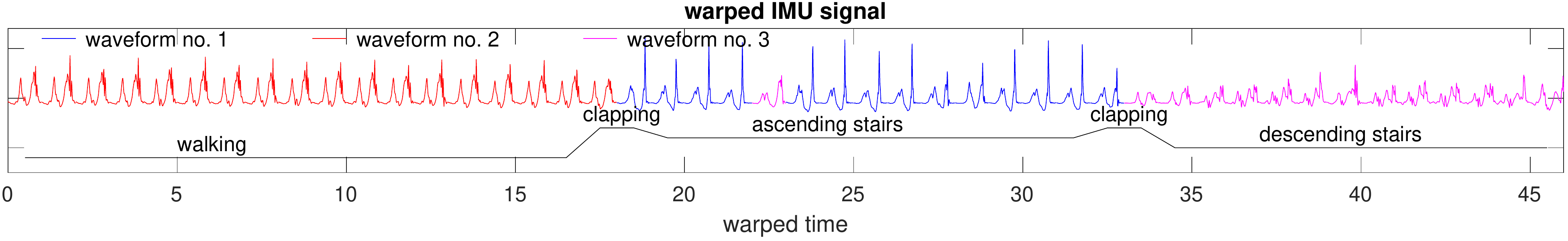}
\end{center}
\caption{Warped accelerometry data with different WSFs shown in different colors. The ground truth is shown, with clapping activity to mark transitions.}
\label{fig:IMU_warped}
\end{figure}

\section{Theoretical Support}\label{sec:theory}

In this section, we provide a theoretical guarantee of the performance of the proposed iterative algorithm. The proof depends on a series of results that might be interesting on its own. To this end, a critical step is spelling out the constants hidden in the asymptotic approximation error terms.
We need the following lemma.

\begin{lemma}\label{lemma h truncation bound}
Let $h(t)$ be a nonzero, real and even Schwartz function such that the window $h$ satisfies $\mbox{supp}\{\hat{h}\} \subseteq[-H,H]$ for $H>0$. For $\beta\geq 0$ and $\xi\in \mathbb{R}$, we can take a large $p\in \mathbb{N}$ so that
\[
\left|\int_{\beta}^\infty  h(x)e^{-i2\pi\xi x}dx\right|\leq  \frac{\sqrt{H}C_{p}}{\max\{\xi,H\}(1+H\beta)^{p-1}}\,,
\]
where $C_{p}>0$ depends on $p$ and $h_0$, which is defined as $h_0(t):=h(t/H)/\sqrt{H}$.
\end{lemma}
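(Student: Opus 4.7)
The natural first move is to scale out $H$. Substituting $u = Hx$ and using $h_0(u) = h(u/H)/\sqrt H$, whose Fourier transform is supported in $[-1,1]$, I would rewrite
\[
\int_\beta^\infty h(x)e^{-i2\pi\xi x}\,dx \;=\; \frac{1}{\sqrt H}\int_\alpha^\infty h_0(u)e^{-i2\pi\eta u}\,du
\]
with $\eta:=\xi/H$ and $\alpha:=H\beta$. Since $H/\max\{\xi,H\}=1/\max\{\eta,1\}$, the target inequality is equivalent to
\[
\biggl|\int_\alpha^\infty h_0(u)e^{-i2\pi\eta u}\,du\biggr| \;\le\; \frac{C_p}{\max\{\eta,1\}(1+\alpha)^{p-1}},
\]
with a constant depending only on $p$ and on $h_0$. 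The Fourier-support hypothesis is used solely to fix the rescaling; the remaining work is to exploit Schwartz decay of $h_0$ together with oscillation.

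I would then split into two regimes. When $\eta\le 1$, the factor $\max\{\eta,1\}^{-1}$ is trivial, so I just need the $(1+\alpha)^{-(p-1)}$ decay. Discarding the unimodular exponential and using a Schwartz seminorm bound $|h_0(u)|\le A_p/(1+u)^p$, a direct tail integration from $\alpha$ gives the required rate. When $\eta>1$, I would perform one integration by parts with $dv=e^{-i2\pi\eta u}\,du$, producing a $1/(i2\pi\eta)$ prefactor on both the boundary contribution $h_0(\alpha)e^{-i2\pi\eta\alpha}$ and the remainder $\int_\alpha^\infty h_0'(u)e^{-i2\pi\eta u}\,du$. Applying the Schwartz decay of $h_0$ at $\alpha$ to the boundary term, and a tail estimate $\int_\alpha^\infty |h_0'(u)|\,du\le A'_p/(1+\alpha)^{p-1}$ to the remainder, yields bounds of the form $C/(\eta(1+\alpha)^{p-1})$ in either piece.

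Combining the two regimes and undoing the substitution recovers the stated inequality. No genuine obstacle arises; the only care needed is in the bookkeeping: $C_p$ must absorb the $p$-th Schwartz seminorms of $h_0$ and $h_0'$, so that it depends on $p$ and the fixed reference window $h_0$ but not on $H$, $\xi$, or $\beta$. The compactness of $\mbox{supp}\{\hat h\}$ does not enter the estimate beyond the initial normalization, but it matches the hypotheses used in the STFT analysis that follows.
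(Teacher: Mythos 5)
Your proposal is correct and follows essentially the same route as the paper: rescale to the reference window $h_0(t)=h(t/H)/\sqrt H$, use one integration by parts (gaining the $1/\xi$ factor) with Schwartz decay of $h_0$ and $h_0'$ for the boundary and remainder terms when the frequency is large, and a direct tail bound for the small-frequency regime, combining the two via the $\max\{\xi,H\}$ factor. The only cosmetic difference is that you change variables at the outset while the paper substitutes $h(x)=\sqrt H\,h_0(Hx)$ inside the integration by parts; the bookkeeping and the dependence of $C_p$ on $p$ and $h_0$ are identical.
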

\begin{proof}
To standardize the impact of $H$, consider $h_0(t):=h(t/H)/\sqrt{H}$ so that $\|h_0\|_2=\|h\|_2$. Clearly, $h_0$ is a real and even Schwartz function such that $\mbox{supp}\{\widehat{h_0}\}\subset [-1,1]$. By the integration by parts, we have
\begin{align*}
\int_{\beta}^\infty h(x)e^{-i2\pi\xi x}dx=&\frac{i}{2\pi\xi}\left[-\sqrt{H}h_0(H\beta)e^{-2\pi i \xi \beta}\right. \\
&\left.-\int_\beta^\infty \sqrt{H}Hh_0'(Hx)e^{-i2\pi \xi x}dx\right]\,,
\end{align*}
which by the definition of a Schwartz function is bounded by
$\frac{\sqrt{H}C'_{p}}{\xi(1+H\beta)^{p-1}}$,
where $p$ is any large integer and $C'_{p}$ depends on $p$ and $h_0$. Indeed, for any large $p$, there exist constants $K_p>0$ and $K_p'>0$ so that $|h_0(t)|\leq K_p/(1+|t|)^p$ and $|h'_0(t)|\leq K'_p/(1+|t|)^p$. On the other hand, when $|\xi|$ is small, we would bound $\left|\int_{\beta}^\infty h(x)e^{-i2\pi\xi x}dx\right|$ directly by $\sqrt{H}\int_{\beta}^\infty |h_0(Hx)|dx$, which is bounded by $\frac{\sqrt{H}C''_{p}}{(1+H\beta)^{p-1}}$, where $C''_{p}$ depends on $p$ and $h_0$. Thus, we achieve the claimed bound with $C_{p}=\max\{C'_{p},C''_{p}\}$.

\end{proof}

The following lemma is a direct extension of those shown in \cite{Daubechies2011synchro}.

\begin{lemma}\label{lemma: main taylor expansion}
Let $x(t)$ be a signal satisfying \texttt{WSFv3} with the $\ell$-th harmonic being the dominant harmonic and $r\geq 0$. Let $h(t)$ be a real and even Schwartz function such that the window $h$ satisfies $\mbox{supp}\{\hat{h}\} \subseteq[-H,H]$ for $H>0$, and denote $m_k:=\int |h(t)||t|^k dt$ and $m'_k:=\int |h'(t)||t|^k dt$ for $k=0,1,2,\ldots$. Suppose $\phi'_1(t)>H$ and $\phi'_{j+1}-\phi'_j>2H$, for $j=1,2,\ldots$.
Then, we have
\begin{align*}
V_{x}^{(h)}(t,\xi)&\,=\frac{1}{2}\sum_{l=1}^\infty B_l(t)e^{i2\pi(\phi_l(t))}\hat{h}(\xi- \phi'_l(t)) \\
&+C_0(t)\epsilon +E_0(t)\\
\partial_\xi V^{(h)}_x(t,\xi)&\,=\frac{1}{2}\sum_{l=1}^\infty B_l(t)e^{i2\pi(\phi_l(t))}\hat{h}'(\xi- \phi'_l(t))\\
&+C_1(t)\epsilon+E_1(t)\\
\partial_t V^{(h)}_x(t,\xi)&\,=-i\pi\sum_{l=1}^\infty B_l(t)e^{i2\pi(\phi_l(t)-\xi t)}(\xi- \phi'_l(t))\times\\&
\hat{h}(\xi- \phi'_l(t))+C_0(t) \epsilon \xi +C_2(t)\epsilon+E_2(t)\,,
\end{align*}

\noindent where $C_0(t)$, $C_1(t)$ and $C_2(t)$ are complex valued functions that depend on the time-varying IF and AM and satisfy
\begin{align*}
|C_0(t)|&\,\leq \frac{1}{\ell}\left(\phi'_\ell(t)m_1+\frac{Mm_2}{2}\right) \sum_{l = 1}^{\infty} c(l)\\
&+\frac{\pi}{\ell}  \left(\phi'_\ell(t)m_2+\frac{Mm_3}{3} \right)\sum_{l = 1}^{\infty} l B_l(t)\\
|C_1(t)|&\,\leq \frac{2\pi}{\ell}\left(\phi'_\ell(t)m_2+\frac{Mm_3}{2}\right) \sum_{l = 1}^{\infty} c(l)\\
&+\frac{2\pi^2}{\ell}  \left(\phi'_\ell(t)m_3+\frac{Mm_4}{3} \right)\sum_{l = 1}^{\infty} l B_l(t)\\
|C_2(t)|&\,\leq \frac{1}{\ell}\left(\phi'_\ell(t)m_1'+\frac{Mm_2'}{2}\right) \sum_{l = 1}^{\infty} c(l)\\
&+\frac{\pi}{\ell}  \left(\phi'_\ell(t)m_2'+\frac{Mm_3'}{3} \right)\sum_{l = 1}^{\infty} l B_l(t)
\end{align*}
and $E_0(t)$, $E_1(t)$ and $E_2(t)$ are complex valued functions that depend on the change points and satisfy
\begin{align*}
|E_l(t)|\leq &c_{l}
\sum_{i=1}^r\left[\sum_{l_i} \frac{ \Delta B_{l_i}(t_i)}{\max\{|\xi-\phi_{l_i}'(t)|,H\}}\right.\\
&\left.+\sum_{k_i}  \frac{B_{k_i}(t_i^+)\Delta\phi_{k_i}(t_i)}{\max\{|\xi-\phi_{k_i}'(t)|,H\} }\right]\frac{1}{(1+|t-t_i|)^{p-1}}
\end{align*}
for $l=0,1,2$, a large $p\in \mathbb{N}$, and constants $c_l>0$ that depend on $p$ and $h_0$.
\end{lemma}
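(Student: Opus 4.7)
The plan is to plug the \texttt{WSFv3} expansion into the STFT definition, Taylor-expand each harmonic's amplitude and phase locally around $t$, and isolate (i) a leading convolution-with-$\hat h$ piece, (ii) a Taylor remainder of size $\epsilon$ arising from the bulk regularity (T1)--(T3), and (iii) a jump remainder concentrated near the change points that Lemma~\ref{lemma h truncation bound} is designed to control. As is standard, the negative-frequency branches $e^{-i2\pi\phi_l(u)}$ vanish to leading order because $\hat h(\xi+\phi_l'(t))=0$ once $\phi_l'(t)>H$, so only the analytic part is analyzed.

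On each smooth interval $I_i$, write $B_l(u)=B_l(t)+B_l'(\theta_1)(u-t)$ and $\phi_l(u)=\phi_l(t)+\phi_l'(t)(u-t)+\tfrac12\phi_l''(\theta_2)(u-t)^2$ with $\theta_1,\theta_2$ between $t$ and $u$. The zeroth-order contribution yields the main term $\tfrac12 B_l(t)e^{i2\pi\phi_l(t)}\hat h(\xi-\phi_l'(t))$. The remainder splits into an amplitude piece linear in $(u-t)$, bounded via $|B_l'|\le\epsilon c(l)\phi_\ell'(t)/\ell$, and a phase piece obtained from $|e^{i\alpha}-1|\le|\alpha|$ applied to the quadratic Taylor correction, bounded via $|\phi_l''|\le\epsilon l\phi_\ell'(t)/\ell$ together with $\|\phi_l''\|_\infty\le M$. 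Integrating against $|h(u-t)||u-t|^k$ produces the moment combinations $\phi_\ell'm_1+Mm_2/2$ and $\phi_\ell'm_2+Mm_3/3$, summed over $l$ with weights $c(l)$ and $lB_l(t)$ respectively, which assemble into the stated bound for $C_0(t)\epsilon$. For $\partial_\xi V^{(h)}_x$ the derivative pulls out an extra factor $-i2\pi(u-t)$, shifting every moment index up by one and producing the $m_2,m_3,m_4$ profile of $C_1(t)$. For $\partial_t V^{(h)}_x$ I use the identity $\partial_t V^{(h)}_x = i2\pi\xi V^{(h)}_x - V^{(h')}_x$ and apply the same expansion to both pieces; the leading convolution terms combine into the stated main term, the bulk Taylor remainder of $i2\pi\xi V^{(h)}_x$ supplies the extra $C_0(t)\epsilon\xi$ piece, and $V^{(h')}_x$ contributes the primed moments $m_k'$ in $C_2(t)$.

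The genuinely new and main obstacle is the change-point error $E_l(t)$. Near each $t_i$, hypothesis (T4) decomposes $B_{l_i}=B_{l_i}^{\mathrm{sm}}+\Delta B_{l_i}(t_i)\chi_{[t_i,\infty)}$ with $B_{l_i}^{\mathrm{sm}}\in C^1$, and likewise $\phi_{k_i}=\phi_{k_i}^{\mathrm{sm}}+\Delta\phi_{k_i}(t_i)\chi_{[t_i,\infty)}$ with $\phi_{k_i}^{\mathrm{sm}}\in C^2$; the smooth pieces are absorbed into the bulk analysis above. The surviving jump contribution from $\Delta B_{l_i}(t_i)$ has the form
\begin{equation*}
\tfrac12\Delta B_{l_i}(t_i)e^{i2\pi\phi_{l_i}(t_i^+)}\int_{t_i-t}^{\infty}\! h(u)e^{-i2\pi(\xi-\phi_{l_i}'(t))u}\,du
\end{equation*}
after freezing the phase at its linear Taylor approximation at $t$ and absorbing the quadratic correction into the $\epsilon$ bulk. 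Lemma~\ref{lemma h truncation bound} applied with $\beta=|t_i-t|$ (the case $t_i<t$ reduced to this one by the evenness of $h$) and frequency $\xi-\phi_{l_i}'(t)$ delivers the claimed decay $1/[\max\{|\xi-\phi_{l_i}'(t)|,H\}(1+H|t-t_i|)^{p-1}]$. The phase-jump terms are treated identically by writing $e^{i2\pi\phi_{k_i}}=e^{i2\pi\phi_{k_i}^{\mathrm{sm}}}\bigl(1+(e^{i2\pi\Delta\phi_{k_i}(t_i)\chi_{[t_i,\infty)}}-1)\bigr)$, using $|e^{i\alpha}-1|\le|\alpha|$ to extract a prefactor $B_{k_i}(t_i^+)\Delta\phi_{k_i}(t_i)$, and applying the same lemma. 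Summing over the finitely many $t_i$ and finite index sets $\{l_i\},\{k_i\}$ produces the stated $E_0$ bound. For $E_1$ the extra $(u-t)$ from $\partial_\xi$ only improves the polynomial decay; for $E_2$ the derivative either hits $h$, yielding the same bound with $h_0'$ replacing $h_0$ in Lemma~\ref{lemma h truncation bound}, or contributes $i2\pi\xi$ times the $E_0$ bound, both absorbed into the constants $c_l$ that depend only on $p$ and $h_0$.
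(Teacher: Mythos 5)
Your proposal is correct and follows essentially the same route as the paper's proof: freeze the amplitude and linearize the phase of each harmonic at the analysis time, bound the resulting remainder through the moments $m_k$, $m_k'$ (using $\partial_\xi V_x^{(h)}=-i2\pi V_x^{(th)}$ and $\partial_t V_x^{(h)}=i2\pi\xi V_x^{(h)}-V_x^{(h')}$), and handle the change points exactly as you do, by splitting off the jumps via (T4) and controlling the truncated-window integrals with Lemma~\ref{lemma h truncation bound}. The only cosmetic difference is that the paper phrases the bulk error as a time-domain residual $E_{t_0}(t)=x(t)-x_{t_0}(t)$ and uses the integral form of the Taylor remainder (which is what yields the precise constants $Mm_2/2$, $Mm_3/3$), whereas your Lagrange-form expansion inside the STFT integral gives the same structure with marginally larger constants.
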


\begin{remark}
The assumption $\mbox{supp}\{\hat{h}\} \subseteq[-H,H]$ for $H>0$ could be easily relaxed since the tail of a Schwartz function is light. The proof would be the same with more tedious notations to handle the light tail, while the results will not provide further insights into the problem. Clearly, when there is no change point, that is, when $r=0$, $E_l(t)=0$. Note that when $t$ is far away from the change points, or when $\xi$ is far away from the IF of the harmonics with change points, $E_l(t)$ is small. In particular, over the dominant harmonic, the impact of the change point is well controlled. The same remark holds for this whole section.
\end{remark}

\begin{proof}
First, assume $r=0$.
Fix time $t_0$. We start with claiming the following two bounds
\begin{align*}
|B_l(t)-B_l(t_0)|&\, \leq
\epsilon \frac{c(l)}{\ell} |t-t_0|\left(\phi'_\ell(t_0)+\frac{1}{2}M|t-t_0|\right)\\
|\phi'_l(t)-\phi'_l(t_0)| &\,\leq
\epsilon \frac{l}{\ell}|t-t_0|\left(\phi'_\ell(t_0)+\frac{1}{2}M|t-t_0|\right) \,.
\end{align*}
Indeed, we have
 \begin{align*}
    \left|B_l(t)-B_l(t_0)\right|\,
    & =  \left| \int_0^{t-t_0}B'_l(t_0+u)du\right|  \\
    &\leq   \int_0^{t-t_0} \left|B'_l(t_0+u)\right|d u\\
     &\leq \epsilon  \frac{c(l)}{\ell} \int_0^{t-t_0} \left| \phi'_\ell(t_0+u)\right|  d u\\
    & = \epsilon  \frac{c(l)}{\ell} \int_0^{t-t_0}\left|
      \phi'_\ell(t_0) + \int_0^u\phi''_\ell(t_0+x) dx\right|  d u \\
    & \leq \epsilon \frac{c(l)}{\ell} \left(
        \phi'_\ell(t_0)|t-t_0|+\frac{1}{2}M |t-t_0|^2\right)\,,
  \end{align*}
and the other bound comes from the same calculation.
Denote
\[
x_{t_0}(t):=\sum_{l = 1}^{\infty} B_l(t_0) \cos(2 \pi (\phi_l(t_0)+\phi_l'(t_0)(t-t_0)))
\]
and
\[
E_{t_0}(t):=x(t)-x_{t_0}(t)\,.
\]
Clearly, $E_{t_0}(t)$ is a smooth function.
We prepare the following bound:
\begin{equation}\label{trivial cos bound}
\begin{aligned}
|\cos(&2 \pi (\phi_l(t_0)+\phi_l'(t_0)(t-t_0)))-\cos(2 \pi \phi_l(t))|\\
&\leq \frac{\pi l}{\ell} \epsilon \left(\phi'_\ell(t_0)|t-t_0|^2+\frac{M}{3}|t-t_0|^3 \right)\,.
\end{aligned}
\end{equation}
To show it, without loss of generality, assume $t>t_0$. We have
\begin{align*}
|\cos(&2 \pi (\phi_l(t_0)+\phi_l'(t_0)(t-t_0)))-\cos(2 \pi \phi_l(t))|\\
&\leq\, 2\pi|\phi_l(t_0)+\phi_l'(t_0)(t-t_0)-\phi_l(t)|\\
&\leq\, 2\pi\int_{0}^{t-t_0}|\phi_l'(t_0+s)-\phi_l'(t_0)|ds\\
&\leq \frac{2\pi l}{\ell} \epsilon \int_{0}^{t-t_0}  s\left(\phi'_\ell(t_0)+\frac{1}{2}Ms\right) ds\\
&=\,\frac{\pi l}{\ell} \epsilon \left(\phi'_\ell(t_0)(t-t_0)^2+\frac{M}{3}(t-t_0)^3 \right)\,,
\end{align*}
and hence the claim.
Thus,
by a direct bound, we have the following control of $E_{t_0}(t)$:
\begin{align*}
|E_{t_0}(t)|&=\left|\sum_{l = 1}^{\infty} B_l(t) \cos(2 \pi \phi_l(t))\right.\\
-&\left.\sum_{l = 1}^{\infty} B_l(t_0) \cos(2 \pi (\phi_l(t_0)+\phi_l'(t_0)(t-t_0)))\right|\\
&\leq \sum_{l = 1}^{\infty} |B_l(t)-B_l(t_0)| |\cos(2 \pi \phi_l(t))|\\
+&\sum_{l = 1}^{\infty} B_l(t_0) |\cos(2 \pi (\phi_l(t_0)+\phi_l'(t_0)(t-t_0)))\\
&\quad\quad-\cos(2 \pi \phi_l(t))|\\
&\leq\frac{\epsilon}{\ell}|t-t_0|\left(\phi'_\ell(t_0)+\frac{1}{2}M|t-t_0|\right)\sum_{l = 1}^{\infty} c(l)\\
+\frac{\pi \epsilon}{\ell} &|t-t_0| \left(\phi'_\ell(t_0)|t-t_0|+\frac{M}{3}|t-t_0|^2 \right) \sum_{l = 1}^{\infty} l B_l(t_0)\,,
\end{align*}
where \eqref{trivial cos bound} is used in the second inequality and both $\sum_{l = 1}^{\infty} c(l)$ and $\sum_{l = 1}^{\infty} l B_l(t_0)$ are finite by Assumption (T1).
To finish the proof, note that due to the support of $\hat{h}$, we have
\[
V_{x_{t_0}}^{(h)}(t_0,\xi)=\frac{1}{2}\sum_{l=1}^\infty B_l(t_0)e^{i2\pi(\phi_l(t_0))}\hat{h}(\xi- \phi'_l(t_0))\,.
\]
Note that $V_x^{(h)}-V_{x_{t_0}}^{(h)}=V_{E_{t_0}}^{(h)}$, which is a smooth function. With the definition of $m_k$, we have an immediate bound of $|V_{E_{t_0}}^{(h)}|$ by
\begin{align*}
&|V_{E_{t_0}}^{(h)}(t_0,\xi)|\leq \int |E_{t_0}(t)||h(t-t_0)|dt\\
\leq \,& \epsilon\left[\frac{1}{\ell}\left(\phi'_\ell(t_0)m_1+\frac{Mm_2}{2}\right) \sum_{l = 1}^{\infty} c(l)\right.\\
&\left.+\frac{\pi}{\ell}  \left(\phi'_\ell(t_0)m_2+\frac{Mm_3}{3} \right)\sum_{l = 1}^{\infty} l B_l(t_0)\right]\,,
\end{align*}
and hence the proof of the first part. For the other parts, note that
\[
\partial_\xi V^{(h)}_x(t,\xi) = -i2\pi V^{(th)}_x(t,\xi)\]
and
\[
\partial_t V^{(h)}_x(t,\xi) = i2\pi \xi V^{(h)}_x(t,\xi)- V^{(h')}_x(t,\xi).
\]
Note that $\widehat{th(t)}(\xi) = \frac{-1}{i2\pi}\hat{h}'(\xi)$ and $\widehat{h'}(\xi)=i2\pi\xi\hat{h}(\xi)$, so we have
\begin{align*}
V_{x_{t_0}}^{(th)}(t_0,\xi)&\,=\frac{-1}{i4\pi}\sum_{l=1}^\infty B_l(t_0)e^{i2\pi(\phi_l(t_0))}\hat{h}'(\xi- \phi'_l(t_0))\\
V_{x_{t_0}}^{(h')}(t_0,\xi)&\,=i\pi\sum_{l=1}^\infty B_l(t_0)e^{i2\pi(\phi_l(t_0))}(\xi- \phi'_l(t))\hat{h}(\xi- \phi'_l(t_0))\,.
\end{align*}
On the other hand, we have
\begin{align*}
&|V_{E_{t_0}}^{(th)}(t_0,\xi)|\leq \int |E_{t_0}(t)||t-t_0||h(t-t_0)|dt\\
\leq \,& \epsilon\left[\frac{1}{\ell}\left(\phi'_\ell(t_0)m_2+\frac{Mm_3}{2}\right) \sum_{l = 1}^{\infty} c(l)\right.\\
&+\left.\frac{\pi}{\ell}  \left(\phi'_\ell(t_0)m_3+\frac{Mm_4}{3} \right)\sum_{l = 1}^{\infty} l B_l(t_0)\right]
\end{align*}
and
\begin{align*}
&|V_{E_{t_0}}^{(h')}(t_0,\xi)|\leq \int |E_{t_0}(t)||h'(t-t_0)|dt\\
\leq \,& \epsilon\left[\frac{1}{\ell}\left(\phi'_\ell(t_0)m_1'+\frac{Mm_2'}{2}\right) \sum_{l = 1}^{\infty} c(l)\right.\\
&\left.+\frac{\pi}{\ell}  \left(\phi'_\ell(t_0)m_2'+\frac{Mm_3'}{3} \right)\sum_{l = 1}^{\infty} l B_l(t_0)\right]\,.
\end{align*}

To finish the proof, we consider the case when $r\geq 1$. Without loss of generality, assume $r=1$, $B_l$ is discontinuous at $t_1$, and $\phi_l$ is discontinuous at $t_1$ for some $l\neq \ell$. Fix $t_0\leq t_1$. By (T4), we know
\[
\tilde{B}_l(t):=B_l(t)-\Delta B_l(t_1)\chi_{[t_1,\infty)}(t)\in C^1(\mathbb{R})\,,
\]
where $\Delta B_l(t_1):=B_l(t_1^+)-B_l(t_1^-)$ and
\[
\tilde{\phi}_l(t):=\phi_l(t)-\Delta \phi_l(t_1)\chi_{[t_1,\infty)}(t)\in C^2(\mathbb{R})\,,
\]
where $\Delta \phi_l(t_1):=\phi_l(t_1^+)-\phi_l(t_1^-)$.
Note that the above result when $r=0$ still holds for $\tilde{x}:=\sum_{j \neq l} B_j(t) \cos(2 \pi \phi_j(t))+\tilde B_l(t) \cos(2 \pi \tilde\phi_l(t))$ at $t_0$. Thus, to obtain the result for $\sum_{j=1}^\infty B_j(t) \cos(2 \pi \phi_j(t))$, we need to control the difference in $V^{(h)}_x$, $\partial_\xi V^{(h)}_x$ and $\partial_t V^{(h)}_x$ caused by $\tilde{B}_l(t)-B_l(t)=-\Delta B_l(t_1)\chi_{[t_1,\infty)}(t)$ and $\tilde{\phi}_l(t)-\phi_l(t)=-\Delta \phi_l(t_1)\chi_{[t_1,\infty)}(t)$ by taking Lemma \ref{lemma h truncation bound} into account. For example, by the same approximation of $\cos(2 \pi \tilde\phi_l(t))$ and the control of $\cos(2 \pi \phi_l(t))-\cos(2 \pi \tilde\phi_l(t))$ as the above, $|V^{(h)}_x-V^{(h)}_{\tilde{x}}|$ is controlled by
\begin{align*}
&\left|\int_{t_1}^\infty [B_l(t_1^+)-B_l(t_1^-)] \cos(2 \pi \tilde\phi_l(t)) h(t-t_0)e^{-i2\pi\xi t}dt\right|\\
&+\left|\int_{t_1}^\infty B_l(t)[\cos(2 \pi \tilde\phi_l(t))-\cos(2 \pi \phi_l(t))] h(t-t_0)e^{-i2\pi\xi t}dt\right|\\
&\leq c_{p}\frac{\Delta B_l(t_1)+B_l(t_1+)\Delta\phi_l(t_1)}{\max\{|\xi-\phi_l'(t_0)|,H\} (1+|t_0-t_1|)^{p-1}}\,,
\end{align*}
where $c_p>0$ is a constant depending on $p$ and $h_0(t)=h(t/H)/\sqrt{H}$. Similar arguments hold for $\partial_\xi V^{(h)}_x$ and $\partial_t V^{(h)}_x$.
\end{proof}

Below, to simplify the heavy notation, we assume $r=0$. When $r>0$, all the error terms will be complicated by including the control of $E_l(t)$ in Lemma \ref{lemma: main taylor expansion}.
The following theorem generalizes the existing result \eqref{Traditional result of STFT}.

\begin{theorem}\label{Theorem Ridge regularity of STFT}
Let $x(t)$ be a signal satisfying \texttt{WSFv3} with the $\ell$-th harmonic the dominant harmonic and $r= 0$. Let $h(t)$ be a real and even Schwartz function such that the window $h$ satisfies $\mbox{supp}\{\hat{h}\} \subseteq[-H,H]$ for $H>0$. Suppose $\phi'_1(t)>H$ and $\phi'_{j+1}-\phi'_j>2H$, for $j=1,2,\ldots$.
Then, when $\epsilon$ is sufficiently small, the ridge of the spectrogram of $x$, $|V_x^{(h)}(t,\xi)|^2$, near $\phi'_\ell(t)$, denoted as $(t,\varpi_\ell(t))$, satisfies
\begin{enumerate}
\item $\varpi_\ell(t)=\phi_\ell'(t)+C_4(t)\epsilon$,
where $C_4(t)$ is a real valued function with bound detailed in \eqref{bound of C_4};
\item $\varpi_\ell(t)$ is $C^1$;
\item $|\varpi'_\ell(t)|\leq C_7(t)\epsilon$,
where $C_7(t)$ is a positive valued function with bound detailed in \eqref{bound of C_7}.
\end{enumerate}
\end{theorem}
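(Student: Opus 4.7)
The plan is to apply the implicit function theorem to the ridge equation $F(t,\xi):=\partial_\xi|V_x^{(h)}(t,\xi)|^2=0$, with Lemma~\ref{lemma: main taylor expansion} as the engine controlling $V_x^{(h)}$, $\partial_\xi V_x^{(h)}$ and $\partial_t V_x^{(h)}$ up to $O(\epsilon)$. First I would localize to the $\ell$-th harmonic: since $\mathrm{supp}(\hat h)\subset[-H,H]$ and $|\phi'_{j+1}-\phi'_j|>2H$, for $\xi$ in an $H$-neighborhood of $\phi'_\ell(t)$ only the $\ell$-th summand in Lemma~\ref{lemma: main taylor expansion} survives, giving
\[
V_x^{(h)}(t,\xi)=\tfrac{1}{2}B_\ell(t)e^{i2\pi\phi_\ell(t)}\hat h(\xi-\phi'_\ell(t))+C_0(t)\epsilon,
\]
and the analogous identity for $\partial_\xi V_x^{(h)}$ with $\hat h'$ and error $C_1(t)\epsilon$. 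Because $h$ is real and even, so is $\hat h$, and therefore $\hat h'(0)=0$; evaluating at $\xi=\phi'_\ell(t)$ yields
\[
F(t,\phi'_\ell(t))=2\operatorname{Re}\bigl\{\overline{V_x^{(h)}(t,\phi'_\ell(t))}\,\partial_\xi V_x^{(h)}(t,\phi'_\ell(t))\bigr\}=O(\epsilon),
\]
with a constant expressible through $|\hat h(0)|B_\ell(t)$ and $C_1(t)$.

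Next I would verify the non-degeneracy of $\partial_\xi F$. Using $\partial_\xi F=2|\partial_\xi V_x^{(h)}|^2+2\operatorname{Re}\{\overline{V_x^{(h)}}\,\partial_\xi^2 V_x^{(h)}\}$ and the analogous expansion of $\partial_\xi^2 V_x^{(h)}$ with $\hat h''$, the leading contribution at $\xi=\phi'_\ell(t)$ is $\tfrac{1}{2}B_\ell^2(t)\hat h(0)\hat h''(0)$, which is bounded away from zero (and negative) for $\epsilon$ small under the natural window normalization $\hat h(0)>0$, $\hat h''(0)<0$. Since $V_x^{(h)}$ is jointly $C^\infty$ in $(t,\xi)$ (Schwartz $h$, differentiation under the pairing), $F$ is $C^\infty$. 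The implicit function theorem then supplies a unique $\varpi_\ell(t)=\phi'_\ell(t)+\delta(t)$ with $\delta(t)=O(\epsilon)$ solving $F(t,\varpi_\ell(t))=0$, and the explicit ratio $|\delta(t)|\le|F(t,\phi'_\ell(t))|/\inf|\partial_\xi F|$ yields (1) with $C_4(t)$ assembled from $C_0(t),C_1(t),B_\ell(t),\hat h(0),\hat h''(0)$. Item (2) follows at once because the IFT transfers the $C^\infty$ regularity of $F$ to $\varpi_\ell$.

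For (3), differentiating $F(t,\varpi_\ell(t))=0$ in $t$ yields $\varpi'_\ell(t)=-\partial_t F(t,\varpi_\ell(t))/\partial_\xi F(t,\varpi_\ell(t))$. The denominator has already been bounded below. The key observation for the numerator is that the oscillatory phase $e^{i2\pi\phi_\ell(t)}$ cancels in $|V_x^{(h)}|^2$, so
\[
|V_x^{(h)}(t,\xi)|^2=\tfrac{B_\ell^2(t)}{4}|\hat h(\xi-\phi'_\ell(t))|^2+O(\epsilon).
\]
Taking $\partial_t\partial_\xi$ then produces only terms proportional to $B_\ell'(t)$ or to $\phi''_\ell(t)$; by (T3) these satisfy $|B_\ell'|\le\epsilon c(\ell)\phi'_\ell/\ell$ and $|\phi''_\ell|\le\epsilon\phi'_\ell$, and at $\xi=\varpi_\ell(t)$ one additionally has $\hat h'(\varpi_\ell-\phi'_\ell)=O(\epsilon)$ since $\hat h'(0)=0$. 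Every resulting term is $O(\epsilon\phi'_\ell(t)B_\ell^2(t))$, and dividing by $\partial_\xi F\asymp B_\ell^2(t)$ gives $|\varpi'_\ell(t)|\le C_7(t)\epsilon$ with $C_7(t)$ proportional to $\phi'_\ell(t)$ up to window-dependent constants.

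The main obstacle is the cancellation needed to push $\partial_t F$ down to the $O(\epsilon\phi'_\ell)$ level: naively $\partial_t V_x^{(h)}$ is of size $\phi'_\ell$, but once one forms $\overline{V_x^{(h)}}\partial_t V_x^{(h)}$ (or equivalently differentiates $|V_x^{(h)}|^2$ directly) the phase $e^{i2\pi\phi_\ell}$ cancels against its conjugate and the would-be dominant $i2\pi\phi'_\ell B_\ell^2|\hat h|^2$ contribution becomes purely imaginary, dropping out of the real part and leaving only the slowly varying $B_\ell'$ and $\phi''_\ell$ pieces controlled by (T3). Expressing the constants $C_4(t)$ and $C_7(t)$ in terms of the explicit quantities from Lemma~\ref{lemma: main taylor expansion} and the derivatives of $\hat h$ at $0$ is otherwise mechanical.
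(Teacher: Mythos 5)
Your proposal is correct and follows essentially the same route as the paper: expand $F=\partial_\xi|V_x^{(h)}|^2$ via Lemma~\ref{lemma: main taylor expansion}, use $\hat h'(0)=0$ to locate the ridge at $\phi'_\ell(t)+O(\epsilon)$, verify $\partial_\xi F\asymp B_\ell^2(t)\hat h(0)\hat h''(0)$ for the implicit function theorem, and exploit the cancellation of $e^{i2\pi\phi_\ell}$ in the real part to get $\partial_t F=O(\epsilon)$ and hence $|\varpi'_\ell|=O(\epsilon)$. The only informality is that you differentiate the $O(\epsilon)$ remainder of $|V_x^{(h)}|^2$ directly, which is not justified per se; the paper instead reduces $\partial_t F$ to STFTs with the modified windows $th$, $h'$ and $[th]'$ and reapplies the lemma's argument to each, which is the rigorous version of the cancellation you describe in your closing paragraph.
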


\begin{remark}
Note that when $r>0$, the error terms, for example, $C_4(t)\epsilon$ and $C_7(t)\epsilon$ in (1) and (3) in Lemma \ref{lemma: main taylor expansion} respectively, will further depend on $E_l(t)$. This holds since the STFT is still a smooth function, and the same proof holds.
\end{remark}

\begin{proof}
Without loss of generality, assume $\ell=1$.
In the proof we need the following controls, which come immediately from the definition of $h$:
\begin{equation}\label{EQ hat h' hat h hat h'' control}
\begin{aligned}
|\hat{h}'(\xi)|&\leq m_2|\xi| \\
|\hat{h}(\xi)- \hat{h}(0)|&=\frac{m_2}{2}|\xi|^2 \\
|\hat{h}''(\xi)- \hat{h}''(0)|&=\frac{m''_2}{2}|\xi|^2
\end{aligned}
\end{equation}
when $|\xi|$ is sufficiently small. Also, note that $|\hat{h}(0)|$ and $|\hat{h}''(0)|$ are of order $1$.
Denote the spectrogram of $x$ as $S_{x}(t,\xi) = |V^{(h)}_{x}(t,\xi)|^2$.
Since $h$ is real symmetric, by Lemma \ref{lemma: main taylor expansion}, we have
\begin{align*}
&F(t,\xi)=\partial_\xi S_{x}(t,\xi)=2\Re[\partial_\xi V^{(h)}_{x}(t,\xi)\overline{V^{(h)}_{x}(t,\xi)}]\\
=&\, \frac{1}{2}\sum_{l=1}^\infty B_l^2(t)\hat{h}(\xi- \phi'_l(t))\hat{h}'(\xi- \phi'_l(t)) + C_3(t,\xi)\epsilon\,,
\end{align*}
where $C_3(t,\xi)$ is a real valued function satisfying
\begin{align*}
|C_3(t,\xi)|\leq& 2\left(|C_1(t)|\sum_{l=1}^\infty B_l(t)|\hat{h}(\xi- \phi'_l(t))|\right.\\
&\left.+|C_0(t)|\sum_{l=1}^\infty B_l(t)|\hat{h}'(\xi- \phi'_l(t))|\right),
\end{align*}
where $\epsilon>0$ is sufficiently small.
Clearly, $\sum_{l=1}^\infty B_l^2(t)\hat{h}(\xi- \phi'_l(t))\hat{h}'(\xi- \phi'_l(t))=0$ when $\xi=\phi'_1(t)$.
By \cite{delprat1992asymptotic}, we know that the ridge associated with $f_1(t) = B_1(t)\cos(2\pi\phi_1(t))$ at time $t$, denoted as $\xi_t$ so that $F(t,\xi_t)=0$, is close to $\phi_1'(t)$; that is, for a given time $t$, there exist $\xi_t>0$ so that $\xi_t=\phi_1'(t)+O(\epsilon)$. To quality the implied constant, note that when $|\xi-\phi_1'(t)|<H$, we have $F(t,\xi)=\frac{1}{2} B_1^2(t)\hat{h}(\xi- \phi'_1(t))\hat{h}'(\xi- \phi'_1(t)) + C_3(t,\xi)\epsilon$, where
\[
|C_3(t,\xi)|\leq 2\left(|C_1(t)| +|C_0(t)| \right)B_1(t)\,.
\]
Thus, by a direct perturbation,
\begin{align}\label{EQ xi_x=phi_1'(x)+O(epsilon)}
\xi_t=\phi_1'(t)+C_4(t)\epsilon\,,
\end{align}
where $C_4(t)$ is a real valued function satisfying
\begin{align}\label{bound of C_4}
|C_4(t)|\leq \frac{8(|C_1(t)| +|C_0(t)| )}{B_1(t)m_2}\,.
\end{align}
Then, we apply the implicit function theory to $F$. To this end, we show that
\[
\partial_\xi F(t,\xi_t)\neq 0\,.
\]
Indeed, we have
\begin{align*}
\partial&_\xi F(t,\xi_t)\\
&=-8\pi^2\big[\Re[V_x^{(t^2h)}(t,\xi_t)\overline{V_x^{(h)}(t,\xi_t)}]-2|V_x^{(th)}(t,\xi_t)|^2\big]\,,
\end{align*}
where $\Re$ means taking the real part.
Note that by Lemma \ref{lemma: main taylor expansion}, we have
$V_{f_1}^{(h)}(t,\xi_t) \approx \frac{B_1(t)}{2} e^{i2\pi\phi_1(t)} \hat{h}(\xi_t- \phi'_1(t))$, $V_{f_1}^{(t^2h)}(t,\xi_t) \approx \frac{-1}{4\pi^2}\frac{B_1(t)}{2} e^{i2\pi\phi_1(t)} \hat{h}''(\xi_t- \phi'_1(t))$, and $V_{f_1}^{(th)}(t,\xi_t) \approx \frac{1}{-i2\pi}\frac{B_1(t)}{2} e^{i2\pi\phi_1(t)} \hat{h}'(\xi_t- \phi'_1(t))$.
As a result,
\begin{align}\label{eq Vh*Vt2h}
&V_{f_1}^{(h)}(t,\xi_t) \overline{V_{f_1}^{(t^2h)}(t,\xi_t)}
\approx \frac{B_1(t)^2}{16\pi^2}  \hat{h}(\xi_t- \phi'_1(t)) \overline{\hat{h}''(\xi_t- \phi'_1(t))}
\end{align}
is of order $1$, and
\begin{align}\label{eq Vh*Vt2h2}
|V_{f_1}^{(th)}(t,\xi_t)|^2\approx\frac{B_1(t)^2}{16\pi^2} |\hat{h}'(\xi_t- \phi'_1(t))|^2=O(\epsilon^2)\,,
\end{align}
where the last equality comes from \eqref{EQ xi_x=phi_1'(x)+O(epsilon)}, which combined lead to the claim.
Thus, by the implicit function theory, we know there exists an open set $U$ around $t$ and a $C^1$ function $\varpi$ on $U$ so that $F(s, \varpi(s))=0$ for $s\in U$. To control $\varpi'(t)$, note that by the implicit function theorem, we have
\[
\varpi'(s)=\frac{\partial_t F(s,\varpi(s))}{|\partial_\xi F(s,\varpi(s))|}\,.
\]
To finish the claim, we need bounds for $|\partial_t F(t,\varpi(t))|$ and $|\partial_\xi F(t,\varpi(t))|$. First,
\begin{align*}
\partial_t F(t,\varpi(t))= -4\pi\Im [V_{f_1}^{(th')}(t,\varpi(t))\overline{V_{f_1}^{(h)}(t,\varpi(t))}] \,,
\end{align*}
where $\Im$ means taking the imaginary part and $th'$ inside the parenthesis means the window $[th]'(t)$.
By the same argument as that in Lemma \ref{lemma: main taylor expansion}, we have
\begin{align*}
V&_{f_1}^{(th')}(t,\varpi(t)) = \\
&-\frac{B_1(t)}{2} e^{i2\pi\phi_1(t)}(\varpi(t)- \phi'_{1}(t)) \hat{h}'(\varpi(t)- \phi'_{1}(t))+C_5(t)\epsilon\,,
\end{align*}
where $C_5(t)$ is a complex valued function satisfying
\begin{align*}
|C_5(t)|&\leq \frac{1}{\ell}\left(\phi'_\ell(t)m_2'+\frac{Mm_3'}{2}\right) \sum_{l = 1}^{\infty} c(l)\\
+&\frac{\pi}{\ell}  \left(\phi'_\ell(t)m_3'+\frac{Mm_4'}{3} \right)\sum_{l = 1}^{\infty} l B_l(t) \,.
\end{align*}
On the other hand, since the Fourier transform of $[th]'(t)$ is $-\xi\hat{h}'(\xi)$, which is an even function, and $\widehat{[th]'}(\xi)=O(|\xi|)$ when $|\xi|\ll 1$, we have
\[
\frac{B_1(t)}{2} e^{i2\pi\phi_1(t)}(\varpi(t)- \phi'_{1}(t)) \hat{h}'(\varpi(t)- \phi'_{1}(t))=O(\epsilon^2)
\]
since $|\varpi(t)- \phi'_{1}(t)|=O(\epsilon)$ and $|\hat{h}'(\varpi(t)- \phi'_{1}(t))|=O(\epsilon)$.
By combining the above bounds, we have
\[
\partial_t F(t,\varpi(t))=C_6(t)\epsilon\,,
\]
where $C_6(t)$ is a complex valued function satisfying
\[
|C_6(t)|\leq 3\pi B_1(t) |C_5(t)|
\]
when $\epsilon$ is sufficiently small.
Finally, we show that $|\partial_\xi F(x,\varpi(x))|$ is of order $1$. Indeed, we have
\begin{align*}
\partial&_\xi F(t,\varpi(t))\\
&= -8\pi^2(\Re [V_{f_1}^{(t^2h)}(t,\varpi(t))\overline{V_{f_1}^{(h)}(t,\varpi(t))}] -|V_{f_1}^{(th)}(t,\varpi(t))|^2) \,,
\end{align*}
where by \eqref{eq Vh*Vt2h} and \eqref{eq Vh*Vt2h2} we know
\[
|\partial_\xi F(t,\varpi(t))| \geq   \frac{B_1(t)^2}{8}  \,.
\]
We thus finish the claim that
\[
|\varpi'(t)|\leq C_7(t)\epsilon\,,
\]
where $C_7(t)$ is a positive valued function satisfying
\begin{align}\label{bound of C_7}
C_7(t)\leq \frac{24\pi }{B_1(t)} |C_5(t)|\,.
\end{align}
\end{proof}

With Theorem \ref{Theorem Ridge regularity of STFT}, we immediately have the following theorem that guarantees the performance of warping and demodulation.

\begin{theorem}\label{Theorem_warping}
Following assumptions stated in Theorem \ref{Theorem Ridge regularity of STFT}, if we denote $\theta_\ell(t)=\phi_\ell(\tilde{\phi}_\ell^{-1}(t))$, where $\tilde{\phi}_\ell$ is the estimated phase from the ridge, we have
\begin{align*}
\theta_\ell(t)&=t+C_{10}(t)\epsilon\\
\theta_\ell'(t)&=1+C_{11}(\tilde{\phi}_\ell^{-1}(t))\epsilon\\
|\theta_\ell''(t)|&\leq \frac{2(H+|C_7(\tilde{\phi}_\ell^{-1}(t))|)}{H^2}\theta_\ell'(t)\epsilon
\end{align*}
for all $t\in \mathbb{R}$, where $C_{10}(t)$ and $C_{11}(t)$ are defined in the proof.
If we denote $A_\ell(t) = \frac{B_\ell(\tilde{\phi}_\ell^{-1}(t))}{\tilde B_\ell(\tilde{\phi}_\ell^{-1}(t))}$, where $\tilde B_\ell$ is the estimated amplitude of the $\ell$-th harmonic from the ridge, we have
\[
A_\ell(t)=1+C_{12}(\tilde{\phi}_\ell^{-1}(t))\epsilon\ \ \mbox{and}\ \ |A'_\ell(t)|\leq C_{14}(\tilde{\phi}_\ell^{-1}(t))\epsilon
\]
for all $t\in \mathbb{R}$, where $C_{12}(t)$ and $C_{14}(t)$ are defined in \eqref{definition C12} and \eqref{definition C14} respectively.
\end{theorem}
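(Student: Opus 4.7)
The plan is to deduce everything from Theorem \ref{Theorem Ridge regularity of STFT}, which controls the ridge $\varpi_\ell$ and hence the estimated phase obtained by integration, $\tilde\phi_\ell(t)=\int_0^t \varpi_\ell(u)\,du$. This immediately gives $\tilde\phi_\ell'(t)=\phi_\ell'(t)+C_4(t)\epsilon$ and $|\tilde\phi_\ell''(t)|=|\varpi_\ell'(t)|\le C_7(t)\epsilon$. Because $\phi_\ell'(t)>H$ and $\epsilon$ is small, $\tilde\phi_\ell$ is strictly increasing and $C^2$, so $\tilde\phi_\ell^{-1}$ is well defined and $C^2$. Throughout the proof I set $s:=\tilde\phi_\ell^{-1}(t)$.

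For the three phase identities, observe that
\begin{equation*}
\theta_\ell(t)-t=\phi_\ell(s)-\tilde\phi_\ell(s)=-\int_0^s C_4(u)\epsilon\,du,
\end{equation*}
which identifies $C_{10}(t)$ up to a harmless additive constant. The chain rule gives $\theta_\ell'(t)=\phi_\ell'(s)/\tilde\phi_\ell'(s)$, so a first-order Taylor expansion in $\epsilon$ yields $\theta_\ell'(t)=1+C_{11}(s)\epsilon$ with $C_{11}(s)=-C_4(s)/\phi_\ell'(s)+O(\epsilon)$. Differentiating once more,
\begin{equation*}
\frac{\theta_\ell''(t)}{\theta_\ell'(t)}=\frac{\phi_\ell''(s)}{\phi_\ell'(s)\tilde\phi_\ell'(s)}-\frac{\tilde\phi_\ell''(s)}{[\tilde\phi_\ell'(s)]^2},
\end{equation*}
and inserting $|\phi_\ell''|\le\epsilon\phi_\ell'$ from (T3), $|\tilde\phi_\ell''|\le C_7\epsilon$ from Theorem \ref{Theorem Ridge regularity of STFT}, and the small-$\epsilon$ lower bound $\tilde\phi_\ell'(s)\ge H/2$ (a consequence of $\phi_\ell'>H$) produces the claimed bound; the factor $2$ in the statement absorbs the $H/2$ slack.

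For the amplitude identity, Lemma \ref{lemma: main taylor expansion} evaluated at the ridge gives
\begin{equation*}
|V_x^{(h)}(s,\varpi_\ell(s))|=\tfrac12 B_\ell(s)|\hat h(C_4(s)\epsilon)|+O(\epsilon)=\tfrac12 B_\ell(s)|\hat h(0)|+O(\epsilon),
\end{equation*}
where I use the smoothness of $\hat h$ near $0$ together with $\varpi_\ell(s)-\phi_\ell'(s)=C_4(s)\epsilon$, and the fact that the non-$\ell$ harmonics are excised from the sum by the compact support of $\hat h$. Since $\tilde B_\ell(s)=2|V_x^{(h)}(s,\varpi_\ell(s))|/|\hat h(0)|$, it follows that $\tilde B_\ell(s)=B_\ell(s)(1+\tilde C(s)\epsilon)$, whence dividing and Taylor-expanding gives $A_\ell(t)=1+C_{12}(s)\epsilon$ with an explicit $C_{12}$ collected from the $C_0(s)$ error of Lemma \ref{lemma: main taylor expansion} and the quadratic correction on $\hat h$.

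The principal obstacle is the bound on $A_\ell'$. Differentiating $\tilde B_\ell$ along the ridge yields
\begin{equation*}
\tilde B_\ell'(s)=\frac{2}{|\hat h(0)|}\Re\!\left[\frac{\overline{V_x^{(h)}(s,\varpi_\ell(s))}}{|V_x^{(h)}(s,\varpi_\ell(s))|}\bigl(\partial_t V_x^{(h)}+\varpi_\ell'(s)\,\partial_\xi V_x^{(h)}\bigr)\right],
\end{equation*}
and both partial derivatives of the STFT at the ridge are controlled by Lemma \ref{lemma: main taylor expansion}, while $|\varpi_\ell'(s)|\le C_7(s)\epsilon$ by Theorem \ref{Theorem Ridge regularity of STFT}. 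Combining these with $|B_\ell'(s)|\le\epsilon c(\ell)\phi_\ell'(s)/\ell$ from (T3) in the numerator of
\begin{equation*}
A_\ell'(t)=\frac{B_\ell'(s)\tilde B_\ell(s)-B_\ell(s)\tilde B_\ell'(s)}{[\tilde B_\ell(s)]^2\,\tilde\phi_\ell'(s)},
\end{equation*}
and using the lower bound $[\tilde B_\ell(s)]^2\tilde\phi_\ell'(s)\gtrsim HB_\ell^2(s)$ in the denominator, delivers $|A_\ell'(t)|\le C_{14}(s)\epsilon$. The remaining work is bookkeeping to ensure that $C_{14}$ depends only on the structural quantities $M$, $H$, $\{c(l)\}$, $\phi_\ell'$, $B_\ell$, and the moments of $h$; no new analytic idea beyond Theorem \ref{Theorem Ridge regularity of STFT} and Lemma \ref{lemma: main taylor expansion} is required.
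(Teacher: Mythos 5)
Your derivative and amplitude arguments follow essentially the paper's route: perturb the ridge via Theorem \ref{Theorem Ridge regularity of STFT}, read off the STFT and its derivatives at the ridge from Lemma \ref{lemma: main taylor expansion}, and do quotient/chain-rule bookkeeping; in particular your $\Re\bigl[\overline{V}/|V|\cdot(\partial_t V+\varpi_\ell'\,\partial_\xi V)\bigr]$ device is exactly how the paper kills the large pure-rotation part of $\partial_t V$ when bounding $\tilde B_\ell'$, and your $\theta_\ell'$, $\theta_\ell''$ computations match the paper's up to constants. The genuine gap is in the first identity, and it comes from your choice of phase estimator. You take $\tilde\phi_\ell(t)=\int_0^t\varpi_\ell(u)\,du$, so that $\theta_\ell(t)-t=\phi_\ell(0)-\int_0^{s}C_4(u)\epsilon\,du$ with $s=\tilde\phi_\ell^{-1}(t)$. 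The additive constant $\phi_\ell(0)$ is not harmless unless you normalize it away, and, more importantly, the integral grows linearly in $|s|$: your $C_{10}$ is unbounded in $t$, so the statement ``$\theta_\ell(t)=t+C_{10}(t)\epsilon$ for all $t\in\mathbb{R}$'' carries no uniform content. The theorem is used downstream precisely through the boundedness of $C_{10}$ (the warped fundamental must be $\cos(2\pi t)$ up to a uniformly $O(\epsilon)$ phase error so that (T1)--(T5) can be verified for $\tilde x^{[1]}$ with a comparable $\epsilon^{[1]}$), so an accumulating phase error defeats the purpose.

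The paper avoids this by defining $\tilde\phi_\ell$ not as an antiderivative of the ridge but as the phase of $Q(t)=V_x^{(h)}(t,\varpi_\ell(t))$: since $Q(t)=\tfrac12 B_\ell(t)\hat h(\varpi_\ell(t)-\phi_\ell'(t))e^{i2\pi\phi_\ell(t)}+C_0(t)\epsilon$, its argument tracks $\phi_\ell(t)$ with a pointwise error $C_8(t)\epsilon$, $|C_8(t)|\le 2|C_0(t)|/B_\ell(t)$, with no accumulation in $t$; this yields $|C_{10}(t)|\le 2|C_0(t)|/B_\ell(t)$, uniformly bounded under (T5). To repair your proof you would either switch to this argument-of-the-STFT estimator (and then justify the $\theta_\ell'$, $\theta_\ell''$ bounds for it, which is where the ridge regularity of Theorem \ref{Theorem Ridge regularity of STFT} re-enters, as in the paper), or establish a uniform-in-$t$ bound on $\int_0^s C_4$, which is false in general. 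The remaining parts of your sketch ($\theta_\ell'$, $\theta_\ell''$, $A_\ell$, $A_\ell'$) are correct modulo constant tracking and coincide with the paper's computations.
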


\begin{remark}
When $r>0$, the same remark after Theorem \ref{Theorem Ridge regularity of STFT} holds here.
\end{remark}

\begin{proof}
Without loss of generality, assume $\ell=1$, and with the support assumption of $h$, we could focus on considering $f_1(t)=B_1(t) e^{i2 \pi \phi_1(t)}$ when we analyze the associated ridge in the spectrogram and its regularity.

From Theorem \ref{Theorem Ridge regularity of STFT}, we have the control of the ridge, denoted as $\tilde{\phi}_1'(t)$.
By evaluating the phase from the ridge; that is, consider the phase of the $C^1$ function
\begin{align*}
Q(t)&:=V_{f_1}^{(h)}(t,\tilde{\phi}_1'(t))\\
&=\frac{1}{2}B_1(t)\hat{h}(\tilde{\phi}_1'(t)- \phi'_1(t)) e^{i2\pi\phi_1(t)} +C_0(t)\epsilon\,,
\end{align*}
denoted as $\tilde{\phi}_1(t)$, as the estimation of $\phi_1(t)$, we obtain
\begin{align*}
\tilde{\phi}_1(t)=\phi_1(t)+C_8(t)\epsilon\,,
\end{align*}
where $C_8(t)$ is a real valued function satisfying $|C_8(t)|\leq \frac{2|C_0(t)|}{B_1(t)}$. The amplitude can be estimated by $\tilde{B}_1(t):=\left|\frac{2Q(t)}{\hat{h}(0)}\right|$, which satisfies
\[
|\tilde{B}_1(t)-B_1(t)|\leq \frac{2|C_0(t)|}{\hat{h}(0)}\epsilon
\]
when $\epsilon$ is sufficiently small, where we use \eqref{EQ hat h' hat h hat h'' control}. Thus, we have $A_1(t)= \frac{B_1(\tilde{\phi}_1^{-1}(t))}{\tilde B_1(\tilde{\phi}_1^{-1}(t))}= 1+C_{12}(\tilde{\phi}_1^{-1}(t))\epsilon$, where $C_{12}(t)$ is a real valued function satisfying
\begin{align}\label{definition C12}
|C_{12}(t)|\leq \frac{2|C_0(t)|}{\hat{h}(0)B_1(t)} \,.
\end{align}

Since $\tilde{\phi}'_1(t)>0$, we know $\tilde{\phi}_1$ is monotonically increasing. Thus, by a direct bound with the lower bound of $\phi_1'(t)$, we have
\[
\tilde{\phi}^{-1}_1(t)=\phi^{-1}_1(t)+C_9(t)\epsilon \,,
\]
where $C_9(t)$ is a real valued function satisfying $|C_9(t)|\leq \frac{2|C_0(t)|}{HB_1(t)}$.
Denote $\theta(t)=\phi_1(\tilde{\phi}_1^{-1}(t))$. With Assumption (T3) and the lower bound of $\phi_1'(t)$, by a direct Taylor expansion we have
\[
\theta(t)=t+C_{10}(t)\epsilon\,,
\]
where $C_{10}(t)$ is a real valued function satisfying $|C_{10}(t)|\leq \frac{2|C_0(t)|}{B_1(t)}$.
With Theorem \ref{Theorem Ridge regularity of STFT}, by a direct calculation we have
\begin{align}\label{Equation theta'(t)}
\theta'(t)=\frac{\phi_1'(\tilde{\phi}_1^{-1}(t))}{\tilde{\phi}_1'(\tilde{\phi}_1^{-1}(t))}=\frac{\phi_1'(\tilde{\phi}_1^{-1}(t))}{{\phi}_1'(\tilde{\phi}_1^{-1}(t))+C_4(\tilde{\phi}_1^{-1}(t))\epsilon}
\end{align}
and
\[
\theta''(t)=\frac{\phi_1''(\tilde{\phi}_1^{-1}(t))-\tilde{\phi}_1''(\tilde{\phi}_1^{-1}(t))\frac{\phi_1'(\tilde{\phi}_1^{-1}(t))}{\tilde{\phi}_1'(\tilde{\phi}_1^{-1}(t))} }{[\tilde{\phi}_1'(\tilde{\phi}_1^{-1}(t))]^2}\,.
\]
Thus, by a direct bound, we have
\[
\theta'(t)-1=C_{11}(\tilde{\phi}_1^{-1}(t))\epsilon\,,
\]
where $C_{11}(t)$ is a real valued function satisfying $|C_{11}(t)|\leq \frac{C_4(t)}{H}$,
and
\begin{align*}
|\theta''(t)|&\leq \frac{\phi_1'(\tilde{\phi}_1^{-1}(t))+|C_7(\tilde{\phi}_1^{-1}(t))|}{[\tilde{\phi}_1'(\tilde{\phi}_1^{-1}(t))]^2}\epsilon\\
&=\,\left( \frac{\theta'(t)}{\tilde{\phi}_1'(\tilde{\phi}_1^{-1}(t))}+\frac{|C_7(\tilde{\phi}_1^{-1}(t))|}{[\tilde{\phi}_1'(\tilde{\phi}_1^{-1}(t))]^2}\right)\epsilon\\
&\leq \frac{2(H+|C_7(\tilde{\phi}_1^{-1}(t))|)}{H^2}\theta'(t)\epsilon\,,
\end{align*}
where we use \eqref{Equation theta'(t)} in the first equality and the last bound comes from the lower bound of $\phi'_1(t)$ and holds when $\epsilon$ is sufficiently small.

To finish the proof, note that by a direct calculation we have
\begin{align*}
Q'(t):=&\,-i 2 \pi [\tilde{\phi}''_1(t) V_{f_1}^{(th)}(t,\tilde{\phi}'_1(t)) - \tilde{\phi}'_1(t) V_{f_1}^{(h)}(t,\tilde{\phi}'_1(t)) ] \\
&- V_{f_1}^{(h')}(t,\tilde{\phi}'_1(t)),
\end{align*}

\noindent which by the same calculation in Lemma \ref{lemma: main taylor expansion} is bounded by
\begin{align*}
|Q'(t)|\leq &\,C_{13}(t)\epsilon\,,
\end{align*}
where
\begin{align*}
C_{13}(t)=&\,2\pi C_7(t) \left[\left(\frac{B_1(t)}{2}\hat{h}(\phi_1'(t))+|C_0(t)|\epsilon\right)t\right.\\
&\left.+\left(\frac{B_1(t)}{4\pi} \hat{h}'(\phi_1'(t))+|C_1(t)|\epsilon\right)\right]\\
&+c(1)\left(\phi_1'(t)m_1'+\frac{M}{2}m_2'\right)+3\pi|C_4(t)|B_1(t)\,.
\end{align*}
Thus, we have
\[
|\tilde{B}'_1(t)|=\frac{\left|4(Q(t)\overline{Q'(t)}+Q'(t)\overline{Q(t)})\right|}{|\hat{h}(0)|^2\tilde{B}_1(t)}\leq \frac{8C_{13}(t)}{|\hat{h}(0)|^2}\epsilon\,,
\]
where we use the fact that $|Q(t)|\leq \frac{1}{2}B_1(t)$ and $\tilde{B}_1(t)>\frac{1}{2}B_1(t)$. As a result,
\begin{align*}
A'_1(t)=&\frac{B_1'(\tilde{\phi}_1^{-1}(t))[\tilde{\phi}_1^{-1}]'(t)\tilde{B}_1(\tilde{\phi}_1^{-1}(t))}{[\tilde{B}_1(\tilde{\phi}_1^{-1}(t))]^2}\\
&-\frac{\tilde{B}_1'(\tilde{\phi}_1^{-1}(t))[\tilde{\phi}_1^{-1}]'(t){B}_1(\tilde{\phi}_1^{-1}(t)) }{[\tilde{B}_1(\tilde{\phi}_1^{-1}(t))]^2}
\end{align*}
could be controlled by
\[
|A'_1(t)|\leq C_{14}(\tilde{\phi}_1^{-1}(t))\epsilon
\]
where we use the fact that $[\tilde{\phi}_1^{-1}]'(t)\leq \frac{1}{2H}$ and
$C_{14}(t)$ is a non-negative function satisfying
\begin{align}\label{definition C14}
C_{14}(t)= \frac{1}{H{B}_1(t)} \left[ \phi_1'(t)+\frac{8C_{13}(t)}{|\hat{h}(0)|^2} \right]\,.
\end{align}
\end{proof}

Last but not least, we show that with an extra assumption, the signal $\tilde{x}^{[1]}$ also fulfill \texttt{WSFv3} so that the iteration can be carried out. Without loss of generality, assume $\ell=1$. Denote $\tilde{\phi}^{[1]}_1$ and $\tilde{B}^{[1]}_{1}$ to be the estimated phase and amplitude of $B_1(t)\cos(2\pi \phi_1(t))$.
After the warping and demodulation, we have
\[
\tilde{x}^{[1]}(t)=A^{[1]}_1(t)\cos(2\pi\theta^{[1]}_1(t))+\sum_{l=2}^\infty A^{[1]}_l(t) \cos (2\pi\theta^{[1]}_l(t) )\,,
\]
where $A^{[1]}_l(t):=\frac{B_l((\tilde{\phi}^{[1]}_1)^{-1}(t))}{\tilde{B}^{[1]}_{1}((\tilde{\phi}^{[1]}_1)^{-1}(t))}$ and $\theta^{[1]}_l(t):=\phi_l((\tilde{\phi}^{[1]}_1)^{-1}(t))$ and the deviation from the expectation that $A^{[1]}_1(t)=1$ and $\theta^{[1]}_1(t)= t$ is quantified by Theorem \ref{Theorem_warping}.
To proceed, we make an extra assumption:
\begin{enumerate}
\item[(T5)] Assume there exist $\tau_1>H$ and $\tau_2,\tau_3,\tau_4>0$ such that $\phi_\ell'(t)<\tau_1$, $B_\ell(t)>\tau_2$, $\sum_{l = 1}^{\infty} c(l) = \tau_3$ and $\sum_{l = 1}^{\infty} l B_l(t)\leq \tau_4$ for all $t\in \mathbb{R}$.
\end{enumerate}
Under this extra (T5) assumption, all functions $C_i(t)$ in Theorems \ref{Theorem Ridge regularity of STFT} and \ref{Theorem_warping} and Lemma \ref{lemma: main taylor expansion} are uniformly bounded.
As a result, we claim that $\tilde{x}^{[1]}(t)$ fulfills (T1)-(T3) with a different $\epsilon$. Indeed, it is easy to check that
$A_l\in C^1(\mathbb{R})\cap L^\infty(\mathbb{R})$ for $l=1,2,\ldots$ and for each time $t\in \mathbb{R}$. $\textup{gcd}\{l|\,A_l(t)> 0\}=1$, $A_l(t) \leq c(l)A_{\ell}(t)$, for all  $l=1,2,\ldots$ and $t\in \mathbb{R}$ for an $\ell^1$ sequence $ \{c(l)\}_{l = 1}^{\infty}$, and $\sum_{l=N+1}^\infty A_l(t)\leq \epsilon \sqrt{\sum_{l=1}^\infty A_l^2(t)}$ and $\sum_{l=N+1}^\infty lA_l(t)\leq D\sqrt{\sum_{l=1}^\infty A_l^2(t)}$ for some constant $D>0$ hold automatically.
For the regularity, since $\tilde{\phi}^{[1]}_1\in C^2(\mathbb{R})$, we have $\theta_l\in C^2(\mathbb{R})$.
The last part is quantifying the control of
\begin{align}
|\theta'_l(t) - l\theta'_1(t)|\,,\ \ |A'_{l}(t)| \ \ \mbox{and} \ \ |\theta''_{l}(t)|
\end{align}
for all $l = 1,2,\dots$. The first bound comes from a direct calculation that
\begin{align*}
|\theta'_l(t) - l\theta'_1(t)|&\,\leq |(\phi'_l(\tilde{\phi}_1^{-1}(t)) - l\theta'_1(\tilde{\phi}_1^{-1}(t)))(\tilde{\phi}_1^{-1})'(t)|\\
&\,\leq \epsilon \phi_1'(\tilde{\phi}_1^{-1}(t))(\tilde{\phi}_1^{-1})'(t)=\epsilon [\phi_1(\tilde{\phi}_1^{-1})]'(t)=\epsilon \theta_1'(t)\,.
\end{align*}
The second bound is
\begin{align*}
|A'_{l}(t)|&\,= \left|\frac{B_l'(\tilde{\phi}_1^{-1}(t))[\tilde{\phi}_1^{-1}]'(t)\tilde{B}_1(\tilde{\phi}_1^{-1}(t))}{[\tilde{B}_1(\tilde{\phi}_1^{-1}(t))]^2}\right.\\
&\left.-\frac{\tilde{B}_1'(\tilde{\phi}_1^{-1}(t))[\tilde{\phi}_1^{-1}]'(t){B}_l(\tilde{\phi}_1^{-1}(t)) }{[\tilde{B}_1(\tilde{\phi}_1^{-1}(t))]^2}\right|\\
&\,\leq \left[\frac{2}{\tau_2}+\frac{16\|C_{13}\|_\infty}{|\hat{h}(0)|^2H\tau_2}\right]\epsilon c(l)\theta_1'(t)\,.
\end{align*}
The third bound also comes from a direct calculation
\begin{equation}\label{bound of theta''_l}
\begin{aligned}
|\theta_l''(t)|&\,=\phi_l''(\tilde{\phi}_1^{-1}(t))[(\tilde{\phi}_1^{-1})'(t)]^2\\
&\quad +\phi_l'(\tilde{\phi}_1^{-1}(t))(\tilde{\phi}_1^{-1})''(t)\\
&\,\leq \epsilon l\theta_1'(t)(\tilde{\phi}_1^{-1})'(t)+(l\phi_1'(\tilde{\phi}_1^{-1}(t))\\
&\quad+\epsilon\phi_1'(t))\frac{\tilde{\phi}_1''(\tilde{\phi}_1^{-1}(t))(\tilde{\phi}_1^{-1})'(t)}{|\tilde{\phi}_1'(\tilde{\phi}_1^{-1}(t))|^2}\\
&\,\leq \left[ \frac{1}{H}+\frac{2\|C_7\|_\infty}{H^2}\right]\epsilon l\theta_1'(t) \,.
\end{aligned}
\end{equation}
Moreover, we know from \eqref{bound of theta''_l} that $\sup_{l;\,B_l\neq 0}\|\theta''_l\|_\infty=O(\epsilon)$.
Thus, $\tilde{x}^{[1]}(t)$ fulfills (T1)-(T3) with $\epsilon^{[1]}:=C^{[1]}\epsilon$, where
\[
C^{[1]}=\max\left\{\left[\frac{2}{\tau_2}+\frac{16\|C_{13}\|_\infty}{|\hat{h}(0)|^2H\tau_2}\right],\left[ \frac{1}{H}+\frac{2\|C_7\|_\infty}{H^2}\right],1\right\}\,.
\]
Therefore, Theorems \ref{Theorem Ridge regularity of STFT} and \ref{Theorem_warping} and Lemma \ref{lemma: main taylor expansion} can be directly applied to $\tilde{x}^{[1]}(t)$, and hence we can again apply the same warping and demodulation process to $\tilde{x}^{[1]}(t)$ with guarantees.
We shall mention that the signal property change is most dramatic upon the first iteration. Indeed, after the first iteration, the AM of the dominant harmonic becomes close to $1$, and the phase becomes close to $t$ in each iteration, which are ``very different'' from  the dominant harmonic of $x$.
Note that (T1)-(T5) hold with different constants for $\tilde{x}^{[2]}(t)$ with $\epsilon^{[2]}$. Among all constants, $\tau_3=\sum_{l = 1}^{\infty} c(l)$ is fixed, but $H$, $\tau_1$ and $\tau_2$ become close to $1$, since $(\theta^{[1]}_1)'(t)\approx 1$ and $A^{[1]}_1(t)\approx 1$ and $\tau_4$ is still bounded. Note that the kernel, and hence the absolute moments, used will also be different since it depends on $H$.
So, Theorems \ref{Theorem Ridge regularity of STFT} and \ref{Theorem_warping} and Lemma \ref{lemma: main taylor expansion} can be directly applied to $\tilde{x}^{[2]}(t)$, and hence $\tilde{x}^{[k]}(t)$, $k=3,4,\ldots$.

\section{Discussion and conclusion}\label{section:discussion}

Motivated by challenges from analyzing real world signals, we propose a new model, \texttt{WSFv3}, to capture sudden WSF changes in a nonstationary oscillatory signal. The sudden WSF change is modeled as a change point in the amplitude or phase of some non-dominant harmonics. We propose an iterative warping and clustering algorithm to estimate all different WSFs and hence change points of WSFs from the signal.

The main features of the proposed algorithm could be observed in its application to ABP and ECG. As we can see from Fig. \ref{fig:ABP}, one step of warping and demodulation is not sufficient for the ABP signal. This is due to the inevitable error in the phase estimation, and an iterative strategy proved to be useful to alleviate this problem. On the other hand, the synchronization step by GCL might be needed, especially when high harmonic IFs are not exact multiples of the fundamental component IF. This step improves the clustering performance, and leads to clusters that are more concentrated and reflect better the actual WSFs present in the signal.
The analyzed ECG signal shows us an example where warping with the fundamental component might not work well. Although we observe a component of constant warped frequency equal to 1 if we use the fundamental component to do the warping process, the high harmonics possess fluctuating IFs. Although these variations are bounded, they seem to be the cause of the difficulties of isolating WSFs with physiological meanings. The strategy of warping with the dominant harmonic (the second one in this ECG case) proved to lead to a satisfactory result, which is due to the stronger SNR.

Note that our algorithm is related to, but different from, the traditional signal segmentation approaches, where the algorithms are mainly based on the identified fiducial points (or landmarks). In general, landmark search might be difficult, especially in the presence of noise, and it depends on the signal characteristics.
Our proposed approach is however via warping the signal so that the landmark search is no longer needed. Thus it is more universal.
We shall also mention that when the signal is segmented in the traditional way by taking the landmarks into account, the segment length usually varies from one to another due to the time-varying frequency. Thus, if the purpose is extracting the dynamics hidden in the time-varying WSFs, the mission might be challenged since we cannot directly compare segments of different lengths with the traditional Euclidean metric. One solution is truncating the segments so that all truncated segments have the same length \cite{YTLin2019wave,wang2020novel}. While this approach works, the captured information depends not only on time-varying  WSFs but also on IF. With our warping approach, the impact of IF is gone. Its application to extract dynamics hidden in the WSFs will be explored in our future work.
To our knowledge, our algorithm is the first change point detection algorithm on the level of WSFs.

We shall clarify a potential confusing point. At the first glance, the signals of phase shifting key (PSK) and frequency shifting key (FSK) in communication systems might be modeled by \texttt{WSFv3}. For both cases, the signal is sinusoidal with constant amplitude, but the phase is either non-continuous (in PSK) or non-smooth (in FSK). This fact is contrary to condition T2, which states that the phase of the dominant harmonic must be twice differentiable. In other words, if we view a signal of PSK or FSK as an oscillatory signal with the cosine function as its WSF, its dominant harmonic is the signal itself, and the non-continuous or non-smooth phase assumption of PSK or FSK does not satisfy the model. To detect the phase change point of this kind of signal is actually a different challenging problem, and we refer readers with interest to \cite{zhou2020frequency} for a recent development in that direction.

This study has limitations. As the first algorithm in the field, a statistical inference framework needs to be further developed. The dynamic of WSFs itself contains rich biomedical information \cite{YTLin2019wave,wang2020novel}, and how to incorporate it into our current analysis framework is not studied in this paper. We assume that there is one oscillatory component in the signal and focus on detecting change points of WSF. When there are more than one oscillatory components, like the trans-abdominal maternal ECG \cite{su2017extract} or PPG \cite{li2019non,ShelleyAlianWu2022,maity2022ppgmotion}, we need to decompose the signal into separate components before applying our algorithm. These interesting topics are however out of the scope of this paper, and will be reported in the future work.

\section*{Acknowledgement}
The authors would like to thank Professor James Nolen for valuable discussions on the topic and the anonymous reviewers for their constructive comments.

%\bibliographystyle{ieeetran}
%\bibliography{referencias_ieee}%{}

\begin{thebibliography}{10}
\providecommand{\url}[1]{#1}
\csname url@samestyle\endcsname
\providecommand{\newblock}{\relax}
\providecommand{\bibinfo}[2]{#2}
\providecommand{\BIBentrySTDinterwordspacing}{\spaceskip=0pt\relax}
\providecommand{\BIBentryALTinterwordstretchfactor}{4}
\providecommand{\BIBentryALTinterwordspacing}{\spaceskip=\fontdimen2\font plus
\BIBentryALTinterwordstretchfactor\fontdimen3\font minus
  \fontdimen4\font\relax}
\providecommand{\BIBforeignlanguage}[2]{{%
\expandafter\ifx\csname l@#1\endcsname\relax
\typeout{** WARNING: IEEEtran.bst: No hyphenation pattern has been}%
\typeout{** loaded for the language `#1'. Using the pattern for}%
\typeout{** the default language instead.}%
\else
\language=\csname l@#1\endcsname
\fi
#2}}
\providecommand{\BIBdecl}{\relax}
\BIBdecl




\bibitem{HTWu2013}
H.-T. Wu, ``Instantaneous frequency and wave shape functions (i),'' \emph{Appl.
  Comput. Harmon. Anal.}, vol.~35, no.~2, pp. 181--199, 2013.

\bibitem{tavallali2014extraction}
P.~Tavallali, T.~Y. Hou, and Z.~Shi, ``Extraction of intrawave signals using
  the sparse time-frequency representation method,'' \emph{Multiscale Modeling
  \& Simulation}, vol.~12, no.~4, pp. 1458--1493, 2014.

\bibitem{hou2016extracting}
T.~Y. Hou and Z.~Shi, ``Extracting a shape function for a signal with
  intra-wave frequency modulation,'' \emph{Philosophical Transactions of the
  Royal Society A: Mathematical, Physical and Engineering Sciences}, vol. 374,
  no. 2065, p. 20150194, 2016.

\bibitem{Xu2018recursive}
J.~Xu, H.~Yang, and I.~Daubechies, ``Recursive diffeomorphism-based regression
  for shape functions,'' \emph{SIAM Journal on Mathematical Analysis}, vol.~50,
  no.~1, pp. 5--32, 2018.

\bibitem{Yang2019multiresolution}
H.~Yang, ``Multiresolution mode decomposition for adaptive time series
  analysis,'' \emph{Appl. Comput. Harmon. Anal.}, 2019.


\bibitem{li2019non}
F.~Li, W.~Song, C.~Li, and A.~Yang. ``Non-harmonic analysis based instantaneous heart rate estimation from photoplethysmography.'' In \emph{ICASSP 2019-2019 IEEE International Conference on Acoustics, Speech and Signal Processing (ICASSP)}, pp. 1279-1283. 2019.


\bibitem{lin2016waveshape}
C.-Y. Lin, L.~Su, and H.-T. Wu, ``Wave-shape function analysis--when cepstrum
  meets time-frequency analysis,'' \emph{J. Fourier Anal. Appl.}, vol.~24,
  no.~2, pp. 451--505, 2018.

\bibitem{YTLin2019wave}
Y.-T. Lin, J.~Malik, and H.-T. Wu, ``Wave-shape oscillatory model for
  nonstationary periodic time series analysis,'' \emph{Foundations of Data
  Science}, vol.~3, no.~2, pp. 99--131, 2021.

\bibitem{Berger2003}
R.~Berger, ``QT variability,'' \emph{Journal of electrocardiology} vol.~36, pp. 83-87, 2003.

\bibitem{Wu2016modeling}
H.-T. Wu, H.-K. Wu, C.-L. Wang, et. al., ``Modeling the pulse signal by wave-shape function and analyzing by
  synchrosqueezing transform,'' \emph{PloS one}, vol.~11, no.~6, 2016.

\bibitem{ruiz2022wave}
J.~Ruiz and M.~A. Colominas, ``Wave-shape function model order estimation by
  trigonometric regression,'' \emph{Signal Processing}, vol. 197, p. 108543,
  2022.

\bibitem{colominas2021decomposing}
M.~A. Colominas and H.-T. Wu, ``Decomposing non-stationary signals with
  time-varying wave-shape functions,'' \emph{IEEE Trans. Signal Process.},
  vol.~69, pp. 5094--5104, 2021.

\bibitem{sethares1999}
W.~A. Sethares and T.~W. Staley, ``Periodicity transforms,'' \emph{IEEE Trans.
  Signal Process.}, vol.~47, no.~11, pp. 2953--2964, 1999.

\bibitem{PPV3}
S.~V. Tenneti and P.~Vaidyanathan, ``Nested periodic matrices and dictionaries:
  New signal representations for period estimation,'' \emph{IEEE Trans. Signal
  Process.}, vol.~63, no.~14, pp. 3736--3750, 2015.

\bibitem{RDS2000}
Z.~Chen and H.-T. Wu, ``When ramanujan meets time-frequency analysis in
  complicated time series analysis,'' \emph{Pure and Applied Analysis}, 2022.

\bibitem{Pahlevan_Tavallali_Rinderknecht_Petrasek_Matthews_Hou_Gharib2014}
N.~M. Pahlevan, P.~Tavallali, D.~G. Rinderknecht, D.~Petrasek, R.~V. Matthews,
  T.~Y. Hou, and M.~Gharib, ``Intrinsic frequency for a systems approach to
  haemodynamic waveform analysis with clinical applications,'' \emph{Journal of
  the Royal Society, Interface the Royal Society}, vol.~11, no.~98, p.
  20140617, 2014.

\bibitem{petrasek2015intrinsic}
D.~Petrasek, N.~M. Pahlevan, P.~Tavallali, D.~G. Rinderknecht, and M.~Gharib,
  ``Intrinsic frequency and the single wave biopsy: Implications for insulin
  resistance,'' \emph{Journal of diabetes science and technology}, vol.~9,
  no.~6, pp. 1246--1252, 2015.

\bibitem{pahlevan2017noninvasive}
N.~M. Pahlevan, D.~G. Rinderknecht, P.~Tavallali, et. al., ``Noninvasive iphone measurement
  of left ventricular ejection fraction using intrinsic frequency
  methodology,'' \emph{Critical care medicine}, vol.~45, no.~7, pp. 1115--1120,
  2017.

\bibitem{su2017extract}
L.~Su and H.-T. Wu, ``Extract fetal {ECG} from single-lead abdominal {ECG} by
  de-shape short time fourier transform and nonlocal median,'' \emph{Frontiers
  in Applied Mathematics and Statistics}, vol.~3, p.~2, 2017.

\bibitem{ShelleyAlianWu2022}
A.~Alian, K.~Shelley, and H.-T. Wu, ``Amplitude and phase measurements from
  harmonic analysis may lead to new physiologic insights: lower body negative
  pressure photoplethysmographic waveforms as an example,'' \emph{Journal of
  Clinical Monitoring and Computing}, 2022.

\bibitem{wu2022application}
H.-T. Wu, J.~Harezlak, ``Application of de-shape
  synchrosqueezing to estimate gait cadence from a single-sensor accelerometer
  placed in different body locations,'' \emph{arXiv preprint arXiv:2203.10563},
  2022.

\bibitem{huang2021exploration}
W.~Huang, M.~Bulut, R.~van Lieshout, K.~Dellimore, ``Exploration of using a pressure sensitive mat for respiration rate and heart rate estimation''. \emph{2021 43rd Annual International Conference of the IEEE Engineering in Medicine \& Biology Society (EMBC)} pp. 298-301, 2021

\bibitem{maity2022ppgmotion}
A.~K. Maity, A.~Veeraraghavan, and A.~Sabharwal, ``Ppgmotion: Model-based
  detection of motion artifacts in photoplethysmography signals,''
  \emph{Biomedical Signal Processing and Control}, vol.~75, p. 103632, 2022.

\bibitem{zhou2020frequency}
H.-T. Wu, and Z.~Zhou, ``Frequency detection and change point
  estimation for time series of complex oscillation,'' \emph{arXiv preprint
  arXiv:2005.01899}, 2020.

\bibitem{Daubechies2011synchro}
I.~Daubechies, J.~Lu, and H.-T. Wu, ``Synchrosqueezed wavelet transforms: an
  empirical mode decomposition-like tool,'' \emph{Appl. Comput. Harmon. Anal.},
  vol.~30, no.~2, pp. 243--261, 2011.

\bibitem{chen2014non}
Y.-C. Chen, M.-Y. Cheng, and H.-T. Wu, ``Non-parametric and adaptive modelling
  of dynamic periodicity and trend with heteroscedastic and dependent errors,''
  \emph{Journal of the Royal Statistical Society: Series B: Statistical
  Methodology}, pp. 651--682, 2014.

\bibitem{alian2022reconsider}
A.~Alian, Y.-L. Lo, K.~Shelley, and H.-T. Wu, ``Reconsider phase reconstruction
  in signals with dynamic periodicity from the modern signal processing
  perspective,'' \emph{Foundations of Data Science}, 2022.

\bibitem{steinerberger2021fundamental}
S.~Steinerberger and H.-T. Wu, ``Fundamental component enhancement via adaptive
  nonlinear activation functions,'' \emph{Appl. Comput. Harmon. Anal.},
  2022.

\bibitem{Daubechies1996}
I.~Daubechies and S.~Maes, ``A nonlinear squeezing of the continuous wavelet
  transform based on auditory nerve models,'' \emph{Wavelets in Medicine and
  Biology}, pp. 527--546, 1996.

\bibitem{Oberlin2015}
T.~Oberlin, S.~Meignen, and V.~Perrier, ``Second-order synchrosqueezing
  transform or invertible reassignment? {T}owards ideal time-frequency
  representations,'' \emph{IEEE Trans. Signal Process.}, vol.~63, no.~5, pp.
  1335--1344, 2015.

\bibitem{Pham2017high}
D.-H. Pham and S.~Meignen, ``High-order synchrosqueezing transform for
  multicomponent signals analysis - with an application to gravitational-wave
  signal,'' \emph{IEEE Trans. Signal Process.}, vol.~65, no.~12, pp.
  3168--3178, June 2017.

\bibitem{delprat1992asymptotic}
N.~Delprat, B.~Escudi{\'e}, P.~Guillemain, R.~Kronland-Martinet,
  P.~Tchamitchian, and B.~Torresani, ``Asymptotic wavelet and gabor analysis:
  Extraction of instantaneous frequencies,'' \emph{IEEE Trans. Inf. Theory},
  vol.~38, no.~2, pp. 644--664, 1992.

\bibitem{Nahon2000Thesis}
M.~Nahon, ``Phase {E}valuation and {S}egmentation,'' Ph.D. dissertation, Yale
  University, New Haven, 2000.

\bibitem{Flandrin1998}
P.~Flandrin, ``Time-frequency/time-scale analysis''. Academic press, 1998.

\bibitem{EMD1998}
N.E.~Huang, S.~Zheng, S.R.~Long, et. al. ``The empirical mode decomposition and the Hilbert spectrum for nonlinear and non-stationary time series analysis,'' Proc. R. Soc. A, vol. 454, no. 1971, pp. 903-995, 1998.


\bibitem{Harmouche2017}
J.~Harmouche, F.~Dominique, A.~Francois, et. al., ``The sliding singular spectrum analysis: A data-driven nonstationary signal decomposition tool,'' \emph{IEEE Trans. Signal Process.}, vol. 66, no. 1 pp. 251--263, 2017.

\bibitem{Gilles2013}
J.~Gilles, ``Empirical wavelet transform,'' \emph{IEEE transactions on signal processing}, vol. 61, no. 16 pp. 3999--4010, 2013.

\bibitem{LinYang2009}
L.~Lin, W.~Yang, H.~Zhou. ``Iterative filtering as an alternative algorithm for empirical mode decomposition,'' \emph{Advances in Adaptive Data Analysis}, vol. 1, no. 04 pp. 543-560, 2009.

\bibitem{Brevdo2011}
E.~Brevdo, N.~S. Fuckar, G.~Thakur, and H.-T. Wu, ``The synchrosqueezing
  algorithm: a robust analysis tool for signals with time-varying spectrum,''
  \emph{Signal Processing}, vol.~93, no.~5, pp. 1079-1094, 2013.

\bibitem{Yang2018}
H.~Yang, ``Statistical analysis of synchrosqueezed transforms,'' \emph{Appl.
  Comput. Harmon. Anal.}, vol.~45, no.~3, pp. 526 -- 550, 2018.

\bibitem{sourisseau2022inference}
M.~Sourisseau, H.-T. Wu, and Z.~Zhou, ``Inference of synchrosqueezing
  transform--toward a unified statistical analysis of nonlinear-type
  time-frequency analysis,'' \emph{Annals of Statistics}, 2022.

\bibitem{Colominas2020}
M.~A. Colominas, S.~Meignen, and D.~H. Pham, ``Fully adaptive ridge detection
  based on {STFT} phase information,'' \emph{IEEE Signal Processing Letters},
  2020.

\bibitem{Pham2019Eusipco}
D.-H. Pham and S.~Meignen, ``Demodulation algorithm based on higher order
  synchrosqueezing,'' in \emph{2019 27th European Signal Processing Conference  (EUSIPCO)}, 2019.

\bibitem{Calinski1974}
T.~Cali{\'n}ski and J.~Harabasz, ``A dendrite method for cluster analysis,''
  \emph{Communications in Statistics-theory and Methods}, vol.~3, no.~1, pp.
  1--27, 1974.

\bibitem{kmeans}
S.~Lloyd, ``Least squares quantization in pcm,'' \emph{IEEE Trans. Inf.
  Theory}, vol.~28, no.~2, pp. 129--137, 1982.

\bibitem{Brillinger1987}
D.~R. Brillinger, ``Fitting cosines: some procedures and some physical
  examples,'' in \emph{Advances in the Statistical Sciences: Applied
  Probability, Stochastic Processes, and Sampling Theory}. Springer, 1987, pp. 75--100.

\bibitem{Kavalieris1994}
L.~Kavalieris and E.~Hannan, ``Determining the number of terms in a
  trigonometric regression,'' \emph{Journal of time series analysis}, vol.~15,
  no.~6, pp. 613--625, 1994.

\bibitem{Alter2000}
O.~Alter, P.~O. Brown, and D.~Botstein, ``Singular value decomposition for
  genome-wide expression data processing and modeling,'' \emph{Proc. Natl.
  Acad. Sci.}, vol.~97, no.~18, pp. 10\,101--10\,106, 2000.

\bibitem{Boashash2013}
B.~Boashash, G.~Azemi, and J.~M. O'Toole, ``Time-frequency processing of
  nonstationary signals: Advanced {TFD} design to aid diagnosis with highlights
  from medical applications,'' \emph{IEEE Signal Processing Magazine}, vol.~30,
  no.~6, pp. 108--119, 2013.

\bibitem{Bishop2006}
C.~M. Bishop, \emph{Pattern recognition and machine learning}, Springer, 2006.

\bibitem{bandeira2013cheeger}
A.~S. Bandeira, A.~Singer, and D.~A. Spielman, ``A cheeger inequality for the
  graph connection laplacian,'' \emph{SIAM Journal on Matrix Analysis and
  Applications}, vol.~34, no.~4, pp. 1611--1630, 2013.

\bibitem{marchesini2016alternating}
S.~Marchesini, Y.-C. Tu, and H.-T. Wu, ``Alternating projection, ptychographic
  imaging and phase synchronization,'' \emph{Appl. Comput. Harmon. Anal.},
  vol.~41, no.~3, pp. 815--851, 2016.

\bibitem{Ward1963}
J.~H. Ward~Jr, ``Hierarchical grouping to optimize an objective function,''
  \emph{J Am Stat Assoc}, vol.~58, no. 301, pp. 236--244, 1963.

\bibitem{Rokach2005}
L.~Rokach and O.~Maimon, ``Clustering methods,'' in \emph{Data mining and knowledge discovery handbook}.
  Springer, 2005, pp. 321--352.

\bibitem{Saarbruecken}
B.~Woldert-Jokisz. (2007) Saarbruecken voice database.

\bibitem{Moody2010physionet}
G.~B. Moody, ``The physionet/computing in cardiology challenge 2010: Mind the
  gap,'' in \emph{2010 Computing in Cardiology}.\hskip 1em plus 0.5em minus
  0.4em\relax IEEE, 2010, pp. 305--308.

\bibitem{Physionet}
A.~L. Goldberger, L.~A. Amaral, L.~Glass, et. al., ``Physiobank,
  physiotoolkit, and physionet: components of a new research resource for
  complex physiologic signals,'' \emph{Circulation}, vol. 101, no.~23, pp.
  e215--e220, 2000.

\bibitem{MIT-BIH}
G.~B. Moody and R.~G. Mark, ``The impact of the mit-bih arrhythmia database,''
  \emph{IEEE Engineering in Medicine and Biology Magazine}, vol.~20, no.~3, pp.
  45--50, 2001.

\bibitem{strkaczkiewicz2016automatic}
M.~Straczkiewicz, J.~Urbanek, W.~Fadel, C.~Crainiceanu, and J.~Harezlak,
  ``Automatic car driving detection using raw accelerometry data,''
  \emph{Physiological Measurement}, vol.~37, no.~10, p. 1757, 2016.

\bibitem{fadel2019differentiating}
W.~F. Fadel, J.~K. Urbanek, S.~R. Albertson, et. al., ``Differentiating between walking and stair climbing using raw
  accelerometry data,'' \emph{Statistics in Biosciences}, vol.~11, no.~2, pp.
  334--354, 2019.

\bibitem{wang2020novel}
S.-C. Wang, H.-T. Wu, P.-H. Huang, C.-H. Chang, C.-K. Ting, and Y.-T. Lin,
  ``Novel imaging revealing inner dynamics for cardiovascular waveform analysis
  via unsupervised manifold learning,'' \emph{Anesthesia \& Analgesia}, vol.
  130, no.~5, pp. 1244--1254, 2020.

\end{thebibliography}

\end{document}